\newcommand{\abox}{\mathcal{A}}
\newcommand{\ALC}{\mathcal{ALC}}
\newcommand{\ALCI}{\mathcal{ALCI}}
\newcommand{\SHIQ}{\mathcal{SHIQ}}
\newcommand{\SHQ}{\mathcal{SHQ}}
\newcommand{\SH}{\ensuremath{\mathcal{SH}}\xspace}
\newcommand{\Hmc}{\ensuremath{\mathcal{H}}\xspace}
\newcommand{\Tmc}{\ensuremath{\mathcal{T}}\xspace}
\newcommand{\Amc}{\ensuremath{\mathcal{A}}\xspace}
\newcommand{\Imc}{\ensuremath{\mathcal{I}}\xspace}
\newcommand{\Mmc}{\ensuremath{\mathcal{M}}\xspace}
\newcommand{\indiv}{\NI}
\newcommand{\interp}{\mathcal{I}}
\newcommand{\Jmc}{\mathcal{J}}
\newcommand{\Kmc}{\mathcal{K}}
\newcommand{\Mos}{\mathfrak{M}}
\newcommand{\nat}{\mathbb{N}}
\newcommand{\NV}{\mathsf{N}_{\mathsf{V}}}
\newcommand{\NC}{\mathsf{N}_{\mathsf{C}}}
\newcommand{\NI}{\mathsf{N}_{\mathsf{I}}}
\newcommand{\NR}{\mathsf{N}_{\mathsf{R}}}
\newcommand{\NRt}{\NR^{\mathsf{t}}}
\newcommand{\NRn}{\NR^{\mathsf{nt}}}
\newcommand{\pair}[2]{\langle #1, #2\rangle}
\newcommand{\qbf}{\mathbf{q}}
\newcommand{\rch}{\mathsf{VC}}
\newcommand{\rchind}{M}
\newcommand{\Smc}{\mathcal{S}}
\newcommand{\tbox}{\mathcal{T}}
\newcommand{\tp}{\mathrm{tp}}
\newcommand{\var}{\mathrm{var}}
\newcommand{\ExpTime}{\textnormal{\sc ExpTime}\xspace}
\newcommand{\NExpTime}{\textnormal{\sc NExpTime}\xspace}
\newcommand{\coNExpTime}{\textnormal{\sc coNExpTime}\xspace}
\newcommand{\TwoExpTime}{\textnormal{\sc 2ExpTime}\xspace}
\newcommand{\red}[1]{\textcolor{red}{#1}}
\renewcommand{\red}[1]{#1}
\newcommand{\new}[1]{\textcolor{blue}{#1}}
\renewcommand{\new}[1]{#1}
\newcommand{\desc}[1]{t^{\interp}(#1)}
\newtheorem{theorem}{Theorem}
\newtheorem{claim}{Claim}
\newtheorem{corollary}[theorem]{Corollary}
\newtheorem{proposition}[theorem]{Proposition}
\newtheorem{lemma}[theorem]{Lemma}
\newtheorem{definition}[theorem]{Definition}
\title{Revisiting Conjunctive Query Entailment for $\Smc$}
\author {
    Yazm{\'{\i}}n Ib{\'{a}}{\~{n}}ez{-}Garc{\'{\i}}a\textsuperscript{\rm 1},
    Jean Christoph Jung\textsuperscript{\rm 2},
    Vincent Michielini\textsuperscript{\rm 3},
    Filip Murlak\textsuperscript{\rm 3}
}
\begin{document}

\maketitle

\begin{abstract}

We clarify the complexity of answering unions of conjunctive queries over knowledge bases formulated in the description logic $\Smc$, the extension of $\ALC$ with transitive roles. Contrary to what existing partial results suggested, we show that the problem is in fact \TwoExpTime-complete; hardness already holds in the presence of two transitive roles and for Boolean conjunctive queries. We complement this result by showing that the problem remains in \coNExpTime when the input query is rooted or is restricted to use at most one transitive role (but may use arbitrarily many non-transitive roles).
\end{abstract}

\section{Introduction}

In this paper, we aim to complete the complexity landscape for the problem of query answering over knowledge bases  expressed in expressive description logics (DLs), that is, logics extending the basic DL $\ALC$. As is common in such endeavor, we focus on the associated decision problem known as \emph{query entailment}: given a DL knowledge base (KB) consisting of an ABox and a TBox, a query, and a tuple of individuals, the goal is to determine whether the query returns this tuple in every model of the given KB. As query languages we consider \emph{unions of conjunctive queries (UCQs)} and fragments thereof. This problem has been heavily studied and is well understood. Existing results suggest a dichotomy: 
\begin{itemize}

\item The problem is \ExpTime-complete for $\ALC$ and its extensions with role hierarchies ($\Hmc$) or qualified number restrictions ($\mathcal Q$)~\cite{Lutz08,OrtizSE08}.

\item The problem is \TwoExpTime-complete for the extension $\ALC_\textit{self}$ of $\ALC$ with the \emph{self}-constructor~\cite{DBLP:journals/jair/BednarczykR23}, and for every extension that allows either inverse roles ($\Imc$) or both 
$\Hmc$ and transitive roles ($\Smc$), as long as it remains inside the DL $\SHIQ$, which combines the extensions $\Smc,\Hmc,\Imc,\mathcal Q$~\cite{EiterLOS09,GlimmHS08}.  
\end{itemize}
In many cases, the restriction to \emph{rooted} queries, that is, connected queries with answer variables, leads to lower complexity. This is, e.g., the case for $\ALC_\textit{self}$~\cite{DBLP:conf/ijcai/Bednarczyk24} and all DLs between $\ALCI$ and $\SHIQ$, where rooted query entailment is \coNExpTime-complete as long as transitive role names are disallowed in queries~\cite{Lutz08}. Without inverses, the complexity is even lower:
entailment of (rooted or not) queries without transitive roles is \ExpTime-complete for all DLs between $\ALC$ and $\SHQ$~\cite{Lutz08}. Working with rooted queries does not necessarily help if transitive roles are allowed in queries: analyzing the \TwoExpTime-hardness proof for query entailment in $\SH$ from~\cite{EiterLOS09} shows that rooted query entailment remains \TwoExpTime-complete.

A notable gap remains for $\Smc$ without any extensions. For $\Smc$ the problem has been proven to be $\coNExpTime$-complete when KBs and queries are allowed to use a single transitive role (and no other roles). The lower bound comes from~\cite{EiterLOS09} and the upper one from~\cite{BienvenuELOS10}. A \TwoExpTime-upper bound for the general case follows from the mentioned result for $\SH$~\cite{EiterLOS09}. In the same paper, \citeauthor{EiterLOS09}{} also claim an 
$\ExpTime$ upper bound for the case where the ABox is tree-shaped. 
These results have been regarded as a strong indication that the whole problem may be $\coNExpTime$-complete, challenging the apparent dichotomy~\cite{BienvenuELOS10}. The complexity for rooted query entailment was open until now, but a \coNExpTime-lower bound follows from~\cite{EiterLOS09}.

The aim of this paper is to revisit the query entailment problem for $\Smc$ and close the mentioned gaps. Our first main result is that, surprisingly, when at least two transitive roles are allowed in the query, the problem is \TwoExpTime-hard already for $\Smc$ (without role hierarchies), and hence \TwoExpTime-complete.
Importantly, our lower bound works with a tree-shaped ABox and thus contradicts the mentioned
\ExpTime upper bound for that case~\cite{EiterLOS09}. Indeed, in their argument, \citeauthor{EiterLOS09}{} compile input conjunctive queries (CQs) to so-called \emph{pseudo-tree queries (PTQs)}, designed to capture the behavior of CQs over tree-like interpretations, which is sufficient due to the tree-like model property of $\Smc$.
\new{The error seems to arise from a subtle mismatch between the formal definition of PTQs used in the compilation process, and their intuitive understanding as \emph{trees of clusters}, on which the later algorithmic treatment relies.} 


Besides the presence of two transitive roles in the query, our \TwoExpTime-hardness proof relies on the availability of non-rooted queries.
We complement our lower bound by showing that both conditions are necessary:  UCQ entailment remains in \coNExpTime if at most one transitive role is allowed in the query or if the query is rooted. 

We develop the two proofs in parallel, using common terminology and data structures whenever possible. 
At the core, we show a \emph{small witness property} in the following sense: the query is not entailed iff there is a small structure witnessing that. Note that this structure cannot be simply a countermodel since UCQ entailment for $\Smc$ is not finitely controllable~\cite{ROSATI2011572}. The \NExpTime-algorithm for \emph{non-}entailment can then just guess a small structure and verify that it is indeed a witness. Our argument is divided into three steps.
\begin{description}

    \item[Step~1.] \label{st:one} Reduce UCQ entailment to a special case with trivial ABoxes and UCQs of special shape.
    
    \item[Step~2.] Reduce the special case of entailment to the existence of structures called \emph{mosaics}, which are collections of interpretations called \emph{tiles} that respect certain compatibility requirements.

    \item[Step~3.]\label{st:last} Show that both the size of all tiles in a mosaic and their number can be bounded exponentially.
  
\end{description}

While the overall strategy is familiar, the steps are subtle. Steps~1--2 for the single transitive role case rely on a careful refinement of \citeauthor{EiterLOS09}{}'s PTQs, \new{which ensures correctness of the algorithmic treatment, but makes the compilation process significantly harder.}
The crux of Step~3 is to show that UCQ entailment \emph{is} finitely controllable as long as the set of used role names consists of a single transitive role name and queries are acyclic. Going beyond existing results \cite{BienvenuELOS10}, we show that the size of the countermodel can be bounded independently from the query. 

The structure of the paper is as follows. We give the necessary preliminaries in Section~\ref{sc:preliminaries}. Section~\ref{sec:lower} contains the proof of the \TwoExpTime-lower bound. Section~\ref{sec:acyclic} focuses on the case of a single transitive role and acyclic queries, needed for Step 3. In Section~\ref{sec:ptqs} we revise the notion of PTQs. In Section~\ref{sec:upper} we implement the three steps and  obtain both upper bounds. We conclude in Section~\ref{sc:conclusions}.

Full proofs for all statements are deferred to the appendix.

\section{Preliminaries}
\label{sc:preliminaries}

\paragraph{TBoxes, ABoxes, and Knowledge Bases.} 

We fix countably infinite sets $\NI$ of \emph{individual names}, $\NC$ of 
\emph{concept names}, and $\NR$ of \emph{role names}, partitioned into non-transitive role names $\NRn$ and transitive role names $\NRt$.
\emph{Concepts} $C$ of the description logic $\Smc$ are defined by the grammar:
\[C: :=A\mid \neg C\mid C_1\sqcup C_2\mid\exists r.~C\,.\]
A \emph{concept inclusion (CI)} is an expression of the form
$C\sqsubseteq D$ for concepts $C,D$. A \emph{TBox} is a
finite set of concept inclusions.
An \emph{ABox} is a finite set of concept assertions $A(a)$ and role assertions $r(a,b)$ for $A\in\NC$, $r\in \NR$, and $a,b\in\NI$. A \emph{knowledge base (KB)} is a pair $\Kmc=\pair{\tbox}{\abox}$ consisting of a TBox $\tbox$ and an ABox $\abox$.
We write $\NC(\tbox)$, $\NI(\abox)$, etc. for the finite sets of concept names, individual names, etc. that occur in a particular TBox $\tbox$ or ABox $\abox$.

\paragraph{Interpretations.} The semantics of concepts, TBoxes, and ABoxes are defined as usual based on \emph{interpretations}. An \emph{interpretation} is a tuple $\interp=\pair{\Delta^\Imc}{\cdot^\Imc}$ where $\Delta^\Imc$ is the \emph{domain} of $\Imc$; and $\cdot^\Imc$ assigns a subset $A^\Imc\subseteq \Delta^\Imc$ of the domain to every concept name $A\in \NC$, a binary relation $r^\Imc\subseteq \Delta^\Imc\times \Delta^\Imc$ to every role name $r\in \NR$, and an element $a^\Imc\in\Delta^\Imc$ to every individual name $a\in\NI$~\cite{DLBook}. The interpretation of complex concepts is standard: 
\begin{align*}
  (\neg C)^\Imc &= \Delta^\Imc \setminus C^\Imc, \quad
  (C\sqcup D)^\Imc  = C^\Imc \cup D^\Imc, \\
  (\exists r.~C) ^\Imc & = \{ b\in \Delta^\Imc \mid \pair{b}{c}\in r^\interp\text{ for some $c\in C^\Imc$}\}.
\end{align*}
An interpretation \Imc is a \emph{model} of a TBox $\tbox$, written $\Imc\models\tbox$, if $t^\Imc$ is a transitive relation, for all transitive role names $t\in\NR^t$, and $C^\Imc\subseteq D^\Imc$ for every concept inclusion~$C\sqsubseteq D$ in $\tbox$. For the semantics of ABoxes, we adopt the \emph{standard name assumption}, that is, $a^\mathcal I=a$ for all $a\in\NI(\Amc)$.
Then, \Imc is a \emph{model} of an ABox $\abox$, written $\Imc\vDash\abox$, if~$a\in A^\interp$ for every assertion $A(a)\in\abox$ and $\pair{a}{b}\in r^{\interp}$ for every assertion $r(a,b)\in\abox$. Finally, \Imc is a model of a KB $\Kmc=\pair{\tbox}{\abox}$ if $\Imc\models\Tmc$ and $\Imc\models\Amc$.


For a domain element $d\in \Delta^\interp$, the \emph{type of $d$ in $\interp$} is defined as $\tp(\interp, d) = \{A\in \NC \mid d\in A^\interp\}$.

The \emph{transitive closure} of an interpretation $\Imc$ is the interpretation $\Imc^+$ that coincides with \Imc except that, for every $t\in\NRt$, $t^{\Imc^+}$ is the transitive closure of $t^\Imc$.

%

A \emph{tree} is a directed acyclic graph in which exactly one node (the \emph{root}) has no incoming edges, and each other node has exactly one incoming edge (originating in the \emph{parent} of the node). With an interpretation $\Imc$ we associate a  directed multigraph $G_\Imc$ in which nodes are the domain elements and edges are obtained by taking the disjoint union of the interpretations of all role names. We call $\Imc$ \emph{tree-shaped} if $G_\Imc$ is a  tree (in particular, it has no parallel edges). 
We call $\Imc$ a \emph{transitive-tree} interpretation if it is the transitive closure of a tree-shaped interpretation $\Imc_0$. The root of $\Imc$ is the root of $\Imc_0$.

\paragraph{Conjunctive Queries.}
Let $\NV$ be a countably infinite set of \emph{variables}.
A \emph{conjunctive query (CQ)} is an expression of the form $q(\bar x)$ where $q$ is a finite set of \emph{atoms} 
of the form
$A(x)$ or $r(x,y)$ where $x,y\in \NV$, $A\in\NC$, and $r\in\NR$, and $\bar x$ is a tuple of variables occurring in  the atoms of $q$. We call $\bar x$ the \emph{answer variables} of $q(\bar x)$. We write $\var(q)$ for the set of all variables occurring in $q$.  
A \emph{union of conjunctive queries (UCQ)} $Q(\bar x)$ is a finite set of CQs with the same answer variables $\bar x$, which we call the answer variables of $Q$. 
A (U)CQ is \emph{Boolean} if its tuple of answer variables is empty, and \emph{unary} if it is a singleton.  We identify a Boolean CQ $q$ with its set of atoms.


A \emph{match} of a CQ $q(\bar x)$ in an interpretation $\interp$ is a function $\delta\colon\var(q)\rightarrow \Delta^{\interp}$ such that $\delta(x)\in A^{\interp}$ for each $A(x)\in q$, and  $\pair{\delta(x)}{\delta(y)}\in r^{\interp}$ for each $r(x,y)\in q$. For a tuple $\bar d$ of domain elements from $\Delta^\Imc$, we write $\pair{\Imc}{\bar d}\models q(\bar x)$ if there is a match $\delta$ of $q$ in \Imc with $\delta(\bar x)=\bar d$.
For UCQs, we write $\pair{\Imc}{\bar d}\models Q(\bar x)$ if $\pair{\Imc}{\bar d}\models q(\bar x)$ for some $q(\bar x)\in Q(\bar x)$. In the Boolean case, we write $\Imc\models q$ if there is a match of $q$ in $\Imc$, and $\Imc\models Q$ if $\Imc\models q$ for some $q\in Q$. 


To any CQ $q$ one can associate a directed multigraph \( G_q \), where nodes represent variables and edges are formed by binary atoms. If atoms share the same pair of variables, this creates parallel edges in \( G_q \). A query \( q(\bar{x}) \) is called \emph{acyclic} or \emph{connected} if \( G_q \) is acyclic or connected, respectively. A query is said to be \emph{rooted} if it is connected and not Boolean. These definitions extend to UCQs: a UCQ \( Q \) is \emph{acyclic}, \emph{connected}, or \emph{rooted} if each CQ in \( Q \) is.
We call \emph{tree query} (TQ) any unary CQ $q(x)$ such that $G_q$ is a directed tree 
(in particular, it has no parallel edges), $x$ being its root. A \emph{union of tree queries} (UTQ) is a UCQ that contains only TQs. 

For a CQ $q(\bar x)$, a variable $z \in \var(q)$ is \emph{initial in $q$} if $z$ has no incoming edges in  $G_q$. For instance, if $q(x)$ is a TQ, then $x$ is initial in $q(x)$.
 
\paragraph{Query Entailment.} Let $\Kmc=\pair{\Tmc}{\Amc}$ be a KB, $Q(\bar x)$ be a UCQ and  $\bar a$ be a tuple of individuals from $\NI(\Amc)$. We say that 
$\Kmc$ \emph{entails} $Q(\bar a)$, written $\Kmc\models Q(\bar a)$, if $\pair{\interp}{\bar a}\vDash Q$ for every model $\Imc$ of $\Kmc$. We study the reasoning problems of UCQ entailment and rooted UCQ entailment. \emph{UCQ entailment} asks, given a KB $\Kmc$ and a Boolean UCQ $Q$, whether $\Kmc\models Q$; and \emph{rooted UCQ entailment}
asks, given a KB $\Kmc$, a rooted UCQ $Q(\bar x)$, and a tuple $\bar a$ from $\NI(\Amc)$, whether $\Kmc\models Q(\bar a)$. We also consider the variant \emph{over a single transitive role $t$} which means that the only role that occurs in $\Tmc$ and $Q$ is $t$.
More general query answering problems, for example, for UCQs with constants, can be reduced to the above using standard methods~\cite{GlimmLHS08}.

Throughout the paper, we assume that TBoxes $\Tmc$ be in \emph{normal form} which means that each concept inclusion in \Tmc has one of the following shapes: 
\[\textstyle\bigsqcap_i A_i\sqsubseteq\bigsqcup_j B_j,\quad  A\sqsubseteq \exists r.B,\quad  A\sqsubseteq\forall r.B,\] 
where $A,A_i\in\NC\cup\{\top\}$, $B\in \NC$, $B_j\in \NC\cup\{\bot\}$; $\bot,\top,\,\sqcap,$ and $\forall$ are part of the syntax, with standard semantics. 
It is routine to show that this is without loss of generality for the considered entailment problems.

When stating complexity bounds, we write $\|\tbox\|$ and $\|Q\|$ for the size of $\Tmc$ and $Q$, respectively, represented as a word over a suitable alphabet. 



\section{Two Transitive Roles}\label{sec:lower}

In this section, we show that  UCQ entailment in $\Smc$ is  \TwoExpTime complete when the query involves at least two transitive roles.  The \TwoExpTime-upper bound follows from several previous works, e.g.,~\cite{GlimmLHS08,CalvaneseEO14,DBLP:journals/ai/Gutierrez-Basulto23, GottlobPT13}. We focus on the hardness of the problem.

\begin{restatable}{theorem}{thmlower} \label{thm:2expComp}
UCQ Entailment in $\Smc$ is \TwoExpTime-complete. It is \TwoExpTime-hard already for CQs and ABoxes of shape $\{A(a)\}$, as long as at least two transitive roles are available.
\end{restatable}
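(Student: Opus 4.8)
The plan is to prove \TwoExpTime-hardness by reducing from the word problem of an alternating Turing machine (ATM) with exponential space, which is a canonical \TwoExpTime-complete problem. Given such a machine $M$ and input $w$, a computation is an alternating tree of configurations, each configuration being a word of length $2^n$ over the tape alphabet. The overall strategy is the standard one for query-entailment lower bounds: the KB (TBox plus trivial, here tree-shaped, ABox) forces every model to contain an encoding of a \emph{candidate} computation tree, and the Boolean query $Q$ is designed to match \emph{exactly} in those models that encode a \emph{rejecting} or otherwise \emph{flawed} computation. Entailment $\Kmc \models Q$ then holds iff there is no accepting computation, so that non-entailment witnesses acceptance (or vice versa, depending on how the polarity is set up). Because query \emph{entailment} corresponds to the query matching in \emph{all} models, the natural reading is that the query detects errors: $\Kmc \not\models Q$ iff some model is error-free, i.e.\ encodes a genuine accepting run.

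The technical heart is the use of the \emph{two transitive roles} to address cells of an exponentially long tape with only polynomially many concept names, and to synchronise a cell with the corresponding cell in the successor configuration. First I would encode each tape cell by a path whose cell address in $\{0,1\}^n$ is written in concept names $A_1,\dots,A_n$ (one bit per position), using the TBox to grow, from each configuration node, a binary-counter-like structure of polynomial depth producing all $2^n$ addresses. The first transitive role, say $t_1$, I would use to link a configuration to the configurations that follow it (threading the computation tree), while the second transitive role $t_2$ I would use to reach, transitively, all cells sharing a fixed address across successive configurations. Transitivity is exactly what lets a short query atom $t_i(x,y)$ "jump" over an exponentially long chain, so that a constant-size CQ can compare a cell in one configuration with the same-address cell in the next one without explicitly traversing the counter. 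This is the mechanism by which two transitive roles buy the extra exponential, matching the known $\SH$ hardness but now \emph{without} role hierarchies, relying instead on a second transitive role to play the role that $\Hmc$ plays in \cite{EiterLOS09}.

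The CQ $Q$ would be a disjunction (encoded as a single CQ via the usual trick, or really a UCQ collapsed appropriately) of small gadgets, each detecting one type of local error: a cell whose content violates the transition relation of $M$ relative to the corresponding window in the predecessor configuration, a configuration that is existentially/universally branching incorrectly, a head-position inconsistency, or a non-accepting halting configuration on some branch. Each gadget is a constant-size pattern using the $t_i$ atoms to reach the relevant cells. The key correctness argument is the usual two-direction equivalence: (i) if $M$ rejects (no accepting run), then in \emph{every} model one of the error patterns must match, so $\Kmc \models Q$; and (ii) if $M$ accepts, one can build a specific model realising a genuine accepting computation in which \emph{no} error pattern matches, witnessing $\Kmc \not\models Q$. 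Care is needed so that the TBox does not over-constrain models (models must be free to realise \emph{any} tape content, with $Q$ alone ruling out bad ones), and so that spurious matches of the query across unrelated parts of the model — a recurring danger with transitive roles, which create many long-range connections — are prevented, typically by "colouring" the structure so that query atoms can only land where intended.

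The main obstacle I anticipate is precisely controlling these transitive roles so that the query detects \emph{exactly} the intended errors and nothing else: because $t_1,t_2$ are transitive, a query atom $t_i(x,y)$ matches any $t_i$-reachable pair, creating abundant opportunities for unintended matches that could make $Q$ entailed even over a correct computation, collapsing the reduction. Managing this — ensuring the address-counter gadget produces robustly comparable addresses, that same-address cells in consecutive configurations are linked by $t_2$ while non-corresponding cells are separated, and that the whole construction works over a \emph{tree-shaped} ABox (which is emphasised as contradicting the erroneous \ExpTime claim of \cite{EiterLOS09}) — is where the delicate design effort lies. The tree-shapedness is important to stress since it is the feature that refutes the prior upper-bound claim, so the construction must genuinely avoid any reliance on cyclic ABox structure.
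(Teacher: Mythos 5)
Your proposal shares the paper's outer frame---reduction from exponential-space ATMs, a TBox forcing candidate computation trees, a query whose matches are exactly the encoding errors, so that $\Kmc\not\models Q$ iff the machine accepts---but the two mechanisms on which the reduction actually hinges are missing, and the one you do specify would fail. You propose that $t_1$ threads the computation tree while $t_2$ "reaches, transitively, all cells sharing a fixed address across successive configurations." No polynomial-size TBox can enforce edges between cells whose $n$-bit addresses (spread over concept names) agree: comparing addresses across different subtrees is precisely the inexpressible task that makes the problem hard. And if the TBox does not force those $t_2$-edges, then in the soundness direction a countermodel simply omits them, the error query never matches, and the reduction collapses. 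The paper uses the two roles quite differently: neither role links same-address cells; instead \emph{every} edge of the encoding is the composition $\alpha = t_1\circ t_2$, and transitivity is exploited only locally, in the zig-zag gadget $q^*$ (Figure~\ref{fig:q-star}), which gives the query a controlled two-way choice of landing site (the $u$- versus $v$-level of a cell gadget). Address equality is then checked \emph{by the query}, bit by bit: rigid $\alpha$-paths whose lengths differ by exactly one configuration-offset force the compared cells into consecutive configurations, and the complement placement of each bit (concept $B_i$ at $u$ if the bit is $0$, at $v$ if it is $1$) makes "same address" equivalent to the positive condition "for every $i$, both occurrences of $B_i$ match at the same level."

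The second gap is positivity. Your query is supposed to detect transition violations, wrong existential/universal branching, and non-accepting leaves---all negative conditions---and you collapse the case distinction "as a single CQ via the usual trick." There is no such trick; making a single connected positive CQ out of this disjunction is exactly the crux (the paper notes that the UCQ case is significantly easier than the CQ case). The paper resolves it by (i) moving transition-correctness, branching, and acceptance entirely into the TBox---each configuration tree stores \emph{both} the previous and the current configuration, so successor-correctness becomes a local TBox constraint, and the query is left with only the copy-check between consecutive configuration trees---and (ii) introducing complement markers $Z_{q,a}$ (true at a cell unless its content is $(q,a)$), so that "the two cells differ" becomes the \emph{conjunctive} condition "conspicuous for all $A\in\mathbf{B}\cup\mathbf{Z}$" (Lemma~\ref{lem:conspicious}), expressible by one CQ obtained by gluing the gadgets $q_A$ at the $G_h/G_p$ variables. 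Without an answer to how equality of exponentially long addresses and inequality of cell contents are expressed by a single positive query that the TBox cannot be forced to frustrate, the proposal does not go through.
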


Our proof closely follows the \TwoExpTime-hardness proof for CQ entailment in $\SH$ provided in~\cite{EiterLOS09}. Since a simple reduction from this problem seems impossible, we give a direct argument. Here we sketch \TwoExpTime-hardness for \emph{unions} of CQs, which is significantly easier than for CQs. 

We reduce from the word problem for exponential-space alternating Turing machines.  We encode runs of such machines as interpretations of the form shown in Figure~\ref{fig:config-trees}, where edges labeled by $\alpha$ represent paths of length 2, consisting of a $t_1$-edge followed by a $t_2$-edge for $t_1, t_2 \in \NRt$. 
%
%
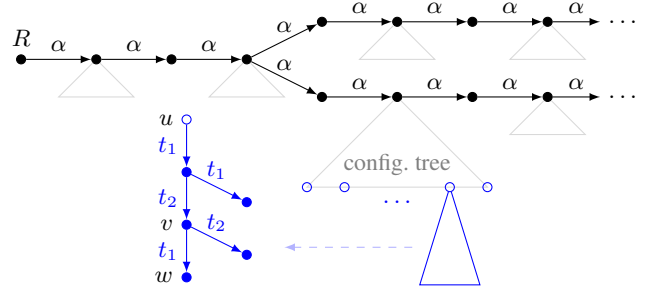
\begin{figure}
\begin{tikzpicture}[>=latex,thin,point/.style={circle,draw=black,fill,minimum size=1.2mm,inner sep=0pt},scale=1.0, bend angle=45]
    
  \footnotesize

  \node[point, label=above:{$R$}] (xr) at (-1,-0.5) {};
  \node[point] (x0) at (0,-0.5) {};
  \node[point] (y0) at (1,-0.5) {};
  \node[point] (x1) at (2,-0.5) {};
  \node[point] (y1) at (3,0) {};
  \node[point] (x2) at (4,0) {};
  \node[point] (y2) at (5,0) {};
  \node[point] (x3) at (6,0) {};
  \node (x4) at (7,0) {$\ldots$};
  \node[point] (x2') at (4,-1) {};
  \node[point] (y2') at (5,-1) {};
  \node[point] (x3') at (6,-1) {};
  \node (x4') at (7,-1) {$\ldots$};  
  \node[point] (y1') at (3,-1) {};
  
  \draw[->] (
  xr)-> node[above]{$\alpha$} (x0);
  \draw[->] (x0)-> node[above]{$\alpha$} (y0);
  \draw[->] (y0)-> node[above]{$\alpha$} (x1);
 \draw[->] (x1)-> node[above]{$\alpha$} (y1);
 \draw[->] (y1)-> node[above]{$\alpha$} (x2);
 \draw[->] (x2)-> node[above]{$\alpha$} (y2);
 \draw[->] (y2)-> node[above]{$\alpha$} (x3);
 \draw[->] (x3)-> node[above]{$\alpha$} (x4);
 \draw[->] (x2')-> node[above]{$\alpha$} (y2');
 \draw[->] (y2')-> node[above]{$\alpha$} (x3');
 \draw[->] (x3')-> node[above]{$\alpha$} (x4');
 \draw[->] (x1)-> node[above]{$\alpha$} (y1');
 \draw[->] (y1')-> node[above]{$\alpha$} (x2');    

 \draw[color=gray!30] (x2') -- (5.2, -2.2) -- (2.8, -2.2) -- (x2');
 \draw[color=gray!30] (x2) -- (4.5, -.5) -- (3.5, -.5) -- (x2);
 \draw[color=gray!30] (x0) -- (-0.5, -1) -- (0.5, -1) -- (x0);
 \draw[color=gray!30] (x1) -- (1.5, -1) -- (2.5, -1) -- (x1);
 \draw[color=gray!30] (x3) -- (5.5, -.5) -- (6.5, -.5) -- (x3);
 \draw[color=gray!30] (x3') -- (5.5, -1.5) -- (6.5, -1.5) -- (x3');
 
 \node at (4,-1.9) {\textcolor{gray}{config.~tree}};

 \node[circle,draw=blue,minimum size=1.2mm,inner sep=0pt,label=left:{$u$}] (np) at (1.2, -1.3) {};
 
 \node[point,color=blue] (np1) at (1.2, -2.0) {};
 \node[point,color=blue,label=left:{$v$}] (np2) at (1.2, -2.7) {};
 \node[point,color=blue,label=left:{$w$}] (np3) at (1.2, -3.4) {};

 \node[point,color=blue] (np2l) at (2, -3.1) {};
 \node[point,color=blue] (np1l) at (2, -2.4) {};

 \draw[->,blue] (np) -- node[left]{$t_1$}(np1); 
 \draw[->,blue] (np1) -- node[left]{$t_2$}(np2); 
 \draw[->,blue] (np2) -- node[left]{$t_1$}(np3);

 \draw[->,blue] (np1) -- node[above]{$t_1$} (np1l);
 \draw[->,blue] (np2) -- node[above]{$t_2$} (np2l);

  \node[circle,draw=blue,minimum
    size=1.2mm,inner sep=0pt] at (5.2,-2.2) {};
  \node[circle,draw=blue,minimum size=1.2mm,inner sep=0pt] (g) at (4.7,-2.2) {};
  \node at (4,-2.4) {$\textcolor{blue}{\ldots}$};
  \node[circle,draw=blue,minimum
    size=1.2mm,inner sep=0pt] at (3.3,-2.2) {};
  \node[circle,draw=blue,minimum
    size=1.2mm,inner sep=0pt] at (2.8,-2.2) {};

   \draw[color=blue] (g) -- (5.1, -3.5) -- (4.3, -3.5) -- (g);

   \draw[->,dashed,color=blue!30] (4.2,-3) -- (2.5,-3);

 \end{tikzpicture} 
\caption{Encoding runs of alternating Turing machines.}
\label{fig:config-trees}
\end{figure}
Each gray triangle represents a configuration: existential ones have one successor, and universal two. Each of these triangles is a full binary tree of height $n$, built from $\alpha$-edges, whose $2^n$ leaves correspond to tape cells. A tape cell is encoded using the blue gadget on the left. The content of the cell is stored in node $w$: the current using concept names $Y_\sigma$ and the previous one with $Z_\sigma$ where $\sigma$ ranges over possible cell contents. Nodes $u$ and $v$ encode the number of the cell in binary using concept names $B_1, \dots, B_n$. Each $B_i$ is  present in exactly one of the nodes $u$ and $v$: in $u$ if the $i$th bit is 0, and in $v$ if it is 1. 

With a bit of effort one can define a TBox (using additional concept names to propagate information) that ensures that the interpretation correctly encodes a run of the machine, provided that previous tape content (along with the state annotation) is correctly copied from the previous configuration. The latter is ensured by the query, which is the union of CQs $q_{\sigma,\tau}$ for $\sigma,\tau$  ranging over pairs of \emph{different} cell contents. Each $q_{\sigma,\tau}$ detects a copying error where $\sigma$ was replaced by $\tau$. It is shown in Figure~\ref{fig:detecting-errors}, with the dashed edges  representing the path on the right,
directed from $x$ to $x'$.

\begin{figure}
\begin{tikzpicture} [>=latex,thin,point/.style={circle,draw=black,fill,minimum
    size=1.2mm,inner sep=0pt},scale=1.0, bend angle=45]\footnotesize

  \node[point, label=right:{$x_1$}] (e0) at (.6,0) {};
  \node[point, label=left:$B_1$,label=right:$y_1$] (e1) at (0,-1) {};
  \node[point, label=left:$B_1$,label=right:$z_1$] (f1) at (1.2,-1.3) {};

  \node[point, label=left:{$Y_\sigma$},label=below:{$y$}] (c2) at (0,-2) {};
  \node[point, label=left:{$Z_\tau$},label=below:{$z$}] (d2) at (1.2,-2.3) {};

  \draw[->] (e0) -> node[left] {$\alpha^{n+1}$}(e1);
  \draw[->] (e0) -> node[right] {$\alpha^{n+3}$} (f1);
  \draw[->,dashed] (e1) -> (c2);
  \draw[->,dashed] (f1) -> (d2);

  \node[point, label=right:{$x_n$}] (g0) at (3.8,0) {};
  \node[point, label=right:$B_n$,label=below:$y_n$] (g1) at (3.2,-1) {};
  \node[point, label=right:$B_n$,label=below:$z_n$] (h1) at (4.4,-1.3) {};

  \node at (2,0) {$\ldots$};

  \draw[->] (g0) -> node[left] {$\alpha^{n+1}$}(g1);
  \draw[->] (g0) ->node[right] {$\alpha^{n+3}$} (h1);
  \draw[->,dashed] (g1) -> (c2);
  \draw[->,dashed] (h1) -> (d2);

  \node[point, label=left:{$x$}] (x0) at (6,0) {};
  \node[point] (y0) at (7,-1) {};
  \node[point] (x1) at (6,-0.8) {};
  \node[point] (y1) at (7,-1.8) {};
  \node[point] (x2) at (6,-1.6) {}; \node[point,label=left:{$x'$}] (y2) at (6,-2.3) {};

  \draw[->] (x0)-> node[right]{$\:t_1$} (y0);
  \draw[->] (x1)-> node[above]{$t_1\;\;$} (y0);
 \draw[->] (x1)-> node[right]{$\:t_2$} (y1);
 \draw[->] (x2)-> node[above]{$t_2\;\;$} (y1);
 \draw[->] (x2)-> node[right]{$t_1$} (y2);
    
\end{tikzpicture}
    
\caption{Detecting copying errors.}
\label{fig:detecting-errors}
\end{figure}
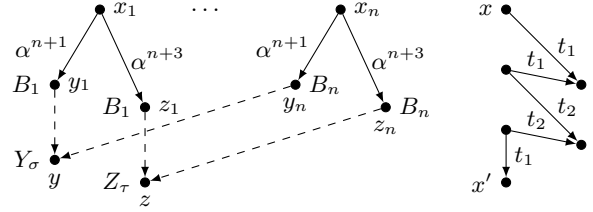

Variables $y$ and $z$ can only match in consecutive configurations, in  the $w$ nodes of some cell gadgets, say $w_\sigma$ and $w_\tau$. Then, $y_i$ can match only in  $u_\sigma$ or $v_\sigma$, and  $z_i$ only in $u_\tau$ or $v_\tau$. Moreover, owing to the rigidity of the paths $\alpha^{n+1}$ and $\alpha^{n+3}$, $y_i$ matches in $u_\sigma$ iff $z_i$ matches in $u_\tau$. Hence, $y$ and $z$ can only match in the same cell of two consecutive configurations. They do if and only if this cell was copied incorrectly.

\section{Acyclic Queries, Single Transitive Role}
\label{sec:acyclic}

We now focus on the special case of \emph{acyclic} queries over a single transitive role, which is the crux of the problem. We prove an exponential countermodel property. 

\begin{theorem}  \label{th:acyclic_transitive}
Consider an acyclic Boolean UCQ $Q$ and a TBox $\Tmc$, {both of which use a single role name $t$, which is transitive. Then, for every transitive-tree interpretation $\Imc$ over $t$} with root $d_0$ such that both $\Imc \models \Tmc$ and $\Imc \not\models Q$, there exists an interpretation $\Jmc$ satisfying the following:
\begin{itemize}

\item $\Delta^\Jmc\subseteq\Delta^\interp$ and $\tp(\Jmc, d)=\tp(\interp, d)$ for each $d\in\Delta^\Jmc$;

\item $\Jmc$ is finite and $|\Delta^{\Jmc}| \leq (|\NC(\tbox)|+1)!$;

\item $d_0\in \Delta^\Jmc$, $\Jmc \models \Tmc$, and $\Jmc \not \models Q$.

\end{itemize}
\end{theorem}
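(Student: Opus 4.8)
The plan is to build $\Jmc$ not as an induced subinterpretation of $\interp$ but as a \emph{finite folding} of it: I keep a subset of the elements of $\interp$ together with their types, but I allow myself to redefine the transitive relation $t^\Jmc$, in particular to add ``back edges'' that close the infinite descending branches of $\interp$ into cycles. This freedom is unavoidable, since $\Tmc$ may force infinite $t$-descent (e.g.\ $A\sqsubseteq\exists t.A$), so no finite induced submodel can exist. The first observation is that, once types are preserved, every concept inclusion $\bigsqcap_i A_i\sqsubseteq\bigsqcup_j B_j$ holds automatically, and every value restriction $A\sqsubseteq\forall t.B$ holds as long as each new $t$-edge I introduce points to an element whose type already contains the concepts that the source value-restricts. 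So the genuine tasks are only to (i) witness all existential axioms $A\sqsubseteq\exists t.B$ inside a finite structure and (ii) make sure $\interp\not\models Q$ is inherited by $\Jmc$.

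For (ii) I arrange the folding as a map $h\colon\Delta^\interp\to\Delta^\Jmc$ that is the identity on $\Delta^\Jmc$, preserves types, and is \emph{open} for $t$, meaning that every $t^\Jmc$-successor of $h(a)$ is the $h$-image of some $t^\interp$-successor of $a$; intuitively, $\interp$ unravels $\Jmc$. Then I lift any match $\delta$ of a CQ $q\in Q$ in $\Jmc$ back to a match in $\interp$: processing $\var(q)$ in a topological order of $G_q$, I choose $h$-preimages edge by edge using openness, and exploit the tree shape of the descendant order of $\interp$ to keep the choices consistent even at variables with several predecessors. This is precisely where acyclicity of $Q$ is essential: there is no directed cycle forcing me to reconcile conflicting preimage choices for a single variable, so the lifting terminates. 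Since $\interp\not\models q$, no such match exists, hence $\Jmc\not\models q$; ranging over $q\in Q$ gives $\Jmc\not\models Q$.

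For (i) and the size bound I use the monotone quantity $D(a)=\bigcup_{a'\preceq a}\{B : A\sqsubseteq\forall t.B\in\Tmc,\ A\in\tp(\interp,a')\}$, the set of concepts that the value restrictions of $a$ and its ancestors force on every strict descendant of $a$, where $\preceq$ is the descendant order of the tree underlying $\interp$. Because $t^\interp$ is this descendant order and value restrictions propagate downward under transitivity, $D$ is non-decreasing along $t^\interp$ and therefore strictly increases at most $|\NC(\Tmc)|$ times on any branch. I contract each maximal stretch of a branch on which $D$ is constant by a back edge to a suitably chosen earlier element of the same stretch: inside a constant-$D$ stretch no value restriction imposes a new obligation on descendants, so such a back edge can be made compatible with all $\forall$-axioms (by blocking on the full type) while supplying the missing witnesses for the existential axioms of that stretch. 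After contraction every edge strictly increases $|D|$, so the folded structure has depth at most $|\NC(\Tmc)|+1$; bounding the branching at each level by the number of concepts not yet forced and multiplying over the levels yields $|\Delta^\Jmc|\le(|\NC(\Tmc)|+1)!$, a bound that does not depend on $Q$.

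The main obstacle is to make all of this hold \emph{simultaneously}: the blocking discipline must at once witness every existential axiom, respect every value restriction (so that $\Jmc\models\Tmc$), and keep $h$ open (so that acyclic matches lift and $\Jmc\not\models Q$), while still admitting the clean level-by-level counting that produces the factorial bound. I expect the delicate case to be the contraction inside a constant-$D$ stretch, where I must pick the blocker's type carefully so that closing the cycle neither violates a $\forall$-axiom nor creates, through the new $t$-cycle, a match of $Q$ that fails to lift back to $\interp$; verifying that openness of $h$ survives the contraction is the crux of the argument.
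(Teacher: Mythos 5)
Your overall architecture (keep a subset of elements with their types, add back edges to fold infinite branches, bound the size by a set-valued measure that can change at most $|\NC(\Tmc)|$ times along a branch) is in the right spirit, but the correctness criterion you rely on is unsound, and it fails at exactly the point that is the real content of the theorem. Openness of $h$ together with type preservation does \emph{not} imply that matches of acyclic CQs in $\Jmc$ lift back to $\Imc$. Acyclic queries may contain forks, i.e.\ variables with several incoming edges, and forks detect confluence, which open type-preserving foldings do not preserve. Concretely, let $\Imc$ be the transitive closure of the tree with root $r$, children $p\in A^\Imc$ and $p'\in B^\Imc$, and an infinite chain of elements of type $\{W\}$ below each of $p$ and $p'$. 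Take $\Delta^\Jmc=\{r,p,p',w\}$, where $w$ is the first chain element below $p$, with edges $r\to p$, $r\to p'$, $p\to w$, $p'\to w$, $w\to w$ (transitively closed), and let $h$ be the identity on $\Delta^\Jmc$ and collapse both chains onto $w$. Then $h$ is type-preserving and open (every $t^\Jmc$-successor of $h(a)$ is the image of some $t^\Imc$-successor of $a$; in particular the successors of $p'$ in $\Imc$ all map to $w$), yet the acyclic connected query $q=\{A(x),B(y),t(x,z),t(y,z)\}$ matches in $\Jmc$ and not in $\Imc$: in a transitive tree, two elements with a common successor must lie on one branch, and $p,p'$ do not. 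Your lifting procedure breaks precisely here: processing $z$, openness at $x$ produces a preimage of $w$ below $p$, openness at $y$ produces one below $p'$, and the tree shape of $\Imc$ makes these choices irreconcilable rather than consistent. Directed acyclicity of $G_q$ is no help, because the obstruction is undirected confluence, not a directed cycle.

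This is not a repairable bookkeeping detail: preventing new matches of fork patterns when cycles are introduced is the crux of the theorem, and it cannot be delegated to a bisimulation-style condition. The paper's proof handles it by explicit query bookkeeping: it assigns to each element $a$ a tuple $\bar q_a$ of acyclic queries forbidden in the subtree below $a$ (maintained by deleting atoms over initial variables that match at $a$ and selecting a connected component that still has no match in any proper subtree), observes that both this assignment and the set of concept names visible below $a$ are anti-monotone along $t^\Imc$, and closes a cycle only through elements below a point $b_0$ chosen, by well-foundedness, so that the forbidden tuples have stabilized there; it is this stabilization, not any unraveling relation between $\Jmc$ and $\Imc$, that makes the verification $\Jmc\not\models Q$ go through. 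Your $D$-based depth bound and per-level branching count could plausibly be adapted to reproduce the $(|\NC(\Tmc)|+1)!$ bound, but without a replacement for the false lifting lemma the proposal does not prove the theorem.
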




{Unexpectedly, yet crucially, the bound does not depend on the query at all, unlike in \cite{BienvenuELOS10}. Moreover, it depends only on the number of concept names mentioned in $\Tmc$, rather than all concept names occurring in $\interp$.}

The rest of the section is devoted to the proof of Theorem~\ref{th:acyclic_transitive}.
%
%
Let us fix $Q$, $\Tmc$, $\Imc$, and $d_0$ as in the statement.
While $\interp$ may well be infinite, the extracted interpretation $\Jmc$ will be finite, but not necessarily 
a transitive-tree interpretation.
For $d\in\Delta^\Imc$, define $\desc{d} = \big\{ e\in\Delta^\Imc \bigm| \pair{d}{e}\in t^\Imc\big\}$ and let $\Imc_d$ be the subinterpretation of $\Imc$ induced by $\{d\}\cup \desc{d}$.

Let $m=|Q|$. We define a function $\bar{\qbf}$ that assigns to each $d\in\Delta^\interp$ an $m$-tuple $\bar{q}_d = \langle q_{d,1}, \dots, q_{d,m}\rangle$ of acyclic Boolean CQs that are \emph{forbidden} in $\Imc_d$, in the sense that
\begin{equation} 
\Imc_d\not\models \{q_{d,i}\mid 1\leq i\leq m\}\text{ for each $d\in \Delta^\Imc$.}\label{eq:invariant}
\end{equation}
The function $\bar{\qbf}$ is defined by induction. First, we set $\bar{q}_{d_0}$ as any $m$-tuple listing all CQs from $Q$.
We then proceed top-down, maintaining the invariant. Suppose that $\bar q_d$ is already defined for some $d\in\Delta^\interp$; we shall define $\bar{q}_e$ for all \emph{direct $t$-successors} $e$ of $d$ in $\Imc$: elements $e\in\desc{d}$ such that there is no $f\in\desc{d}$ with $e\in \desc{f}$.
For each $i\leq m$, let $X_{d,i} \subseteq \var(q_{d,i})$ be the set of initial variables of $q_{d,i}$ that can be matched in $d$; that is, it contains an initial variable $x$ of $q_{d,i}$ iff $d \in A^\interp$ for each atom $A(x)$ in $q_{d,i}$.
Consider the query $q'$ obtained from $q_{d,i}$ by dropping all atoms involving a variable from $X_{d,i}$, and let us look at its connected components. If each connected component of $q'$ admitted a match in a (strict) subtree of $\Imc_d$
then by merging these matches and mapping each variable from $X_{d,i}$ to $d$, we would get a match of $q_{d,i}$ in $\Imc_d$, contradicting the assumption. 
Hence, there must be a connected component $q'_{d,i}$ of $q'$ that does not admit such a match. 
(Notice that $q'_{d,i} = q_{d,i}$ iff  $X_{d,i}=\emptyset$.) 
We let $q_{e,i} = q'_{d,i}$ for each direct $t$-successor $e$ of $d$ in $\Imc$.
Crucially, $\bar{\qbf}$ is anti-monotone: $\bar q_e\preceq \bar q_d$ for each $\pair{d}{e}\in {t^\interp}$, where $\bar q_e\preceq \bar q_d$ iff $q_{e,i}$ is a subquery of $q_{d,i}$ for each $i$. 

For our construction of $\Jmc$, we also formalize the notion of `visible concepts' of an element in the tree. For each $d\in \Delta^\interp$, we define $\rch^{\Imc}_t(d)$ as the set 
\[\{A\in \NC(\tbox)\mid e\in A^\Imc\text{ for some  $e\in \desc{d}$}\}.\] 
Because of the transitivity of $t^{\interp}$, the function $\rch^{\Imc}_t$ is also anti-monotone: $\rch^{\Imc}_t(e)\subseteq\rch^{\Imc}_t(d)$ for every $\pair{d}{e}\in t^{\interp}$.

Now that we defined the functions $\bar{\qbf}$ and $\rch^\Imc_t$, we can finally use them to construct, for each $d\in\Delta^\interp$, a finite interpretation $\Jmc_d$ with the following properties: 
\begin{itemize}

  \item $d \in \Delta^{\Jmc_d}\subseteq \Delta^\interp$, 
  
  \item $\tp(\Jmc_{d}, e)=\tp(\interp, e)$ for all $e\in\Delta^{\Jmc_{d}}$,
  
  \item  $\rch^{\Imc}_t(e)=\rch^{\Jmc_d^+\!}_t(e)$ for all $e\in\Delta^{\Jmc_{d}}$,
  
  \item $\Jmc_d^+ \not \models \{q_{d,i} \mid i\leq m\}$, and 
  
  \item $|\Delta^{\Jmc_d}|\leq (|\rch^{\Imc}_t(d)|+1)!$. 
  
\end{itemize}
The second and third condition together immediately give $\Jmc_d^+ \models \Tmc$, because $\Tmc$ is in normal form and only uses the role name $t$. The fourth states that $\Jmc_d^+$ satisfies the same invariant~\eqref{eq:invariant} as $\Imc_d$. In consequence, defining $\Jmc$ as $\Jmc_{d_0}^+$ will complete the proof of Theorem~\ref{th:acyclic_transitive}.

We construct $\Jmc_d$  for $d\in\Delta^\interp$ by induction over the set $M=\rch^\interp_t(d)\subseteq\NC(\tbox)$.
There are two cases, depending on the set
$\alpha_\rchind=\big\{ e\in \Delta^{\Imc_d} \mid \rch^{\Imc}_t(e)=\rchind\big\}$. 
The first case is conceptually simpler and serves also as the induction base.

\paragraph{1. There exists $e\in \alpha_\rchind$ such that $\alpha_\rchind\cap \desc{e}=\emptyset$.} Hence, $\rch_t^\Imc(e)=\rchind$ but $\rch_t^\Imc(f)\subsetneq \rchind$ for all $f\in \desc{b}$. Then, we can select elements $f_1,\ldots,f_\ell\in \desc{e}$, $\ell\leq|\rchind|$, such that for each $A\in\rchind$, there exists $j$ with $f_j\in A^\Imc$. 
We build $\Jmc_a$ by taking the disjoint union of interpretations $\Jmc_{f_j}$, that exist by the induction hypothesis, and adding element $d$ with unary type inherited from $\interp$ along with $t$-edges from $d$ to $f_j$ for all $j$, as shown in Figure~\ref{fig:dicho} (left). Note that in the induction base, where $\rchind=\emptyset$, $\Jmc_d$ has domain $\{d\}$ and no edges.

        \begin{figure}
            \centering
        \begin{tikzpicture}[scale = 0.8]

\node (aa) at (-5,-2.2) {$d$};
            \node[draw, circle, minimum size = 10pt] (0) at (-7, -3.8) {$\Jmc_{f_1}$};
            \node[draw, circle, minimum size = 10pt] (1) at (-5.5, -3.8) {$\Jmc_{f_2}$};
            \node (dots) at (-4.3, -3.8) {$\cdots$};
            \node[draw, circle, minimum size = 10pt] (n) at (-3.2, -3.8) {$\Jmc_{f_{\ell}}$};
            \draw[->] (aa) -- node[midway, above left] {$t$} (0);
            \draw[->] (aa) -- node[midway, right] {$t$} (1);
            \draw[->] (aa) -- node[midway, above right] {$t$} (n);

            \node (a) at (-4.6, 0) {$d$};
            \node (b1) at (-3.4, -1) {$e_1$};
            \node (b2) at (-2.2, -2) {$e_2$};
            \node (dots) at (-1.0, -3) {$\ddots$};
            \node (bn) at (0.2, -4) {$e_{k}$};
            \draw[->] (a) -- node[midway, below left] {$t$} (b1);
            \draw[->] (b1) -- node[midway, below left] {$t$} (b2);
            \draw[->] (b2) -- node[midway, below left] {$t$} (dots);
            \draw[->] (dots) -- node[midway, below left] {$t$} (bn);
            \draw[->, out=55,in=40,looseness=1] (bn) to node[midway, above right] {$t$} (b1);
            \node[draw, circle, minimum size = 10pt] (0) at (-1.8, -5.5) {$\Jmc_{d_1}$};
            \node[draw, circle, minimum size = 10pt] (1) at (-0.3, -5.5) {$\Jmc_{d_2}$};
            \node (dots) at (0.9, -5.5) {$\cdots$};
            \node[draw, circle, minimum size = 10pt] (n) at (2, -5.5) {$\Jmc_{d_{\ell}}$};
            \draw[->] (bn) -- node[midway, above left] {$t$} (0);
            \draw[->] (bn) -- node[midway, right] {$t$} (1);
            \draw[->] (bn) -- node[midway, above right] {$t$} (n);
        \end{tikzpicture}
            \caption{Interpretation $\Jmc_d$ in the two cases.}
            \label{fig:dicho}
        \end{figure}

\paragraph{2. For all $e\in \alpha_\rchind$, $\alpha_\rchind\cap \desc{e}\neq \emptyset$.}
Then, because $\preceq$ is a well-founded partial order, there is some $e_0\in\alpha_\rchind$ and a $\preceq$-minimal $m$-tuple $\bar{q}$ such that $\bar q_e=\bar q$ for all $e\in \desc{e_0}\cap \alpha_\rchind$. Since $e_0\in \alpha_\rchind$, we can, 
as in the first case, select elements $f_1,\ldots,f_m\in \desc{b_0}$, $m\leq |\rchind|$, such that for each $A\in M$, there is some $j$ with $f_j\in A^{\Imc}$. We partition $f_1,\ldots,f_m$ in two sequences, depending on $\rch_t^\Imc$: 
\begin{itemize}

    \item $d_1\ldots,d_\ell$ contains all those $d_i$ with $\rch_t^{\Imc}(d_i)\subsetneq M$, and
    
    \item $e_1,\ldots,e_k$ contains all those $d_i$ with $\rch_t^{\Imc}(d_i)=M$.
    
\end{itemize}

To build $\Jmc_a$, we first arrange $a, e_1, e_2, \dots, e_k$ (with unary types inherited from $\interp$) into a simple path with $t$-edges. Next, we turn it into a cycle by adding a $t$-edge from $e_k$ to $e_1$. Finally, we add the disjoint union of interpretations $\Jmc_{d_1}, \dots, \Jmc_{d_\ell}$, along with a $t$-edge from $e_k$ to $d_j$ for all $j\leq \ell$. See Figure~\ref{fig:dicho} (right) for an illustration of the constructed $\Jmc_d$.

\medskip

{Verifying that $\Jmc_d$ has the desired properties is easy, except for $\Jmc_{d}^+\not\models \{q_{d,i} \mid i\leq m\}$ which subtly relies on the choice of $e_0$ and the anti-monotonicity of $\preceq$.


\section{Pseudo-Tree Queries Differently}
\label{sec:ptqs}


%


{To prepare for multi-role queries, we revise the notion of pseudo-tree queries (PTQs).} 
The key property missing in \cite{EiterLOS09}~is global undirected acyclicity, which we build into our definition. 
Intuitively, we define a PTQ as a connected CQ  whose set of binary atoms can be partitioned into disjoint connected acyclic sets of atoms over the same role name, called \emph{clusters}, that are arranged into a tree. The latter condition ensures  global undirected  acyclicity. Let us make this precise.

\begin{definition}
Let $q(\bar x)$ be a CQ. 
For a non-transitive role name $r$, an \emph{$r$-cluster} of $q(\bar x)$ is a nonempty maximal subset $C_r$ of $q$ of the form $\{r(x,y_1),r(x,y_2),\dots, r(x,y_k)\}$ with $k>0$.
For a transitive role name $t$, a \emph{$t$-cluster} of $q(\bar x)$ is a nonempty maximal connected set $C_t$ of $t$-atoms of $q(\bar x)$.
\end{definition}

The clusters of a CQ constitute a partition of the set of its binary atoms. We treat clusters as (Boolean) conjunctive queries. In particular, we speak of initial variables in clusters. For instance, Figure~\ref{fig:ptq} (left) shows an example of a query with one $t$-cluster and three $s$-clusters; cluster $C_4$ has two initial variables, $x$ and $u$; variables $y$ and $z$ are not initial in $C_4$, but they are initial in $C_2$ and $C_3$, respectively.


\begin{definition}
\label{def:clustertree} A \emph{cluster tree} for a CQ $q(\bar x)$ is a tree having for its set of nodes the set of clusters of $q(\bar x)$, and such that:
\begin{enumerate}[label=(\alph*),leftmargin=*]
\item \label{it:shared_variables} two clusters can only share variables if they are siblings or if one is a child of the other;
\item \label{it:siblings} two siblings can only share a variable if they also share it with their parent;
\item \label{it:parents} each non-root cluster $C$ shares with its parent exactly one variable, called the \emph{entry variable of $C$}, and this variable must be initial in $C$ (the root cluster has no entry variable).
\end{enumerate}
\end{definition}

Figure~\ref{fig:ptq} (middle) shows a cluster tree for the query on the left. Intuitively, a cluster tree reflects which clusters must be matched below each other, when the query is matched in {a transitive-tree interpretation}. However, this intuition breaks down as soon as the entry variable of a child cluster is initial in the parent cluster (as for $C_1$ and $C_4$ in Figure~\ref{fig:ptq}): then, the child cluster need not be matched below the parent cluster.

For similar reasons, cluster trees are not unique: the root of a cluster tree may be swapped for any of its children whose entry variable is initial in the current root. In Figure~\ref{fig:ptq}, an alternative cluster tree is obtained by making $C_4$ a child of $C_1$. Once we fix the root, the cluster tree is unique: all clusters sharing a variable with the root become its children, etc. By a \emph{root cluster in}  $q(\bar x)$ we mean any cluster that is the root of \emph{some} cluster tree for $q(\bar x)$. In Figure~\ref{fig:ptq}, $C_1$ and $C_4$ are root clusters, while $C_2$ and $C_3$ are not. 

\begin{definition}
\label{def:ptq} 
A \emph{Boolean pseudo-tree query (Boolean PTQ)} is a connected Boolean CQ $q$ such that 
\begin{enumerate}
\item \label{it:cluster_tree} there is a cluster tree for $q$;
\item \label{it:cluster_acyclicity} for each transitive $t$, every $t$-cluster of $q$
is acyclic.
\end{enumerate}
A \emph{unary pseudo-tree query (unary PTQ)} is a unary CQ $q(x)$ such that $q$ is a Boolean PTQ, and $x$ is initial and belongs to a single, root cluster in $q$.
A Boolean/unary UPTQ is a UCQ that contains only Boolean/unary PTQs. 
\end{definition}

 Coming back to the example in Figure~\ref{fig:ptq}, the query $q$ shown on the left is a Boolean PTQ and  $q(u)$ is a unary PTQ, whereas $q(x)$, $q(y)$, and $q(z)$ are not.
 \new{Moreover, any TQ whose answer variable occurs in at most one binary atom (as in Theorem~\ref{th:reduction-rooted} below) is a unary PTQ. On the other hand, the query in Figure~\ref{fig:naughty} (left) does not admit a cluster tree. Yet, it was classified as a PTQ in \cite{EiterLOS09}. }


\begin{figure}
    \centering
    \begin{tikzpicture} [>=latex,thin,point/.style={circle,draw=black,fill,minimum
    size=1.2mm,inner sep=0pt},scale=1.0, bend angle=45]\footnotesize

  \node[point, label=right:{\scriptsize $x$}] (x) at (1,0) {};
  \node[point, label=right:{\scriptsize $u$}] (x1) at (2,0) {};
  \node[point] (y) at (0,-.5) {};
  \node[point, label=below:{\scriptsize $y$}] (z) at (1,-.75) {};
  \node[point, label=right:{\scriptsize $z$}] (u) at (2,-1.5) {};
  \node[point] (y1) at (0,-1.3) {};
  \node[point] (y2) at (1,-2.1) {};
  
  \node[color=black!50] at (-.5,-.5) {$C_1$};
  \node[color=black!50] at (-.5,-1.3) {$C_2$};
  \node[color=black!50] at (.5,-2.1) {$C_3$};
  \node[color=red!50] at (2.7,-.4) {$C_4$};

  \draw[->] (x) -- node[above]{$s$} (y);
  \draw[->,color=red] (x) -- node[right]{$t$} (z);
  \draw[->,color=red] (x1) -- node[above]{$t$} (z);
  \draw[->,color=red] (x1) -- node[left]{$t$} (u);
  \draw[->,color=red] (z) -- node[above]{$t$} (u);
  \draw[->] (z) -- node[above]{$s$} (y1);
  \draw[->] (u) -- node[above]{$s$} (y2);
  
  \draw[color=black!30,rotate=30] (0.3,-.5) ellipse (.8cm and .25cm);
  
  \draw[color=black!30,rotate=30] (-.1,-1.15) ellipse (.8cm and .25cm);
  
  \draw[color=black!30,rotate=30] (.4,-2.3) ellipse (.8cm and .25cm);
  
  \draw[color=red!30,rotate=30] (1.1,-1.4) ellipse (.8cm and 1.1cm);

  \node[color=red] (C4) at (4,0) {$C_4$};
  \node[color=black] (C1) at (3.3,-1) {$C_1$};
  \node[color=black] (C2) at (4,-1) {$C_2$};
  \node[color=black] (C3) at (4.7,-1) {$C_3$};

  \draw[->] (C4) -> (C1);
  \draw[->] (C4) -> (C2);
  \draw[->] (C4) -> (C3);

  \node[point, label=left:{$A_{p_1(x)}$}] (a) at (6.3,0) {};
  \node[point] (b) at (7.3,0) {};
  \node[point,label=left:{$A_{p_2(y)}$}] (c) at (6.3,-.75) {};
  \node[point,label=left:{$A_{p_3(z)}$}] (d) at (7.3,-1.5) {};
 
  \draw[->,color=red] (a) -- node[right]{$t$} (c);
  \draw[->,color=red] (b) -- node[above]{$t$} (c);
  \draw[->,color=red] (b) -- node[left]{$t$} (d);
  \draw[->,color=red] (c) -- node[above]{$t$} (d);
 
   \end{tikzpicture}
    \caption{A Boolean PTQ over a transitive role $t$ and non-transitive role $s$, its cluster tree, and a query corresponding to its root cluster.}\label{fig:ptq}
\end{figure}
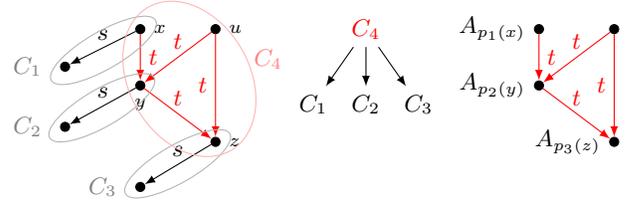

 \begin{figure}
    \centering
    \begin{tikzpicture}[>=latex,thin,point/.style={circle,draw=black,fill,minimum
    size=1.2mm,inner sep=0pt},scale=1.0, bend angle=45]\footnotesize
    \node[point, label=right:{$B$}] (x1) at (0.2,-2) {};
    \node[point] (x2) at (0.6,0) {};
    \node[point] (x3) at (0.9,-0.7) {};
    \node[point, label=above:{$A$}] (x4) at (1.9,-0.9) {};
    \node[point, label=right:{$C$}] (x5) at (1.9,-2) {};
    \node[point] (x6) at (3.2,0) {};
    \node[point] (x7) at (2.9,-0.7) {};
    \node[point, label=right:{$D$}] (x8) at (3.7,-2) {};
    
 \draw[->] (x2)-> node[left]{$s$} (x1);
 \draw[->] (x3)-> node[right]{$s$} (x1);
 \draw[->] (x2)-> node[above]{$t$} (x4);
 \draw[->] (x3)-> node[above]{$t$} (x5);
 \draw[->] (x6)-> node[above]{$t$} (x4);
 \draw[->] (x7)-> node[above]{$t$} (x5);
 \draw[->] (x6)-> node[right]{$s$} (x8);
 \draw[->] (x7)-> node[left]{$s$} (x8);
 
    \node[point] (y1) at (6,0) {};
    \node[point, label=left:{$B$}] (yb) at (5,-.8) {};
    \node[point, label=left:{$D$}] (yd) at (5.7,-.8) {};
    \node[point, label=right:{$A$}] (ya) at (6.2,-.8) {};
    \node[point, label=right:{$C$}] (yc) at (7,-.8) {};
 \draw[->] (y1)-> node[left]{$s$} (yb);
 \draw[->] (y1)-> node[left]{$s$} (yd);
 \draw[->] (y1)-> node[right]{$t$} (ya);
 \draw[->] (y1)-> node[right]{$t$} (yc);
 
    \node[point] (z1) at (6,-1.7) {};
    \node[point] (z2) at (6.5,-1.2) {};
    \node[point, label=right:{$A$}] (z3) at (7,-1.7) {};
    \node[point, label=left:{$B$}] (zb) at (5,-2.3) {};
    \node[point, label=left:{$D$}] (zd) at (6,-2.3) {};
    \node[point, label=right:{$C$}] (zc) at (7,-2.3) {};
 \draw[->] (z2)-> node[above]{$s$} (z1);
 \draw[->] (z2)-> node[above]{$t$} (z3);
 \draw[->] (z1)-> node[above]{$s$} (zb);
 \draw[->] (z1)-> node[right]{$s$} (zd);
 \draw[->] (z1)-> node[above]{$t$} (zc);
\end{tikzpicture}
\caption{A naughty query using transitive roles $s$, $t$.}
\label{fig:naughty}
\end{figure}
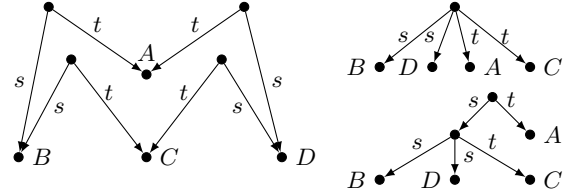

The \emph{raison d'\^etre} of PTQs is to semantically capture CQs over {transitive-tree} interpretations, as stated in Lemma~\ref{lem:to_ptq} below. Crucially, this is possible only if at most one transitive role is allowed. 

\begin{restatable}{lemma}{toPTQ}
    \label{lem:to_ptq}  Let $\mathbf T$ be the class of {transitive-tree} interpretations.
    The following can be done in polynomial time for CQs using at most one transitive role: 
    \begin{itemize}
        \item Given a connected Boolean CQ $q$, decide if ${\Imc}\models q$ for some~$\Imc\in\mathbf T$, and if so, output a Boolean PTQ $\hat{q}$ such that  ${\Imc}\models q$ iff ${\Imc}\models\hat q$ for all $\Imc\in\mathbf T$.
        
        \item Given a connected unary CQ $q(x)$, decide if $\pair{{\Imc}}{d}\models q(x)$ for some~$\Imc\in\mathbf T$ with root $d$, and if so, output unary PTQs $\hat{q}_1(x), \dots , \hat{q}_k(x)$ 
        such that for all $\Imc\in\mathbf T$ with root $d$, $\pair{{\Imc}}{d}\models q(x)$ iff $\pair{{\Imc}}{d}\models\hat q_i(x)$ for all $i$.
    \end{itemize}
\end{restatable}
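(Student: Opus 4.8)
The plan is to prove the lemma constructively, rewriting $q$ into a PTQ by a deterministic folding that preserves matches \emph{in every} $\Imc\in\mathbf T$ and runs in polynomial time. The starting point is to read off how a CQ can match a transitive-tree interpretation $\Imc=\Imc_0^+$: a non-transitive atom $r(x,y)$ must be realised by a tree edge, so $\delta(y)$ is the unique $r$-child of $\delta(x)$ in $\Imc_0$, whereas an atom over the single transitive role $t$ must be realised by a directed $t$-path, so $\delta(y)$ is a proper $t$-descendant of $\delta(x)$. Two structural facts of trees drive everything. Since every non-root node has a \emph{unique} incoming edge, any two atoms with a common target variable force their sources (and roles) to coincide; and since the ancestors of a node form a chain while elements of distinct subtrees are incomparable, any two variables that the query forces to have a common descendant must be mapped to comparable nodes, hence must be identified whenever the query also forces them to lie in distinct subtrees. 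In particular, every directed cycle of $t$-atoms collapses to a single node.

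First I would turn these forced identifications into a congruence: close $\var(q)$ under the rules above to a fixpoint and take the quotient $q'$. Every identification is valid in every match, so a match of $q$ factors through the quotient and a match of $q'$ lifts back by composing with the quotient map; hence $\Imc\models q$ iff $\Imc\models q'$ for all $\Imc\in\mathbf T$. Because the closure is deterministic, no case split is introduced -- this is precisely why the Boolean output is a \emph{single} PTQ. Along the way I detect the clash conditions that make $q$ unsatisfiable over $\mathbf T$: two non-transitive atoms with a common target but different roles, a variable that is the target of both a non-transitive atom and a $t$-atom, or contradictory concept atoms on a merged variable. I then read off the clusters of $q'$ -- the non-transitive $r$-clusters (stars grouped by source) and the $t$-clusters (maximal connected $t$-components) -- noting that after collapsing directed $t$-cycles every $t$-cluster is a DAG; undirected cycles inside a single $t$-cluster, such as the labelled ``diamond'', are harmless and are left in place, since PTQ acyclicity forbids only directed cycles.

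The heart of the argument, and the step I expect to be the main obstacle, is to show that the clusters of $q'$ admit a cluster tree, i.e.\ that the cluster-adjacency graph (clusters as nodes, adjacency when they share a variable) is acyclic, so that conditions (a)--(c) of Definition~\ref{def:clustertree} can be met. This is exactly where two transitive roles fail -- the query of Figure~\ref{fig:naughty} admits no cluster tree -- and where the single-transitive-role hypothesis is essential. The plan is to prove that, after folding, no cycle survives in the cluster-adjacency graph: tracing such a cycle through its shared variables, the tree semantics forces, via the unique-parent and comparability rules, an identification of variables or a coincidence of roles that the folding has already carried out, collapsing the cycle. The one configuration that resists collapse -- a cycle alternating between two distinct $t$-clusters, as in Figure~\ref{fig:naughty} -- cannot occur when only one role is transitive. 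Hence the cluster-adjacency graph is a forest, and connectedness of $q'$ lets us pick a root and build a cluster tree, making $q'$ the single Boolean PTQ $\hat q$; a canonical tree model read off the cluster tree witnesses its (and thus $q$'s) satisfiability over $\mathbf T$.

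For the unary case I apply the same folding, first checking that $x$ carries no incoming atom (otherwise $q(x)$ is unsatisfiable at the root), so that $x$ is initial. I then decompose $q'(x)$ along the clusters incident to $x$: since a unary PTQ must keep $x$ in a single root cluster, I form one query $\hat q_i(x)$ per incident cluster, consisting of that cluster together with the subtree of clusters hanging below it in the cluster tree. Each $\hat q_i(x)$ is then a unary PTQ, and because the pieces share only $x$ and spread into independent parts of the tree rooted at $a=\delta(x)$, a match at the root exists iff every $\hat q_i$ matches at $a$ -- yielding the claimed conjunction. All steps -- congruence closure, clash detection, cluster extraction, the forest analysis, and the decomposition -- are polynomial, giving the stated bound.
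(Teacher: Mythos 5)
The fatal problem is the step you yourself identify as the heart of the argument: you reduce the existence of a cluster tree to acyclicity of the \emph{undirected cluster-adjacency graph} (clusters adjacent when they share a variable) and plan to show that no adjacency cycle survives the folding. That characterization is false in both directions, so the plan cannot be carried out. Definition~\ref{def:clustertree}(b) explicitly allows a parent and two sibling children to all share the entry variable, so perfectly good PTQs have cyclic adjacency graphs: the query $\{t(u,x),\,r(x,v),\,s(x,w)\}$ with $r,s$ non-transitive and distinct is already folded (neither $(\dagger)$ nor $(\ddagger)$ applies), is satisfiable over $\mathbf T$, and its three clusters pairwise share $x$, forming a triangle; yet it admits the cluster tree with root $\{t(u,x)\}$ and the other two clusters as children. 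Conversely, acyclic adjacency does not guarantee a cluster tree: before folding, $\{r(x_1,y_1),r(x_2,y_2),t(y_1,z),t(y_2,z)\}$ has a path-shaped adjacency graph but no cluster tree (the shared variable $y_2$ is not initial in the cluster $\{r(x_2,y_2)\}$), although it is satisfiable over $\mathbf T$. What the paper actually proves acyclic is the \emph{directed} relation $\leadsto$ ($C_i\leadsto C_j$ when they share a variable that is initial in $C_j$ and, if $C_i$ is non-transitive, not initial in $C_i$); and even granting that, the cluster tree does not fall out of it: one must orient $\leadsto$ into a parent relation by a case distinction on whether the clusters involved are $t$-clusters, prove that every cluster gets at most one parent (this is where the single-transitive-role hypothesis does its real work), and finally glue the resulting forest into one tree using connectedness. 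None of this is supplied by, or replaceable with, an adjacency-graph argument, so the main step of the lemma remains unproved.

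There are also unsound rules in your folding, which is otherwise in the spirit of the paper's fork elimination (Definition~\ref{def:queryequivalence}). The rule ``any two atoms with a common target variable force their sources (and roles) to coincide'' is wrong for pairs of $t$-atoms: $t(x,z)$ and $t(y,z)$ only force $\delta(x)$ and $\delta(y)$ to be comparable, so e.g.\ $\{t(x,z),t(y,z),A(x),B(y)\}$ is satisfiable over $\mathbf T$ with $\delta(x)\neq\delta(y)$, while identifying $x$ and $y$ yields a strictly stronger query; the identification is only valid when both atoms are non-transitive. Likewise, a directed cycle of $t$-atoms does not ``collapse to a single node'': the transitive closure of a tree is a strict order, so such a cycle (indeed any directed cycle) makes the query unsatisfiable over $\mathbf T$ and must be detected and rejected -- this is exactly condition (i) of the paper's characterization and is absent from your clash list (the listed clash ``contradictory concept atoms'' is vacuous, since CQs have no negation). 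Finally, your ``distinct subtrees'' phrasing does not clearly capture the paper's rule $(\ddagger)$ -- identifying two variables that lie in the same $t$-cluster and both carry incoming non-transitive atoms -- which the second example above shows is indispensable for the folded query to admit a cluster tree at all.
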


The existence of $\hat q$ and \red{$\hat q_1(x), \dots, \hat q_k(x)$} in Lemma~\ref{lem:to_ptq} relies on the input query using at most one transitive role. For instance, the query $q$ in Figure~\ref{fig:naughty} (left) admits a match in 
{a transitive-tree interpretation}, but one can check that it is not captured by a single PTQ. {Intuitively, this is because $q$ is entailed by both queries in Figure~\ref{fig:naughty} (right), despite their radically different structure.}
\red{In the unary case, we need multiple (but polynomially many) PTQs, because the answer variable cannot belong to multiple clusters. For instance, for $q$ in Figure~\ref{fig:ptq}, $q(x)$ requires two unary PTQs: one consists of cluster $C_1$ and the other consists of clusters $C_2$, $C_3$, and $C_4$.} 

{We close the section by defining subPTQs, which are to PTQs what subtrees are to TQs.} A \emph{subPTQ of a Boolean PTQ} $q$ is a unary subquery of $q$ induced by a proper subtree of a cluster tree for $q$. It consists of all binary atoms in the subtree along with all unary atoms of $q$ over variables used in the subtree, and its answer variable is the entry variable of the root of the subtree. 
\red{A \emph{subPTQ of a unary PTQ $q(x)$} is defined analogously, except that we are limited to the (unique) cluster tree of $q(x)$ such that $x$ belongs to the root cluster.} 
For instance, the query $q$ in Figure~\ref{fig:ptq} has 4 subPTQs. Three are induced by the subtrees of the cluster tree in the figure, rooted at $C_1$, $C_2$, and $C_3$, and their answer variables are $x$, $y$, and $z$, respectively. 
The last one is induced by the subtree rooted at $C_4$ of the alternative cluster tree whose root is $C_1$, and its answer variable is $x$. \red{Of those 4 queries, only the first 3 are subPTQs of the unary PTQ $q(u)$.} 





\section{Upper Bounds}
\label{sec:upper}

{Our proof of \TwoExpTime-hardness of CQ entailment 
in $\Smc$ crucially relies on the availability of (a) two transitive roles and (b) Boolean CQs. We now show that entailment becomes easier if we forbid either of these. In this sense our hardness result is optimal. }


\begin{restatable}{theorem}{coNExp}
\label{th:coNExp}
The query entailment problem in $\Smc$ is $\coNExpTime$-complete for rooted UCQs and for UCQs that use at most one transitive role name. 
\end{restatable}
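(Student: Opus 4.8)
The $\coNExpTime$ lower bound already follows for both restrictions from the single-transitive-role construction of~\cite{EiterLOS09}, so I would focus on the matching upper bound, i.e.\ on placing \emph{non}-entailment in $\NExpTime$. The plan is to prove a \emph{small witness property}: $\Kmc\not\models Q$ iff non-entailment is certified by a structure of at most exponential size that a nondeterministic algorithm can guess and verify in exponential time. Since $\Smc$ entailment is not finitely controllable, this witness is not a countermodel but a \emph{mosaic}---a finite collection of small interpretations (\emph{tiles}) subject to local compatibility conditions ensuring that the tiles glue into a (possibly infinite) countermodel of $\Kmc$ that avoids every match of $Q$.

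First I would use the tree-like model property of $\Smc$ to reduce, in both cases, to a special form in which the ABox is trivial (a single root individual carrying a type) and the query has a controlled shape (Step~1). In the single-transitive-role case I would compile each CQ of $Q$ into Boolean PTQs using Lemma~\ref{lem:to_ptq}; then matches need only be tested against transitive-tree interpretations and are organised by cluster trees and subPTQs, for which the refined PTQ notion of Section~\ref{sec:ptqs} (with its built-in global undirected acyclicity) is exactly what makes the later combinatorics correct. In the rooted case I would instead exploit that connectedness together with the presence of answer variables anchors every match to the small ABox; this anchoring plays the role that the single-transitive-role restriction plays in the other case and keeps the relevant query behaviour under control even with several transitive roles.

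Next I would characterise non-entailment of the special-shape query by the existence of a mosaic (Step~2), where the compatibility conditions on tiles encode (i) local satisfaction of $\Tmc$, (ii) agreement of types and ``visible concepts'' on the boundary elements along which tiles are glued, and (iii) absence of any match of a PTQ or one of its subPTQs in a tile or across a gluing; the subPTQ bookkeeping is the analogue of the forbidden-query tuples $\bar q_a$ of Section~\ref{sec:acyclic}. Finally I would bound both the number of distinct tiles and the size of each tile exponentially (Step~3). The number is exponential because a tile is determined up to equivalence by a bounded signature (types, visible concepts, and the finite family of forbidden (sub)PTQs). The size is the delicate point: applying Theorem~\ref{th:acyclic_transitive} to the transitive-tree fragments yields tiles of size at most $(|\NC(\tbox)|+1)!$, crucially \emph{independent of the query}. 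Together these give a mosaic with at most exponentially many tiles, each of exponential size, which can be guessed and whose locally checkable conditions can be verified in exponential time---placing non-entailment in $\NExpTime$ and hence entailment in $\coNExpTime$.

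The main obstacle is the query-independent size bound in Step~3. A direct mosaic construction yields tiles whose size depends on $\|Q\|$, as in~\cite{BienvenuELOS10}, producing only a doubly-exponentially large witness and thus failing to improve on the generic \TwoExpTime{} upper bound. Overcoming this is exactly the purpose of Theorem~\ref{th:acyclic_transitive}, whose bound depends only on $|\NC(\tbox)|$; the hard part is lifting it from a single acyclic Boolean CQ over one transitive role to the full mosaic setting with subPTQ bookkeeping (and, in the rooted case, several transitive roles, where a type-based blocking argument must take its place). A secondary difficulty is showing that forbidding all subPTQs in every tile is both necessary and sufficient for globally forbidding the original query---precisely where the global undirected acyclicity built into the PTQ definition, and missing from~\cite{EiterLOS09}, is indispensable.
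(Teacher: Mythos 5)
Your overall architecture is the paper's: the same three-step decomposition (trivialize the ABox and compile the query to (P)TQs, characterize non-entailment by mosaics of single-role tiles with subPTQ bookkeeping via auxiliary concept names, then bound tile size via Theorem~\ref{th:acyclic_transitive} and bound the tile count), followed by guess-and-verify in \NExpTime. For the single-transitive-role case your sketch lines up with Theorems~\ref{th:reduction-single} and~\ref{th:reductionbasic} and Lemmas~\ref{cor:mosaic_exp_tile} and~\ref{lem:bounded-tiles-single}, and is essentially correct.

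The gap is in the rooted case, precisely where you count tiles. You claim the number of tiles is exponential ``because a tile is determined up to equivalence by a bounded signature (types, visible concepts, and the finite family of forbidden (sub)PTQs)''. This works in the single-transitive-role case only because there the reduction (Theorem~\ref{th:reduction-single}) guarantees polynomially many subPTQs, hence polynomially many auxiliary concept names and exponentially many types. In the rooted case the reduction (Theorem~\ref{th:reduction-rooted}) produces a UTQ $Q^1_a$ of exponential \emph{cardinality} (each TQ has linear size, but there are exponentially many of them), so the family of forbidden subqueries---and hence of auxiliary concept names---is exponential, and your type-based count yields doubly exponentially many tiles, which does not beat the generic \TwoExpTime upper bound. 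The paper closes this with a different argument (Lemma~\ref{lem:bounded-tiles-rooted}): since a rooted query is connected and anchored at the initial element, any match can traverse only linearly many tiles, so only tiles within a linear ``horizon'' of the initial element must refute subqueries (organized by saturating the auxiliary concepts level by level), and beyond that horizon one tile per type over $\NC(\Tmc)$ suffices; this gives the bound $n\cdot((Mn)^{m+1}+2^n)$. Relatedly, your closing remark misplaces the difficulty: you suggest that in the rooted case Theorem~\ref{th:acyclic_transitive} cannot be applied because several transitive roles occur, and that ``a type-based blocking argument must take its place''. In fact the opposite holds: tiles are single-role interpretations, so Theorem~\ref{th:acyclic_transitive} applies per tile uniformly in both cases (this is exactly Lemma~\ref{cor:mosaic_exp_tile}), and it is the type-based counting that breaks in the rooted case and must be replaced by the distance-based argument above.
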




The \coNExpTime-hardness for UCQs using at most one transitive role was established by~\cite{EiterLOS09}. While the proof is formulated using Boolean CQs, it does not rely on this and can be adjusted easily to the case of rooted CQs. 

We focus on the upper bounds. The proofs have a common {three-step structure described in the introduction} and ultimately establish a \emph{small witness property} based on Theorem~\ref{th:acyclic_transitive}.
{We implement Steps~1--3 for both upper bounds in parallel in Sections \ref{sb:firststep}--\ref{sb:expo_sized_tiles} and we put them together} in Section~\ref{sec:wrappingup}, where the 
proof of Theorem~\ref{th:coNExp} is finalized.

\subsection{Eliminating ABoxes and Simplifying Queries}
\label{sb:firststep}

The first step eliminates the ABox and simplifies the queries. For the rooted case, this is routine. We reduce our main entailment problem to a variant  where the input consists of a TBox $\Tmc$, a set $\tau \subseteq \NC(\Tmc)$, and a unary UCQ $Q^1$: the task is to decide whether for each model $\interp$ of $\Tmc$ and every $d \in\Delta^{\interp}$ with $\tp(\interp, d) \cap \NC(\Tmc) = \tau$, we have $\pair{\interp}{a} \models Q^1$. We write
this condition as $\pair{\Tmc}{\tau} \models Q^1$. The reduction is provided by the following theorem. Note that this is a non-deterministic reduction, similar to the one introduced in \cite{nondetreductions}.

\begin{restatable}{theorem}{reductionrooted}
\label{th:reduction-rooted}
There is a \NExpTime algorithm that, given a KB $\pair{\Tmc}{\Amc}$, a rooted UCQ $Q(\bar x)$, and a tuple $\bar a$ of individuals from $\NI(\Amc)$, computes for each $a\in\NI(\Amc)$ a set $\tau_a \subseteq \NC(\Tmc)$ and a UTQ $Q^1_a$ such that
\begin{itemize}
\item $\pair{\Tmc}{\Amc} \not\models Q(\bar a)$ iff there is a run of the algorithm such that $\pair{\Tmc}{\tau_a}\not\models Q_a^1$ for all $a\in\NI(\Amc)$;
\item for each run of the algorithm and each $a\in\NI(\Amc)$, the size of each TQ in $Q_a^1$ is linear in $\|Q\|$ \red{and its answer variable occurs in at most one binary atom.}
\end{itemize}
\end{restatable}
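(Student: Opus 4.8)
The plan is to combine the forest-model property of $\Smc$ with a nondeterministic guess of the ABox behaviour, turning global non-entailment of $Q(\bar a)$ into a family of local, tree-shaped non-entailment conditions, one per ABox individual; this is a nondeterministic reduction in the style of \cite{nondetreductions}. I would first invoke the forest-model property: $\langle\Tmc,\Amc\rangle\not\models Q(\bar a)$ holds iff there is a countermodel $\Imc$ whose domain splits into a finite \emph{core} over $\NI(\Amc)$---the individuals together with the asserted role edges and their transitive closure---and, attached to each individual $a$, a transitive-tree interpretation $T_a$ rooted at $a$, such that $\Imc\models\langle\Tmc,\Amc\rangle$ and $\langle\Imc,\bar a\rangle\not\models Q$. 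The algorithm guesses for every $a\in\NI(\Amc)$ a type $\tau_a\subseteq\NC(\Tmc)$ and checks that the induced core is consistent with $\Tmc$ and $\Amc$; the trees $T_a$ are never materialised but are precisely the interpretations that will witness the reduced conditions $\langle\Tmc,\tau_a\rangle\not\models Q^1_a$.

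The core of the proof is a decomposition of potential matches. The key structural observation is that in a transitive-tree interpretation every role atom forces its source to be an ancestor of its target---for non-transitive roles this is the parent relation, for transitive roles the ancestor relation induced by transitive closure. Hence a connected query matched inside a single tree has a unique topmost image, and, because distinct trees are pairwise disconnected and touch the core only at their roots, any match $\delta$ of a CQ $q\in Q$ with $\delta(\bar x)=\bar a$ decomposes into a \emph{core part} (the restriction of $\delta$ to variables sent to $\NI(\Amc)$) and a family of \emph{tree pieces} (the connected components of the remaining variables), each landing inside one tree $T_a$ and connected to the core only through its topmost variable. For each $q$ I would enumerate all \emph{partial core matches} $\mu$---maps of the core variables into $\NI(\Amc)$ fixing $\bar x\mapsto\bar a$ and consistent with the guessed core---of which there are at most exponentially many. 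Relative to a fixed $\mu$, the match extends to a full match of $q$ iff every tree piece of $\mu$ matches in the tree below its anchoring individual. Each piece is converted into a tree query by rooting it at a fresh answer variable mapped to that individual, attached by a single edge to the topmost variable of the piece and oriented downward along the ancestor order; transitively implied atoms are discarded so that the result is a genuine directed tree of size linear in $\|Q\|$ whose answer variable occurs in a single binary atom.

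The required equivalence is then obtained by a nondeterministic blocking argument. For every $q\in Q$ and every consistent partial core match $\mu$, the algorithm guesses one tree piece of $\mu$ to be \emph{blocked} and adds the corresponding tree query to $Q^1_a$, where $a$ is the individual at which that piece is anchored; $Q^1_a$ is the union of all tree queries assigned to $a$ in this way. If $\langle\Tmc,\tau_a\rangle\not\models Q^1_a$ for every $a$, then gluing the witnessing trees $T_a$ onto the guessed core produces a model of $\langle\Tmc,\Amc\rangle$ in which, for each $q$ and each $\mu$, the designated piece cannot match, so no full match of any $q$ survives---this yields the ``if'' direction. For the converse, from a forest countermodel one reads off the types $\tau_a$ and, for each $q$ and each $\mu$, a tree piece that fails to match, and defines $Q^1_a$ from these pieces; the trees of the countermodel then certify $\langle\Tmc,\tau_a\rangle\not\models Q^1_a$ for all $a$. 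The complexity bounds follow immediately: every tree query is linear in $\|Q\|$, their total number is at most exponential, so the whole computation runs within \NExpTime.

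I expect the main obstacle to be the faithful \emph{shaping} of tree pieces into tree queries. One must verify that, over transitive-trees rooted at the anchoring individual, blocking the constructed tree query blocks exactly the original piece in both directions, and that the single-entry-point and single-binary-atom properties can always be arranged: this rests on the unique-topmost-image property together with a careful argument that the auxiliary boundary atoms entering a piece from the core---in particular the ones created by transitivity from higher core ancestors---are absorbed by the single edge to the topmost variable. A second, purely bookkeeping, difficulty is to make the nondeterministic assignment of blocked pieces to individuals line up so that the conjunctive structure over individuals in the theorem matches the disjunctive structure of entailment.
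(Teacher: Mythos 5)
Your overall architecture coincides with the paper's: the forest-model property, guessing the core interpretation over $\NI(\Amc)$ together with the types $\tau_a$, decomposing hypothetical matches into a core part and tree pieces anchored at individuals (this is exactly the paper's notion of admissible $\Theta$-splits of subdivisions, with your ``absorption'' of transitive boundary atoms playing the role of the subdivision step), and blocking, for each core match, one non-deterministically chosen piece. The genuine gap lies in the step you yourself flag as the main obstacle: converting a tree piece into a \emph{single} tree query. A piece is an arbitrary connected sub-CQ; over transitive-tree interpretations it is in general \emph{not} equivalent to any single TQ, because distinct matches can realize incomparable shapes. For instance, the piece $\{t(x_a,y),\,t(x_a,w),\,t(y,z),\,t(w,z)\}$ with $t$ transitive matches in a transitive tree iff the images of $y$ and $w$ are comparable, so it is equivalent to the union of two chain TQs (one with $y$ above $w$, one with $w$ above $y$) and to no single TQ. Consequently the equivalence you require — ``blocking the constructed tree query blocks exactly the original piece in both directions'' — cannot be arranged: if your TQ is entailed by the piece (e.g., a shorter chain), soundness of the reduction holds but completeness fails, since the countermodel's tree may refute the piece while still matching the weaker TQ, so it does not certify $\pair{\Tmc}{\tau_a}\not\models Q_a^1$; if instead your TQ entails the piece, the failure is reversed. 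Moreover, your recipe ``oriented downward along the ancestor order'' presupposes a particular match, so it is not even well defined in the completeness direction, where no match exists, and in the soundness direction a single orientation cannot cover all potential matches.

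The paper closes this gap with a dedicated lemma (Lemma~\ref{lem:treeification}): each unary piece $p(x)$ with $n$ variables is replaced by the union of \emph{all} TQs arising from tree interpretations on at most $n$ nodes to whose transitive closure $p$ maps — exponentially many TQs, each of linear size, which are \emph{jointly} equivalent to $p(x)$ over transitive-tree interpretations; all of them are added to $Q_a^1$. This preserves both directions, and the exponential cardinality is harmless for the \NExpTime bound, since the theorem only promises that each individual TQ is linear. Separately, the property that the answer variable occurs in at most one binary atom does not come for free from your single-edge attachment; the paper obtains it by a further non-deterministic pruning (step $(\ast)$ in its proof): from each TQ keep one root atom $r(x,y)$ and drop the branches starting at the other root atoms. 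This is sound and complete because the branches of a TQ share only the root variable, so a TQ fails to match at a given root iff some single branch fails there, and that branch can be guessed.
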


{Next, we handle the single transitive role case.} We obtain stronger size guarantees (needed later) at the cost of relaxing tree queries (TQs) to pseudo-tree queries (PTQs). Similarly to the rooted case, we reduce entailment of UCQs using a single transitive role to a variant of entailment for UPTQs. This time, given a TBox $\Tmc$, a set $\tau \subseteq \NC(\Tmc)$, a Boolean UPTQ $Q^0$, and a unary UPTQ $Q^1$, one has to decide whether for each model $\interp$ of $\Tmc$ and each $d \in\Delta^{\interp}$  with $\tp(\interp, d) \cap \NC(\Tmc)  = \tau$, we have $\Imc\models Q^0$ or $\pair{\interp}{d} \models Q^1$. We write the condition to be decided as $\pair{\Tmc}{\tau} \models Q^0\vee Q^1$. The following theorem provides the reduction; {it relies on Lemma~\ref{lem:to_ptq} to transform CQs into PTQs equivalent over transitive-tree interpretations.

\begin{restatable}{theorem}{reductionone}
\label{th:reduction-single}
There is a \NExpTime algorithm that, given a KB $\pair{\Tmc}{\Amc}$ and a Boolean UCQ $Q$ using at most one transitive role, computes for each $a\in\NI(\Amc)$ a set $\tau_a \subseteq \NC(\Tmc)$, a Boolean UPTQ $Q^0_a$, and a unary UPTQ $Q^1_a$ such that 
\begin{itemize}
\item $\pair{\Tmc}{\Amc} \not\models Q$ iff there is a run of the algorithm such that $\pair{\Tmc}{\tau_a}\not\models Q_a^0\lor Q_a^1$ for all $a\in\NI(\Amc)$;
\item for each run of the algorithm and each $a\in\NI(\Amc)$, $\|Q_a^0\|$ is linear in $\|Q\|$, each PTQ in $Q_a^1$ is linear in $\|Q\|$, and the total number of 
subPTQs of PTQs from $Q_a^1$ is polynomial in $\|Q\|$.
\end{itemize}
\end{restatable}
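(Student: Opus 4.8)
The plan is to prove Theorem~\ref{th:reduction-single} in two stages that mirror Theorem~\ref{th:reduction-rooted}: first eliminate the ABox by guessing, for each individual, the fragment of its behaviour visible to the query; then replace the ABox individuals by tree-shaped environments and compile the resulting CQs into UPTQs using Lemma~\ref{lem:to_ptq}. First I would observe that $\Smc$ has a transitive-tree model property: if $\pair{\Tmc}{\Amc}\not\models Q$, there is a countermodel obtained by attaching, to each ABox individual, a transitive-tree interpretation for the anonymous part. The nondeterministic reduction then guesses, for each $a\in\NI(\Amc)$, the set $\tau_a = \tp(\interp,a)\cap\NC(\Tmc)$ together with enough information about how matches of (subqueries of) $Q$ can be distributed among the individuals and their attached trees. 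The \NExpTime\ budget lets us guess, for each of the exponentially many candidate type assignments, a consistency certificate checkable in (non-deterministic) exponential time; the correctness condition ``$\pair{\Tmc}{\Amc}\not\models Q$ iff some run yields $\pair{\Tmc}{\tau_a}\not\models Q_a^0\lor Q_a^1$ for all $a$'' is then established by showing each guess corresponds to a genuine countermodel and vice versa.

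The core of the argument is the separation of each CQ in $Q$ into an ABox part and an anonymous part. Since $G_q$ may be connected across the ABox, I would guess a \emph{splitting}: a partial map sending some variables of $q$ to individuals in $\NI(\Amc)$, with the remaining variables forced into the anonymous tree hanging off a single individual. A match is prevented in the countermodel iff (i) every way of mapping the ABox-part variables to individuals fails against the guessed assertions and types, and (ii) every residual connected piece, rooted at an individual $a$, fails to match in the transitive-tree attached to $a$. Condition (i) is a polynomial-time consistency check against the (exponential-size) guessed type assignment, and it is folded into the choice of $\tau_a$. Condition (ii) is exactly a unary CQ-non-entailment statement over a transitive-tree with root $a$, to which Lemma~\ref{lem:to_ptq} applies: each residual connected unary CQ $q(x)$ over a single transitive role is compiled in polynomial time into polynomially many unary PTQs $\hat q_1(x),\dots,\hat q_k(x)$ that are equivalent over transitive-tree interpretations. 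The Boolean UPTQ $Q^0_a$ collects the pieces that have no connection to $a$ (pure anonymous-tree matches captured by Lemma~\ref{lem:to_ptq}'s Boolean case), while $Q^1_a$ collects the unary PTQs anchored at $a$.

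For the size guarantees, the crucial points are that Lemma~\ref{lem:to_ptq} runs in polynomial time and produces PTQs of size linear in the input CQ, and that each residual piece is a connected subquery of some $q\in Q$, hence of size at most $\|Q\|$. Thus $\|Q_a^0\|$ and each PTQ in $Q_a^1$ are linear in $\|Q\|$. The subtler bound is on the \emph{total number of subPTQs} of $Q_a^1$: a subPTQ is induced by a subtree of a cluster tree, and since every cluster-tree subtree is determined by its root cluster, the number of subPTQs of one PTQ is bounded by its number of clusters, which is linear in its size. Summing over the polynomially many PTQs output per piece and the polynomially many pieces yields the required polynomial bound; I would make sure the splitting is chosen so that distinct pieces reuse variables only through the shared anchor, preventing a combinatorial blow-up.

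The main obstacle I expect is establishing the \textbf{equivalence} direction of the iff cleanly, namely that a collection of ``local'' non-entailment certificates $\{\pair{\Tmc}{\tau_a}\not\models Q_a^0\lor Q_a^1\}_a$ can be glued into a single global countermodel of $\pair{\Tmc}{\Amc}$ refuting $Q$. The difficulty is that the transitive role $t$ lets a match of $q$ wander across several individuals and their attached trees via transitivity, so one cannot merely take the disjoint union of local countermodels: transitive edges created by the ABox assertions might open new matches not visible locally. To handle this I would carefully use the single-transitive-role restriction: because only one role is transitive, the global interpretation's $t$-reachability structure is controlled, and the cluster-tree decomposition of PTQs (Definition~\ref{def:ptq}), together with its global undirected acyclicity, ensures that any global match decomposes along individuals into the residual pieces already accounted for. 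This is precisely where the revised PTQ notion of Section~\ref{sec:ptqs} is indispensable, and where a naive reuse of~\cite{EiterLOS09} would fail.
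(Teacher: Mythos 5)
Your overall strategy is the same as the paper's (forest countermodels, guessing the ABox-level interpretation, decomposing query matches into an ABox part plus tree-anchored pieces, compiling those pieces into PTQs via Lemma~\ref{lem:to_ptq}), but there are two genuine gaps.

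First, your decomposition misses the \emph{subdivision} of transitive atoms, and this is exactly the mechanism that makes the ``iff'' work. In a forest countermodel, a match may send $x$ to an individual $a$ and $y$, with $t(x,y)\in q$, to an element strictly inside the tree attached to a \emph{different} individual $b$; the transitive edge exists because $(a,b)\in t$ holds at the ABox level and $\delta(y)$ is a $t$-descendant of $b$. Your splitting is a partial map of the \emph{original} variables of $q$ to individuals, so no split can represent this match: the atom $t(x,y)$ is neither an ABox-level atom (since $y$ is unmapped) nor an atom of a piece anchored at $b$ (since $x$ is mapped to $a\neq b$). The paper handles this by working with \emph{subdivisions} of $q$ --- each transitive atom $t(x,y)$ may be replaced by $t(x,z),t(z,y)$ with a fresh variable $z$ that the split is allowed to place at an individual (Lemma~\ref{lem:splitCQ}, Proposition~\ref{pr:reduction-single}); then $t(x,z)$ is checked against the guessed ABox-level interpretation and $t(z,y)$ becomes part of the unary piece anchored at $b$. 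Without this, gluing the local countermodels can produce a global match that none of your certificates rules out, so the ``if'' direction of your equivalence fails. Your closing appeal to the single-transitive-role restriction and the acyclicity built into PTQs does not repair this: those tools govern how a CQ matches \emph{within one} transitive tree (they are what Lemma~\ref{lem:to_ptq} needs), not how a match is cut along the ABox individuals.

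Second, your argument for the key size guarantee --- that the \emph{total number of subPTQs} of PTQs in $Q^1_a$ is polynomial in $\|Q\|$ --- rests on the claim that there are ``polynomially many pieces,'' which is unjustified and in general false. There are exponentially many splits, and the variable sets assigned to the tree of $a$ can differ across splits, so $Q^1_a$ genuinely contains exponentially many PTQs (the paper states this explicitly in Proposition~\ref{pr:reduction-single}: $Q^1_a$ has exponential size). The polynomial bound on subPTQs therefore needs a separate argument, which the paper gets from the fork-elimination equivalence $\approx$ of Definition~\ref{def:queryequivalence}: every tree-side variable set $V_a$ of an admissible split is a union of $\approx$-classes and is closed under successors, so the root cluster of each produced unary PTQ --- and with it the entire subPTQ --- is determined by a single binary atom of $q$ crossing from the ABox part into $V_a$, together with $\approx$. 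This yields only linearly many distinct subPTQs per CQ of $Q$, despite the exponential number of PTQs. Without some argument of this kind, the downstream mosaic construction (which introduces one auxiliary concept name per subPTQ) loses its polynomial bound, and the \coNExpTime\ upper bound collapses.
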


Compared to Theorem~\ref{th:reduction-rooted}, Theorem~\ref{th:reduction-single} yields queries from a broader class (UPTQs, rather than UTQs), but offers stronger size guarantees: the total number of 
subPTQs of PTQs in $Q_a^1$ is polynomial. This is a substitute for a polynomial bound on $\|Q_a^1\|$, which cannot be ensured. Importantly, the variant of entailment used in Theorem~\ref{th:reduction-rooted} is a special case of the one in Theorem~\ref{th:reduction-single}: $Q_a^0 = \emptyset$ and UTQs \red{with answer variables used in at most one binary atom} instead of UPTQs. {This enables us to treat the two cases in parallel in what follows.}

\subsection{From Entailment to Existence of Mosaics}
\label{sb:mosaics}

Our goal now is to reduce the variant of UPTQ entailment introduced in Section~\ref{sb:firststep} to the special case where only one role is used in the input. Our reduction relies on \emph{tiles}, which are interpretations using only one role name, consistent with the TBox. We show that countermodels can be represented using \emph{mosaics}, which are collections of such tiles.


Below,  $\tbox_{r}$ is the restriction of $\Tmc$ to CIs that do not mention any role names other than $r$, and $\pair{\interp}{d_0}\vDash\tbox_r$ means that $d_0\in C^\Imc$ implies $d_0\in D^\Imc$ for each CI $C \sqsubseteq D$ from $\tbox_r$.

\begin{definition}
\label{def:tile}
A \emph{tile} for a TBox $\tbox$ is a triple $\langle \interp, d_0, r\rangle$ where $\interp$ is an interpretation, $d_0\in\Delta^\interp$, and $r \in \NR$, such that $s^\Imc=\emptyset$ for all $s\in\NR-\{r\}$ and 
\begin{enumerate}
\item\label{it:tbox_trans} if $r$ is transitive, then  $\interp\vDash\tbox_r$ (and, in particular, $\interp^+=\interp$);
    \item\label{it:tbox_non_trans} if $r$ is non-transitive, then  $\pair{\interp}{d_0}\vDash\tbox_r$.
\end{enumerate}
\end{definition}

Tiles are meant to be assembled to form a countermodel to $\pair{\tbox}{\tau}\vDash Q^0\vee Q^1$. To facilitate this,  we introduce a family of fresh auxiliary concept names of the form $A_{p(x)}$, where~$p(x)$ is a subPTQ of a PTQ from $Q^0$ or 
from $Q^1$. {Intuitively,  $A_{p(x)}$ will be used to propagate the constraint that $p(x)$ must not be matched. To prevent the entire CQ from being satisfied, we recursively partition it into clusters (as described in the next paragraph) and  ensure that each tile violates suitable  clusters, augmented with atoms of the form $A_{p(x)}(x)$.}
Crucially, if $Q^0$ and $Q^1$ are  $Q_a^0$ and $Q_a^1$ from Theorem~\ref{th:reduction-single} for some $Q$, then the number of auxiliary concept names  is polynomial in $\|Q\|$. 



{Let $q$ be Boolean PTQ from $Q^0$ or the Boolean PTQ underlying a subPTQ $q(x)$ of a PTQ from $Q^0$ or from $Q^1$.}
Consider a root cluster $C$ of $q$, and the corresponding cluster tree for $q$ with $C$ in the root. 
We define $q_{C}$ as the Boolean PTQ obtained by replacing each direct subtree of the cluster tree with a single unary atom using a suitable auxiliary concept name. (Recall that a direct subtree is one rooted at a child of the whole tree's root.) More precisely, $q_C$ contains all binary atoms from $C$ along with all unary atoms of $q$ over variables used in $C$, and for each child $C'$ of $C$, sharing a variable $x$ with $C$, $q_C$ contains the atom $A_{p(x)}(x)$ where $p(x)$ is the subPTQ corresponding to the cluster subtree rooted at $C'$. 

For instance, if $q$ is the query in Figure~\ref{fig:ptq} (left), then  $q_{C_4}(x)$ is the query shown on the right. Concept names $A_{p_1(x)}$, $A_{p_2(y)}$, and  $A_{p_3(z)}$ used in $q_{C_4}(x)$ replace the sub\-PTQs of $q$ induced by $C_1$, $C_2$, and $C_3$, respectively.

We now define a mosaic as a compatible collection of tiles that correctly propagates information about subPTQs. 

\begin{definition}
    \label{def:mosaic}
    Consider a TBox $\tbox$, a type $\tau$, a Boolean UPTQ $Q^0$, and a unary UPTQ $Q^1$. A \emph{mosaic for $\tbox$ and $\tau$, and against $Q^0$ and $Q^1$}, is a set $\Mos$ of tiles for $\tbox$ such that:
    \begin{enumerate}
    
        \item 
        \label{it:forbidding_boolean} 
        for each $\langle\interp, d_0, r\rangle\in\Mos$, $q\in Q^0$, and root $r$-cluster $C$ of $q$, we have $\interp\nvDash q_{C}$;
        
        \item 
        \label{it:fresh_concepts} 
        for each $\langle\interp, d_0, r\rangle\in\Mos$, $d\in\Delta^\Imc$, auxiliary $A_{p(x)}$, and root $r$-cluster $C$ of $p$ that contains $x$, if $d\notin A_{p(x)}^\interp$ then $\pair{\interp}{d}\nvDash p_{C} (x)$;
        
        \item 
        \label{it:initial} 
        there is a family of tiles \mbox{$\langle\interp_r, d_{r}, r\rangle \in \Mos$} with \mbox{$\tp(\interp_r, d_{r}) \cap \NC(\tbox) = \tau$}, for $r$ ranging over $\NR(\tbox)$, such that for each $q(x)\in Q^1$,  $\pair{\interp_r}{d_{r}}\nvDash q_{C}(x)$ for some $r\in\NR(\tbox)$ and root $r$-cluster $C$ of $q$ that contains $x$;
        
        \item 
        \label{it:types} 
        for each $\langle\interp, d_0, r\rangle\in\Mos$,  $d\in\Delta^\interp$, and $s\in\NR(\tbox)$ such that 
        either $s \neq r$ or both $r$ is non-transitive and $d\neq d_0$, 
        there is $\langle\Jmc, e_0, s\rangle\in\Mos$ such that $\tp(\interp, d)=\tp(\Jmc, e_0)$.
    \end{enumerate}
\end{definition}
As promised, our variant of the entailment problem for UPTQs reduces to the existence of mosaics.  
\begin{restatable}{theorem}{reductionbasic}
\label{th:reductionbasic}
    For any TBox $\tbox$, $\tau \subseteq \NC(\Tmc)$, Boolean UPTQ $Q^0$, and unary UPTQ $Q^1$, $\pair{\tbox}{\tau}\nvDash Q^0\vee Q^1$ iff there is a mosaic $\Mos$ for $\tbox$ and $\tau$, and against $Q^0$ and $Q^1$.
\end{restatable}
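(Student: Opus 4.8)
The plan is to prove the two implications separately, both resting on a single \emph{decomposition} property of matches of PTQs over transitive-tree interpretations. Informally: if a (sub)PTQ $p(x)$ matches in a transitive-tree interpretation with its initial root variable $x$ sent to a node $b$, then, for a suitable cluster tree with root cluster $C$, the atoms of $C$ are matched entirely within a single single-role piece of the interpretation (the maximal region of the role of $C$ containing the image of $x$), while each child subPTQ $p'(x')$ is matched with its entry variable $x'$ sent to a node lying in that same piece; conversely, any such collection of matches recombines into a match of $p(x)$. Throughout I read the auxiliary concept name $A_{p(x)}$ as \emph{$p(x)$ is matched here}, i.e.\ I will always ensure that $b\in A_{p(x)}^\interp$ holds exactly when $\pair{\interp}{b}\models p(x)$. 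The cluster-tree axioms of Definition~\ref{def:clustertree}, together with Lemma~\ref{lem:to_ptq}, are what make this decomposition valid.

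For the implication from mosaics to non-entailment, I would first assemble a mosaic $\Mos$ into one interpretation $\interp$. I glue the family of root tiles from condition~\ref{it:initial} at their roots into a single element $a$ of type $\tau$, and then proceed top-down: whenever an element $b$ of an already placed $r$-tile and a role $s$ satisfy the premise of condition~\ref{it:types} (so that $b$ still needs $s$-successors), I attach a fresh tile for $s$ whose root has the type of $b$, identifying the two. Conditions~\ref{it:tbox_trans}--\ref{it:tbox_non_trans} make every placed element satisfy the concept inclusions mentioning its role, and condition~\ref{it:types} supplies all missing existential witnesses; since the premise of condition~\ref{it:types} never triggers re-attachment of a transitive role below its own tile, the single-role relations remain transitively closed and $\interp\models\Tmc$. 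To see $\interp\nvDash Q^0$ and $\pair{\interp}{a}\nvDash Q^1$, I prove by induction on cluster trees that $\pair{\interp}{b}\models p(x)$ implies $b\in A_{p(x)}^\interp$ for every subPTQ $p(x)$: the root cluster of the match lands in one tile, the child subPTQs yield the auxiliary atoms of $q_C$ by the induction hypothesis, so $p_C(x)$ matches in that tile and condition~\ref{it:fresh_concepts} forces $b\in A_{p(x)}^\interp$. A hypothetical match of $q\in Q^0$ then, via its topmost cluster (a genuine root cluster), makes some $q_C$ match in a tile, contradicting the \emph{universal} condition~\ref{it:forbidding_boolean}; and a hypothetical match of $q(x)\in Q^1$ with $x$ at the global root $a$ makes $q_C(x)$ match in the corresponding root tile for \emph{every} root cluster $C$ (nothing lies above $a$), contradicting the \emph{existential} condition~\ref{it:initial}.

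For the converse, I start from a countermodel of $\pair{\tbox}{\tau}\vDash Q^0\vee Q^1$ and, by the tree-model property of $\Smc$, turn it into a transitive-tree interpretation $\interp$ with root $a$ of type $\tau$, $\interp\nvDash Q^0$, and $\pair{\interp}{a}\nvDash Q^1$. I set $A_{p(x)}^\interp=\{b\mid\pair{\interp}{b}\models p(x)\}$ for every auxiliary concept, and I let $\Mos$ collect, for each node $b$ of $\interp$ and each role $r$, the tile rooted at $b$ consisting of $b$ together with the part of $\interp$ reachable from $b$ by $r$-edges (a single $r$-successor star when $r$ is non-transitive, the transitively closed $r$-reachable part when $r$ is transitive), with all concept names inherited from $\interp$. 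Condition~\ref{it:types} is immediate, as the matching $s$-tile for a node $b$ is just the $s$-tile rooted at $b$, whose root has type $\tp(\interp,b)$. Conditions~\ref{it:forbidding_boolean}--\ref{it:initial} follow from the recombination half of the decomposition: a match of $q_C$ (or $p_C$) inside an extracted tile, together with the subPTQ matches certified by the auxiliary atoms it uses (which are genuine matches in $\interp$ by our choice of $A_{p(x)}^\interp$), reassembles into a match of the full PTQ in $\interp$, contradicting $\interp\nvDash Q^0$, the defining property of $A_{p(x)}^\interp$, or $\pair{\interp}{a}\nvDash Q^1$.

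The main obstacle is the decomposition property itself. Its delicate points are exactly those flagged after Definition~\ref{def:clustertree}: cluster trees are not unique, and the entry variable of a child cluster may be initial in its parent, so a child need not be matched strictly below its parent. Consequently the root cluster of a match cannot be fixed in advance but must be read off from the match, which is why conditions~\ref{it:forbidding_boolean}--\ref{it:initial} quantify over all root clusters $C$ containing the relevant variable, and why the argument at the global root (where every root cluster works) is what lets the existential condition~\ref{it:initial} suffice while the interior argument only needs the single topmost cluster of condition~\ref{it:forbidding_boolean}. Establishing that the topmost matched cluster is always a legitimate root cluster, and that sibling subqueries land in disjoint descendant pieces, is where global undirected acyclicity of PTQs is essential.
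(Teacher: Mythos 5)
Your proposal is correct and follows essentially the same route as the paper's proof: the mosaic-to-countermodel direction uses the same top-down tile-gluing via Condition~\ref{it:types} and the same induction on cluster trees (your ``match implies membership in $A_{p(x)}$'' is exactly the contrapositive of the paper's Claim~3), and the converse direction uses the same semantic interpretation of the auxiliary concepts together with single-role restriction tiles $\langle\Imc_{d,r},d,r\rangle$ extracted from a transitive-tree countermodel. The delicate decomposition/recombination property you flag is likewise left at the same level of rigor in the paper itself.
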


The proof of Theorem~\ref{th:reductionbasic} is technical yet relatively standard. It remains to see that the existence of suitable mosaics can be decided in \NExpTime. Towards this end, we prove in the following subsection that tiles and mosaics of singly exponential size are sufficient.


\subsection{Bounding the Sizes of Tiles and Mosaics}
\label{sb:expo_sized_tiles}

We know from Theorem~\ref{th:reductionbasic} that the entailment problem reduces to the existence of a mosaic. Now, using Theorem~\ref{th:acyclic_transitive} in Section~\ref{sec:acyclic}, we can show that we can assume each tile of the mosaic to be of exponential size.

\begin{restatable}{lemma}{corboundedsize}
\label{cor:mosaic_exp_tile}
    Consider a TBox $\tbox$, $\tau \subseteq \NC(\Tmc)$, a Boolean UPTQ $Q^0$, and a unary UPTQ $Q^1$. For every  mosaic $\Mos$ for $\tbox$ and $\tau$, and against $Q^0$ and $Q^1$, there is a mosaic $\Mos'$ for $\tbox$ and $\tau$ against $Q^0$ and $Q^1$ such that 
    $|\Delta^{\interp}| \leq (|\NC(\tbox)|+1)!$ for each tile $\langle\interp, a_0, r \rangle \in \Mos'$.
\end{restatable}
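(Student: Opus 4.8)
The plan is to replace each tile in the given mosaic $\Mos$ by a small tile over the same single role name, and argue that the resulting collection $\Mos'$ is still a mosaic. Crucially, Theorem~\ref{th:acyclic_transitive} is exactly the tool for the transitive case, but it applies only to \emph{acyclic} Boolean UCQs over a transitive-tree interpretation, so I would first reformulate the forbidding conditions for each tile as non-entailment of a suitable acyclic Boolean UCQ, then shrink the tile, and finally verify that the small tile still satisfies all four conditions of Definition~\ref{def:mosaic}.

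\medskip

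\noindent\textbf{The transitive case.} Fix a tile $\langle\interp,a_0,t\rangle\in\Mos$ with $t$ transitive; then $\interp^+=\interp$, so $\interp$ is (essentially) a transitive-tree interpretation with root $a_0$. The mosaic conditions that constrain this tile are \ref{it:forbidding_boolean}, \ref{it:fresh_concepts}, and \ref{it:initial}, each of which asserts that $\interp$ (or $\pair{\interp}{a}$) does \emph{not} match certain queries of the form $q_C$ or $p_C(x)$. Each such $q_C$ is a Boolean PTQ built from a single root $t$-cluster augmented with auxiliary unary atoms $A_{p(x)}(x)$; since a $t$-cluster is acyclic by Definition~\ref{def:ptq}\ref{it:cluster_acyclicity} and the added atoms are unary, each $q_C$ is an acyclic Boolean CQ over the single transitive role $t$. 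I would collect all of these queries (over all relevant $q$, $p$, $x$, and root clusters $C$) together with, for the unary conditions, the ``anchoring'' at specific elements. To handle conditions \ref{it:fresh_concepts} and \ref{it:initial}, which forbid matches starting at a particular element $a$ rather than globally, I would use the standard trick of marking: introduce a fresh concept name to pin down the element $a$ where $p_C(x)$ is forbidden, turning ``$\pair{\interp}{a}\nvDash p_C(x)$'' into non-entailment of a Boolean query over $\interp$ augmented with this mark. Let $\Tmc'$ be $\Tmc$ together with these marking concepts (still over the single role $t$, still in normal form), and let $Q_t$ be the acyclic Boolean UCQ collecting all forbidden queries. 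By construction $\interp\models\Tmc'$ and $\interp\not\models Q_t$. Applying Theorem~\ref{th:acyclic_transitive} to $Q_t$, $\Tmc'$, $\interp$, and root $a_0$ yields a finite $\Jmc$ with $\Delta^\Jmc\subseteq\Delta^\interp$, $\tp(\Jmc,a)=\tp(\interp,a)$ for $a\in\Delta^\Jmc$, $a_0\in\Delta^\Jmc$, $\Jmc\models\Tmc'$, $\Jmc\not\models Q_t$, and $|\Delta^\Jmc|\le(|\NC(\Tmc')|+1)!$. The new tile is $\langle\Jmc^+,a_0,t\rangle$; since $\Jmc\not\models Q_t$ re-encodes exactly the forbidding conditions, the shrunken tile still satisfies \ref{it:forbidding_boolean}--\ref{it:initial}, and because types are preserved it satisfies $\pair{\Jmc}{a_0}\models\Tmc$ and the type constraint $\tp\cap\NC(\Tmc)=\tau$ required in \ref{it:initial}. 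The only care needed is to keep $|\NC(\Tmc')|$ essentially equal to $|\NC(\Tmc)|$: the marking concepts must be bounded, so one shrinks the tile separately for each of the finitely many anchored conditions, or—more cleanly—observes that a single fresh mark suffices because the anchored forbidden matches can be handled one element at a time while reusing the bound.

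\medskip

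\noindent\textbf{The non-transitive case.} For a tile $\langle\interp,a_0,r\rangle$ with $r$ non-transitive, a root $r$-cluster is by definition a star $\{r(x,y_1),\dots,r(x,y_k)\}$, so $\interp$ only needs to satisfy $\pair{\interp}{a_0}\models\Tmc_r$ and the queries $q_C$ have depth one from the cluster center. Here shrinking is elementary and does not require Theorem~\ref{th:acyclic_transitive}: I would keep $a_0$ and, for each concept name that must be realized among the $r$-successors and for each potential $r$-successor type needed by condition \ref{it:types}, retain one representative successor, discarding the rest. Since the forbidden star queries are matched by choosing the center and a tuple of successors, removing successors can only destroy matches, never create them, so conditions \ref{it:forbidding_boolean}--\ref{it:initial} are preserved; the number of retained successors is bounded by the number of distinct types, hence by $2^{|\NC(\Tmc)|}$, which is within the target factorial bound.

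\medskip

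\noindent\textbf{Global consistency and the main obstacle.} After shrinking every tile independently, I must re-establish condition \ref{it:types}, the ``type-coherence'' glue that links distinct tiles: every element of a tile with a type demanding an $s$-tile must still find a matching $s$-tile in $\Mos'$. This is the delicate point, because shrinking may remove the very element whose type was being witnessed elsewhere, or may change which types actually occur. The clean way around this is to preserve types at the level of \emph{occurring types} rather than individual elements: since each new tile keeps a subset of the old domain with identical types, the set of types occurring in $\Mos'$ is a subset of that in $\Mos$, and condition \ref{it:types} only ever requires the existence of \emph{some} tile realizing a given type as its root type. I would therefore perform the replacement tile-by-tile while maintaining the invariant that for every type appearing as $\tp(\interp,a)$ in some surviving tile, a witnessing tile is also retained; concretely, one designates for each needed type a single tile to serve as its witness and ensures the shrinking of that tile keeps the witnessing element as its root (which Theorem~\ref{th:acyclic_transitive} guarantees for $a_0$, and which one can arrange by re-rooting in the non-transitive case). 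The main obstacle, then, is bookkeeping this type-witnessing fixpoint so that the collection closes up under \ref{it:types} without reintroducing large tiles—but since there are only exponentially many types and each is witnessed by one exponentially-small tile, the total size of $\Mos'$ stays singly exponential, and the per-tile bound $(|\NC(\Tmc)|+1)!$ is exactly what Theorem~\ref{th:acyclic_transitive} delivers.
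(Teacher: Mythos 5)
Your overall strategy matches the paper's: shrink each tile individually, using Theorem~\ref{th:acyclic_transitive} for transitive tiles and an elementary successor-pruning argument for non-transitive ones, and observe that Condition~\ref{it:types} survives because shrinking preserves roots and types and only removes domain elements. However, your transitive case has two genuine gaps.

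First, a tile $\langle\interp,a_0,t\rangle$ is \emph{not} ``essentially a transitive-tree interpretation'': Definition~\ref{def:tile} only requires $\interp\models\tbox_t$, so $t^\interp$ is transitive but may contain cycles --- indeed, the small interpretations produced by Theorem~\ref{th:acyclic_transitive} itself contain cycles (Figure~\ref{fig:dicho}), so after one application of the present lemma every mosaic in sight has cyclic tiles and your argument could not even be re-applied to them. The paper first \emph{unravels} $\interp$ from $a_0$ into a transitive-tree interpretation, applies Theorem~\ref{th:acyclic_transitive} to the unraveling, and transfers the conclusion back to $\interp$ via types (every element of the shrunken $\Jmc$ shares its type with some element of $\interp$).

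Second, and this is the crux, your encoding of Condition~\ref{it:fresh_concepts} does not work. That condition is universally quantified over \emph{all} elements $a\notin A_{p(x)}^\interp$, of which there may be unboundedly many; a positive mark ``pinning down the element $a$'' either needs one fresh concept per element (unbounded), or, if you handle elements ``one at a time,'' produces a different shrunken tile for each anchor rather than a single tile satisfying all instances simultaneously. The paper's solution is a \emph{complement} marker: interpret a fresh name by $A_{\lnot p(x)}^\interp=\{a\in\Delta^\interp\mid\pair{\interp}{a}\not\models p_{C}(x)\}$, so that the single Boolean query $p_{C}\cup\{A_{\lnot p(x)}(x)\}$ has no match in $\interp$ at all; after shrinking, type preservation (types include the markers) yields, for every $a\in\Delta^\Jmc$ with $a\notin A_{p(x)}^\Jmc$, that $a\in A_{\lnot p(x)}^\Jmc$ and hence $\pair{\Jmc}{a}\not\models p_{C}(x)$. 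Your related worry that the marking concepts inflate the factorial bound --- and your decision to put them into $\Tmc'$ --- is resolved by the observation made right after Theorem~\ref{th:acyclic_transitive}: its bound counts only $\NC(\tbox)$, so the markers are placed in the query and the interpretation only, while the theorem is invoked with the unchanged TBox $\Tmc_t$, making them cost nothing. (A minor further point: in the non-transitive case, keeping one successor per type gives up to $2^{|\NC(\tbox)|}+1$ elements, which can exceed $(|\NC(\tbox)|+1)!$ for small signatures; the paper instead keeps only witnesses for existential CIs at the root, i.e., at most $|\NC(\tbox)|+1$ elements.)
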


It remains to see that the number of tiles in a mosaic can be bounded as well. Here, the arguments diverge.

In the single transitive role case, we rely on the number of auxiliary concept names, determined by the number of subPTQs of PTQs from $Q^0$ and $Q^1$, being polynomial. As one tile per type is enough, we get the following. 

 \begin{restatable}{lemma}{lemboundedtilesingle}\label{lem:bounded-tiles-single}
%
%
If there is a mosaic for $\tbox$ and $\tau$, and against a Boolean UPTQ $Q^0$ and a unary UPTQ $Q^1$, {with tiles of size at most $M$}, then there is one 
with at most $n + n\cdot 2^{n+m}$ tiles, {all of size at most $M$}, where 
$n=\|\tbox\|$ and $m$ is the total number of subPTQs of PTQs from $Q^0$ and $Q^1$.
\end{restatable}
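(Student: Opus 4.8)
The plan is to thin out the given mosaic $\Mos$ by keeping, for each possible ``external signature'' of a tile, a single representative, while separately preserving the family of tiles that witnesses condition~\ref{it:initial}. Let $\Sigma = \NC(\tbox) \cup \{A_{p(x)} \mid p(x) \text{ is a subPTQ of a PTQ from } Q^0 \text{ or } Q^1\}$ be the set of \emph{relevant} concept names, so that $|\Sigma| \le n + m$. Since none of the mosaic conditions refers to concept names outside $\Sigma$, I would first argue that one may assume without loss of generality that every tile of $\Mos$ interprets only the names in $\Sigma$ (restrict each tile otherwise; this affects none of the conditions). For a tile $\langle \interp, a_0, r\rangle$ I define its \emph{signature} to be the pair $(r, \tp(\interp, a_0))$; under this assumption the type is a subset of $\Sigma$, so there are at most $|\NR(\tbox)| \cdot 2^{|\Sigma|} \le n \cdot 2^{n+m}$ distinct signatures.

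Next, I would fix the family $F = \{\langle \interp_r, a_r, r\rangle \mid r \in \NR(\tbox)\} \subseteq \Mos$ supplied by condition~\ref{it:initial}, which consists of at most $|\NR(\tbox)| \le n$ tiles, and for every signature realized by some tile of $\Mos$ choose one tile of $\Mos$ with that signature. Let $\Mos'$ be the union of $F$ with this set of chosen representatives. By construction $\Mos' \subseteq \Mos$, every tile of $\Mos'$ still has size at most $M$, and $|\Mos'| \le n + n \cdot 2^{n+m}$.

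It remains to verify that $\Mos'$ is a mosaic for $\tbox$ and $\tau$ against $Q^0$ and $Q^1$. Conditions~\ref{it:forbidding_boolean} and~\ref{it:fresh_concepts} are local to individual tiles and are inherited from $\Mos$ because $\Mos' \subseteq \Mos$. Condition~\ref{it:initial} holds because the witnessing family $F$ was kept verbatim. The crux is condition~\ref{it:types}: given any $\langle \interp, a_0, r\rangle \in \Mos'$, any $a \in \Delta^\interp$, and any $s \in \NR(\tbox)$ satisfying the side condition, the fact that $\langle \interp, a_0, r\rangle$ already lies in the mosaic $\Mos$ yields a tile $\langle \Jmc, b_0, s\rangle \in \Mos$ with $\tp(\interp, a) = \tp(\Jmc, b_0)$; hence the signature $(s, \tp(\interp, a))$ is realized in $\Mos$, so a representative of it was placed in $\Mos'$, and that representative is exactly the witness required. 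This gives a mosaic of the desired size.

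The main obstacle I anticipate is the interaction between condition~\ref{it:initial} and the thinning. Because condition~\ref{it:initial} constrains concrete interpretations through the non-match requirement $\pair{\interp_r}{a_r} \nvDash q_C(x)$, and not merely their signatures, one cannot replace the witnessing tiles by arbitrary same-signature representatives; retaining $F$ on top of the per-signature representatives is precisely what reconciles the bound $n + n \cdot 2^{n+m}$ with correctness. Once $F$ is singled out, the rest amounts to the slogan ``one tile per signature suffices,'' since condition~\ref{it:types} only ever demands a tile of a prescribed role and root type.
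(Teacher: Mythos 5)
Your proof is correct and follows essentially the same route as the paper's: keep the family of tiles witnessing Condition~\ref{it:initial} verbatim, keep one representative tile per pair (role name, root type), and observe that Conditions~\ref{it:forbidding_boolean} and~\ref{it:fresh_concepts} are inherited by any subset while Condition~\ref{it:types} only ever asks for a tile of a prescribed role and root type. Your explicit ``restrict every tile to the relevant concept names'' step is a small added precision over the paper, which leaves implicit that tile types range over $\NC(\tbox)$ and the auxiliary names $A_{p(x)}$ when counting $2^{n+m}$ types; otherwise the two arguments coincide.
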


In the rooted case, the number of auxiliary concepts is exponential, but we observe that in a countermodel built from tiles, a rooted query can traverse only a linear number of tiles from the initial element. Because a tile of size $M$ requires at most $M\cdot \|\Tmc\|$ witnesses, a singly exponential number of tiles is sufficient to build the part of the countermodel within linear distance from the initial element. Further away, we do not care about matching the query anymore, so we only need one tile for each $\tau \subseteq\NC(\Tmc)$.

\begin{restatable}{lemma}{lemboundedtilesrooted}\label{lem:bounded-tiles-rooted}
If there is a mosaic for $\tbox$ and $\tau$ against $Q^0 = \emptyset$ and unary UTQ $Q^1$ with tiles of size at most $M$, then there is one with at most $n\cdot ((Mn)^{m+1}+2^n)$ tiles, {all of size at most $M$}, where $n=\|\tbox\|$ and $m$ is the maximal number of variables of a TQ in $Q^1$.
\end{restatable}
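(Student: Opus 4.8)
The plan is to unfold the given mosaic into a countermodel, keep only a bounded \emph{core} around the initial element, and glue a small type-saturated gadget everywhere else. View the mosaic $\Mos$ as generating a (possibly infinite) countermodel $\interp$: start from the initial tiles provided by condition~\ref{it:initial}, and repeatedly attach, for every element $a$ of an already placed tile and every role $s$ demanded by condition~\ref{it:types}, a successor tile whose root has the same full type as $a$; this is the construction underlying the ``if'' direction of Theorem~\ref{th:reductionbasic}. It presents $\interp$ as a tree of tiles with at most $|\NR(\tbox)|\le n$ roots, and, since every tile has at most $M$ elements each demanding at most $|\NR(\tbox)|\le n$ successors, with branching at most $M\cdot n$. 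Call the depth of a tile in this tree its \emph{tile-distance}.

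The decisive observation is \emph{locality}. Because $Q^0=\emptyset$ and every TQ in $Q^1$ has at most $m$ variables, any match of $Q^1$ whose answer variable is sent to the initial element $a_0$ visits only tiles of tile-distance at most $m$: a TQ is a tree rooted at its answer variable, each binary atom moves the match across at most one tile boundary, and the longest root-to-leaf path uses fewer than $m$ atoms. Hence the restriction of $\interp$ to tile-distance at most $m$ already determines whether $\pair{\interp}{a_0}\models Q^1$. Moreover, the auxiliary concepts $A_{p(x)}$ that must be set false to witness non-matching (via conditions~\ref{it:fresh_concepts}--\ref{it:initial}) are driven by the nesting of subPTQs, whose depth is bounded by the number of query variables; so beyond tile-distance $m$ no auxiliary concept needs to be false and all of them may be set true.

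I would then assemble $\Mos'$ in two parts. The \emph{core} consists of all tiles of tile-distance at most $m$, taken verbatim from $\Mos$; being a tree of depth $m$ with at most $n$ roots and branching at most $Mn$, it contains at most $n\sum_{i=0}^{m}(Mn)^i\le n\,(Mn)^{m+1}$ tiles. Beyond the core I discard the original tiles and attach a \emph{generic region}: for every pair consisting of a type $\tau'\subseteq\NC(\tbox)$ and a role $s\in\NR(\tbox)$ realized in $\Mos$, I keep one tile of $\Mos$ with root type $\tau'$ and role $s$, but with all auxiliary concepts set to true. Since $\tbox$ does not constrain the fresh auxiliary concepts, these remain valid tiles of size at most $M$; there are at most $2^{|\NC(\tbox)|}\cdot|\NR(\tbox)|\le n\cdot 2^{n}$ of them, and---because among them the full type is determined by its $\NC(\tbox)$-part---they are closed under the demand relation of condition~\ref{it:types} (every demanded $(\tau'',s')$ is realized in $\Mos$ by condition~\ref{it:types}, hence has a representative). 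The total count is thus at most $n\bigl((Mn)^{m+1}+2^{n}\bigr)$, with all tiles of size at most $M$.

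It remains to verify that $\Mos'$ is a mosaic against $\emptyset$ and $Q^1$. Condition~\ref{it:forbidding_boolean} is vacuous; condition~\ref{it:fresh_concepts} is inherited by the core tiles and holds trivially for the generic tiles, whose all-true auxiliary labeling falsifies its premise; and condition~\ref{it:initial} is preserved because the initial tiles lie in the core. The delicate point---and the main obstacle---is condition~\ref{it:types} at the \emph{frontier}, where a core tile at tile-distance $m$ demands a successor: the demand carries the \emph{full} type of a frontier element, so a generic tile can satisfy it only if that element has all auxiliary concepts true. This is exactly what locality is meant to guarantee, but making it precise requires carefully correlating three quantities---the number of query variables, the nesting depth of subPTQs recorded by the auxiliary concepts, and tile-distance---and arguing that past distance $m$ the recorded subPTQs are exhausted. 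Establishing this correspondence, rather than the counting, is the crux of the argument.
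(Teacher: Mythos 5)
There is a genuine gap, and it sits exactly where you flag it: Condition~\ref{it:types} at the frontier. Your construction keeps the core tiles \emph{verbatim} from $\Mos$, but the core tiles exist precisely because their elements have certain auxiliary concepts $A_{p(x)}$ set to \emph{false} (that is how a mosaic blocks query matches), and Condition~\ref{it:types} demands, for every element of every tile, a successor tile whose root has the \emph{same full type}, auxiliary concepts included. A frontier element with some $A_{p(x)}$ false can therefore never be served by one of your generic all-true tiles, so $\Mos'$ as you define it is simply not a mosaic. The locality intuition (``the query cannot reach past tile-distance $m$'') cannot repair this on its own, because the mosaic conditions are purely local type-matching constraints with no notion of distance from $a_0$; to exploit locality you must \emph{change the tiles in the core}, not just truncate and append. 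Deferring this to ``a careful correlation of three quantities'' leaves unproved the one step that is the actual content of the lemma --- the counting, as you note, is easy.

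The paper's proof supplies precisely the missing device: a \emph{level-indexed progressive saturation} of the core. Assign to each auxiliary concept $A_{p(z)}$ the level of the variable $z$ (its distance from the answer variable in its TQ), and replace every tile $\langle\Imc,a_0,r\rangle\in\Mos$ by $m{+}1$ copies $\langle\Imc_i,a_0,r\rangle$, $0\le i\le m$, where $\Imc_i$ sets all auxiliary concepts of level $<i$ to the full domain and leaves the rest unchanged. Saturation can only falsify the premise of Condition~\ref{it:fresh_concepts}, and the level discipline ensures that whenever $A_{p(z)}$ is still false in $\Imc_i$ (so $z$ has level $\ge i$), the deeper auxiliary concepts occurring in $p_C(z)$ are also unsaturated, so the refutation carries over; Condition~\ref{it:initial} is untouched because the initial tiles are the unsaturated copies. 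One then picks a depth-$m$ tree of copies in which depth-$i$ tiles are $i$-saturated --- Condition~\ref{it:types} from depth $i$ is witnessed at depth $\min(m,i{+}1)$ --- and by depth $m$ \emph{all} auxiliary concepts are true, so the full type of every frontier element collapses to its $\NC(\Tmc)$-part and the tail can indeed be capped by one fully saturated tile per type and role, which is your generic region. In short: your architecture (bounded core plus saturated tail, with your exact count) matches the paper's, but the core must consist of the graded copies $\Imc_0,\dots,\Imc_m$ rather than of the original tiles; without that modification the frontier demand is unsatisfiable and the proof does not go through.
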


\subsection{Wrapping Up}
\label{sec:wrappingup}

Combining Theorems~\ref{th:reduction-rooted} and~\ref{th:reduction-single} (Step 1), Theorem~\ref{th:reductionbasic} (Step 2), and Lemmas~\ref{cor:mosaic_exp_tile}--\ref{lem:bounded-tiles-rooted} (Step 3), we obtain the following. 

\begin{corollary}
\label{cor:nearly_there}
There is a \NExpTime algorithm that, given a KB $\pair{\Tmc}{\Amc}$ and rooted UCQ $Q$ or a Boolean UCQ $Q$ using at most one transitive role, computes for each $a\in\NI(\Amc)$ a set $\tau_a \subseteq \NC(\Tmc)$, a Boolean PTQ $Q^0_a$ and a unary UPTQ $Q^1_a$, both containing PTQs of linear size only, such that $\pair{\Tmc}{\Amc} \not\models Q$ iff there is a run of the algorithm such that for all $a\in\NI(\Amc)$ there is a mosaic $\Mos_a$ for $\Tmc$ and $\tau_a$ against $Q^0_a$ and $Q^1_a$, of size bounded exponentially in $\|\Tmc\|+\|Q\|$ using tiles of size bounded exponentially in $\|\Tmc\|$. 
\end{corollary}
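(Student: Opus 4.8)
The plan is to chain the three steps that have already been established, making sure that the quantitative bounds compose to a single exponential. The algorithm is exactly the Step-1 algorithm: for a rooted input $Q$ I would invoke the \NExpTime\ procedure of Theorem~\ref{th:reduction-rooted}, and for a Boolean input $Q$ using at most one transitive role I would invoke the \NExpTime\ procedure of Theorem~\ref{th:reduction-single}. In both cases it outputs, on each run, a set $\tau_a\subseteq\NC(\Tmc)$, a Boolean UPTQ $Q^0_a$, and a unary UPTQ $Q^1_a$ for every $a\in\NI(\Amc)$; in the rooted case $Q^0_a=\emptyset$ and $Q^1_a$ is a UTQ whose answer variable occurs in at most one binary atom, which (as noted after Definition~\ref{def:ptq}) is a special case of a unary UPTQ. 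Both theorems guarantee that $\pair{\Tmc}{\Amc}\not\models Q$ iff there is a run on which $\pair{\Tmc}{\tau_a}\not\models Q^0_a\vee Q^1_a$ for all $a$, and both yield PTQs of linear size.

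Next I would invoke Step~2. By Theorem~\ref{th:reductionbasic}, for each $a$ we have $\pair{\Tmc}{\tau_a}\not\models Q^0_a\vee Q^1_a$ iff there exists a mosaic $\Mos_a$ for $\Tmc$ and $\tau_a$ against $Q^0_a$ and $Q^1_a$. Substituting this equivalence into the Step-1 characterization yields verbatim the biconditional claimed in the corollary: $\pair{\Tmc}{\Amc}\not\models Q$ iff there is a run on which a mosaic $\Mos_a$ exists for every $a$. What remains is to argue that these mosaics can be taken of the claimed size.

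For the size bounds I would apply Step~3 in two stages. First, Lemma~\ref{cor:mosaic_exp_tile} replaces any mosaic by one whose tiles all have size at most $M:=(|\NC(\Tmc)|+1)!$, which is singly exponential in $\|\Tmc\|$. Then I would bound the number of tiles, and here the two cases diverge. In the single-transitive-role case, Lemma~\ref{lem:bounded-tiles-single} produces a mosaic with at most $n+n\cdot 2^{n+m}$ tiles, still of size $\leq M$, where $n=\|\Tmc\|$ and $m$ is the total number of subPTQs of PTQs from $Q^0_a$ and $Q^1_a$; by Theorem~\ref{th:reduction-single} this $m$ is polynomial in $\|Q\|$, so the tile count is $2^{\mathrm{poly}(\|\Tmc\|+\|Q\|)}$. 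In the rooted case, Lemma~\ref{lem:bounded-tiles-rooted} produces at most $n\cdot((Mn)^{m+1}+2^n)$ tiles, where $m$ is now the maximal number of variables of a TQ in $Q^1_a$; since $\log M = O(n\log n)$ and, by Theorem~\ref{th:reduction-rooted}, $m$ is linear in $\|Q\|$, the exponent $(m+1)(\log M+\log n)$ is polynomial in $\|\Tmc\|+\|Q\|$, so again the tile count is singly exponential in $\|\Tmc\|+\|Q\|$. Combining the exponential tile size $M$ with the exponential tile count gives a mosaic of total size singly exponential in $\|\Tmc\|+\|Q\|$, as required.

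The argument is essentially bookkeeping, so I do not expect a genuine obstacle; the one point demanding care is that the two distinct size guarantees from Step~1 feed the two distinct tile-counting lemmas in such a way that both still yield a single exponential. Concretely, the stronger guarantee of Theorem~\ref{th:reduction-single} (polynomially many subPTQs, even though individual PTQs may be large) is exactly what controls $m$ in Lemma~\ref{lem:bounded-tiles-single}, whereas in the rooted case the linear bound on query size from Theorem~\ref{th:reduction-rooted} suffices only because it bounds the number of variables $m$ in Lemma~\ref{lem:bounded-tiles-rooted}, whose dependence on $M$ enters merely polynomially in the exponent. Verifying that these substitutions keep everything within $2^{\mathrm{poly}(\|\Tmc\|+\|Q\|)}$ is the sole place where attentiveness is needed; everything else is a direct composition of the cited results.
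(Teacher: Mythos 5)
Your proposal is correct and follows exactly the paper's route: the paper proves this corollary simply by composing Theorems~\ref{th:reduction-rooted} and~\ref{th:reduction-single} (Step~1), Theorem~\ref{th:reductionbasic} (Step~2), and Lemmas~\ref{cor:mosaic_exp_tile}--\ref{lem:bounded-tiles-rooted} (Step~3), which is precisely your chaining, including the observation that the rooted case (UTQs with answer variable in at most one binary atom, $Q^0_a=\emptyset$) is a special case of the UPTQ setting. Your explicit verification that $(|\NC(\Tmc)|+1)!$ is singly exponential in $\|\Tmc\|$ and that the exponent $(m+1)(\log M+\log n)$ in the rooted tile count stays polynomial in $\|\Tmc\|+\|Q\|$ is exactly the bookkeeping the paper leaves implicit.
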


{Theorem  \ref{th:coNExp} now follows} easily, because after computing $Q_a^0$ and 
$Q_a^1$ for each $a \in \NI(\Amc)$, the algorithm from Corollary~\ref{cor:nearly_there} can guess a suitable mosaic $\Mos_a$ for each $a$. Verifying that $\Mos_a$ is indeed a mosaic for $\Tmc$ and $\tau_a$ against $Q_a^0$ and 
$Q_a^1$ can be done in time exponential in $\|\Tmc\| + \|Q\|$.

\section{Conclusions}
\label{sc:conclusions}

Contrary to previous expectations, UCQ entailment in $\Smc$ turns out to be \TwoExpTime-complete, even when restricted to trivial ABoxes and CQs with two transitive roles. 
On the positive side, both entailment of rooted UCQs and entailment of UCQs using at most one transitive role are \coNExpTime-complete and thus easier. We note a curious dependence of the complexity of the problem on the number of transitive roles allowed in queries: \ExpTime for $0$, \coNExpTime for $1$, and \TwoExpTime for at least $2$. 

Our results partly apply to UCQ entailment over finite interpretations. Indeed, it is easy to check that our \TwoExpTime-hardness proof also works in the finite case. A matching upper bound follows from~\cite{GogaczIM18}. {On the other hand, it is an open question if our $\coNExpTime$ upper bounds hold in the finite case, too.}

\bibliography{aaai26}


\cleardoublepage

\appendix

\section{Appendix for Section~\ref{sec:lower}}
\label{ap:lower}

We give a full proof of Theorem~\ref{thm:2expComp}, restated here for the reader's convenience.

\thmlower*

\new{As explained in the main part of the paper, the upper bound follows from a number of existing results. To establish the lower bound, we closely follow the one for UCQ entailment in \SH proved in ~\cite[Theorem~1]{EiterLOS09}, but for the sake of completeness, we give most of the arguments here as well, even if they are the same as in the original paper. The main novelty in our proof lies in the construction of the query, where we cannot rely on role hierarchies to create transitive super-roles that allow us to jump over multiple non-transitive  edges.  While the main idea was illustrated in Section~\ref{sec:lower}, where we sketched a hardness argument for \emph{unions} of CQs, more work is needed to  show hardness for CQs. It will involve a different cell gadget and a modification of the query, both in line with the original proof from  \cite{EiterLOS09}.}

\smallskip
As in the proof sketch for UCQs, we reduce the word problem of exponentially space-bounded alternating Turing machines (ATMs), which we briefly recall next. An ATM is given as a 5-tuple $\Mmc =(Q,\Sigma,\Gamma,q_0,\delta)$ where:
\begin{itemize}

    \item  $Q$ is a finite set of states partitioned into $Q=Q_\exists\uplus Q_\forall\uplus\{q_\mathsf{rej}\}\uplus\{q_\mathsf{acc}\}$, 
    
    \item $\Sigma$ is the input alphabet, 

    \item $\Gamma$ is the tape alphabet \new{such that $\Sigma \subseteq \Gamma$}, 
    
    \item $q_0\in Q$ is the initial state, and 
    
    \item $\delta$ is the transition function that assigns to every $(q,a)\in Q\times \new{\Gamma}$ a set $\delta(q,a)$ of transitions $(q',b,\ell)\in Q\times \new{\Gamma} \times\{-1,+1\}$, expressing that upon reading $a$ in state $q$ the ATM can write $b$, update its state to $q'$, and move the head by $\ell$ tape cells. 
\end{itemize}
A \emph{configuration} is a word $w'qw''$ with $q\in Q$ and $w',w''\in \Gamma^*$. 
\emph{Successor configurations} of $w'qw''$ are defined as usual: the first position of $w''$ is the one to be rewritten when applying a transition.  We assume  without loss of generality that there are no infinite sequences of successor configurations\new{, the initial state $q_0$ cannot be reached via a transition, and  for each $q\in Q_\forall$ and $a\in\Gamma$, there are no two transitions in $\delta(q,a)$ resulting in the same state.}
We call a configuration $w'qw''$ \emph{existential} if $q \in Q_\exists$, and \emph{universal} if  $q\in Q_\forall$. 
Accepting configurations are defined recursively, as follows. Configuration  $w'qw''$ is \emph{accepting} if:
\begin{itemize}

    \item $q=q_\mathsf{acc}$, or
    
    \item $q\in Q_\exists$ and some successor configuration of $w'qw''$  is accepting, or
    
    \item $q\in Q_\forall$ and all successor configuration of $w'qw''$ are accepting.
    
\end{itemize}
An input word $w$ is accepted by \Mmc if the \emph{initial configuration} $q_0w$ is accepting. The \emph{word problem of \Mmc} is to decide whether a given word $w\in \Sigma^*$ is accepted by \Mmc. 
\new{We call $\Mmc$  \emph{$f(n)$ space-bounded} if every sequence of successor configurations that begins from $q_0w$ with $|w|=n$ contains only configurations of the form $w'qw''$ where $|w'w''| \leq f(n)$.}
It is well-known that there is a fixed $2^n$ space-bounded ATM \Mmc whose word problem is \TwoExpTime-hard~\cite{chandraAlternation1981}. 

It is helpful to view the behavior of an ATM on input $w$ in terms of its \emph{computation tree}, which is a tree of configurations whose root is labeled with the initial configuration $q_0w$, every node labeled with an existential configuration $w'qw''$ has one successor which is labeled with a successor configuration of $w'qw''$, and every node labeled with a universal configuration $w'qw''$ has a successor node for each successor configuration of $w'qw''$. Recall that we assume that there are no infinite sequences of successor configurations, so each path in a configuration tree is finite. 
An ATM then accepts its input $w$ if there is a computation tree all of whose leaves are labeled with configurations having the state $q_\mathsf{acc}$. 

We are now ready to provide the reduction. Let us fix a  $2^n$ space-bounded ATM \Mmc whose  word problem is \TwoExpTime-hard. Given an input word $w\in \Sigma^*$ of length $n$, we will show how to compute in time polynomial in $n$ a KB $\Kmc_w=\pair{\Tmc_w}{\Amc_w}$ and a (Boolean) CQ $q_w$ such that 
\[\Mmc\text{ accepts }w\quad\text{ iff }\quad \Kmc_w\not\models q_w.\]
%
Intuitively, models of $\Kmc_w$ will represent accepting computation trees of \Mmc on input $w$, up to cell-content copying errors,  
while $q_w$ will detect  copying errors. This way, every model \Imc of $\Kmc_w$ with $\Imc\not\models q_w$ will represent an accepting computation tree of $\Mmc$ on $w$.  The ABox $\Amc_w$ is very simple, consisting of a single fact $R(a)$; the main work is thus done by $\Tmc_w$. The TBox uses two transitive role names $t_1,t_2$, and throughout the proof we denote their composition with $\alpha$. That is, in an interpretation \Imc, $\alpha^\Imc=t_1^\Imc\circ t_2^\Imc$ is the set of all pairs $(d,e)$ in which $e$ is reachable from $d$ by first following a $t_1$-edge and then a $t_2$-edge.

\begin{figure}
\centering
\begin{tikzpicture}[>=latex,thin,point/.style={circle,draw=black,fill,minimum size=1.2mm,inner sep=0pt},scale=1.0, bend angle=45]
    
  \footnotesize

  \node[point, label=left:{$R$}] (xr) at (-1,0) {};
  \node[point,label=above:{$(q_0,a_0,i_0)$}] (x0) at (0,0) {};
  \node[point] (y0) at (1,0) {};
  \node[point,label=above:{$(q_1,a_1,i_1)$}] (x1) at (2,0) {};
  \node[point] (y1) at (3,0) {};
  \node[point,label=above:{$(q_2,a_2,i_2)$}] (x2) at (4,0) {};
  \node[point] (y2) at (5,0) {};
  \node[point,label=above:{$(q_3,a_3,i_3)$}] (x3) at (6,0) {};
  \node (x4) at (7,0) {$\ldots$};
  \node[point,label=above:{$(q_4,a_4,i_4)$}] (x2') at (4,-1) {};
  \node[point] (y2') at (5,-1) {};
  \node[point,label=above:{$(q_5,a_5,i_5)$}] (x3') at (6,-1) {};
  \node (x4') at (7,-1) {$\ldots$};  
  \node[point] (y1') at (3.3,-1) {};
  
  \draw[->] (xr)-> node[below]{$\alpha$} (x0);
  \draw[->] (x0)-> node[below]{$\alpha$} (y0);
  \draw[->] (y0)-> node[below]{$\alpha$} (x1);
 \draw[->] (x1)-> node[below]{$\alpha$} (y1);
 \draw[->] (y1)-> node[below]{$\alpha$} (x2);
 \draw[->] (x2)-> node[below]{$\alpha$} (y2);
 \draw[->] (y2)-> node[below]{$\alpha$} (x3);
 \draw[->] (x3)-> node[below]{$\alpha$} (x4);
 \draw[->] (x2')-> node[below]{$\alpha$} (y2');
 \draw[->] (y2')-> node[below]{$\alpha$} (x3');
 \draw[->] (x3')-> node[below]{$\alpha$} (x4');
 \draw[->] (x1)-> node[below]{$\alpha$} (y1');
 \draw[->] (y1')-> node[below]{$\alpha$} (x2');    

 \draw[color=gray!30] (x2') -- (5.2, -3) -- (2.8, -3) -- (x2');
 \draw[color=gray!30] (x2) -- (4.5, -.5) -- (3.5, -.5) -- (x2);
 \draw[color=gray!30] (x0) -- (-0.5, -.5) -- (0.5, -.5) -- (x0);
 \draw[color=gray!30] (x1) -- (1.5, -.5) -- (2.5, -.5) -- (x1);
 \draw[color=gray!30] (x3) -- (5.5, -.5) -- (6.5, -.5) -- (x3);
 \draw[color=gray!30] (x3') -- (5.5, -1.5) -- (6.5, -1.5) -- (x3');
 
 \node at (4,-2.7) {\textcolor{gray}{config.~tree}};

 \node[circle,draw=blue,minimum
    size=1.2mm,inner sep=0pt] (n) at (1, -1) {};
 \node[point,color=blue,label=left:{$u_p$}] (np) at (0.2, -1.5) {};
 \node[point,color=blue] (np1) at (0.2, -2.2) {};
 \node[point,color=blue,label=left:{$v_p$}] (np2) at (0.2, -2.9) {};
 \node[point,color=blue,label=left:{$w_p$},label=below:{$G_p$}] (np3) at (0.2, -3.6) {};
 \node[point,color=blue,label=right:{$u_h$}] (nh) at (1.8, -1.5) {};
 \node[point,color=blue] (nh1) at (1.8, -2.2) {};
 \node[point,color=blue,label=right:{$v_h$}] (nh2) at (1.8, -2.9) {};
 \node[point,color=blue,label=right:{$w_h$},label=below:{$G_h$}] (nh3) at (1.8, -3.6) {};
 \node[point,color=blue] (np2l) at (0.8, -3.5) {};
 \node[point,color=blue] (nh2l) at (1.2, -3.5) {};
 \node[point,color=blue] (np1l) at (0.8, -2.8) {};
 \node[point,color=blue] (nh1l) at (1.2, -2.8) {};

 \draw[->,blue] (n) -- node[above]{$\alpha$} (np); 
 \draw[->,blue] (np) -- node[left]{$t_1$}(np1); 
 \draw[->,blue] (np1) -- node[left]{$t_2$}(np2); 
 \draw[->,blue] (np2) -- node[left]{$t_1$}(np3); 
 
 \draw[->,blue] (n) -- node[above]{$\alpha$}(nh); 
 \draw[->,blue] (nh) -- node[right]{$t_1$}(nh1); 
 \draw[->,blue] (nh1) -- node[right]{$t_2$} (nh2); 
 \draw[->,blue] (nh2) -- node[right]{$t_1$} (nh3); 
 
 \draw[->,blue] (np1) -- node[right]{$t_1$} (np1l);
 \draw[->,blue] (nh1) -- node[left]{$t_1$} (nh1l);
 \draw[->,blue] (np2) -- node[right]{$t_2$} (np2l);
 \draw[->,blue] (nh2) -- node[left]{$t_2$} (nh2l);

  \node[circle,draw=blue,minimum
    size=1.2mm,inner sep=0pt] at (5.2,-3) {};
  \node[circle,draw=blue,minimum size=1.2mm,inner sep=0pt] (g) at (4.7,-3) {};
  \node at (4,-3.2) {$\textcolor{blue}{\ldots}$};
  \node[circle,draw=blue,minimum
    size=1.2mm,inner sep=0pt] at (3.3,-3) {};
  \node[circle,draw=blue,minimum
    size=1.2mm,inner sep=0pt] at (2.8,-3) {};

   \draw[color=blue] (g) -- (5.1, -4) -- (4.3, -4) -- (g);

   \draw[->,dashed,color=blue!30] (4.2,-3.7) -- (2.5,-3.7);
   
  \node at (1,-4.5) {$\text{(I)}$};
  
  \node at (4,-4.5) {$\text{(II)}$};

 \end{tikzpicture}
\caption{Structure of Models}
\label{fig:structure-of-models}
\end{figure}
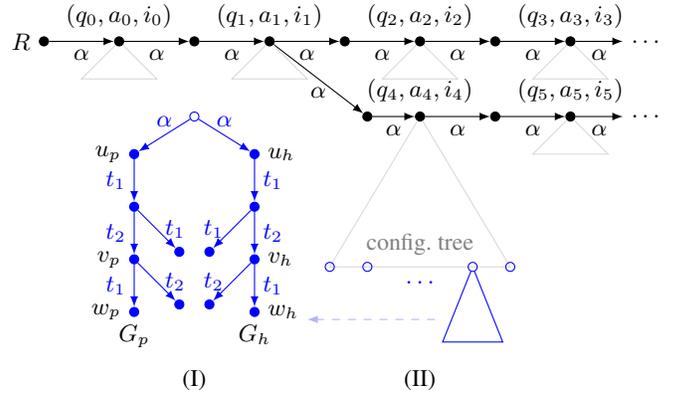

The intended models of the TBox $\Tmc_w$ are illustrated in Figure~\ref{fig:structure-of-models}.
The TBox $\Tmc_w$ is designed in such a way that its transitive-tree models rooted at an element satisfying $R$ encode a computation tree as follows. As indicated in the upper part of Figure~\ref{fig:structure-of-models}, every other node is labeled with a triple $(q,a,i)$ where $q\in Q$, $a\in\Gamma$ and $i\in [2^n] = \{0,1, \dots, 2^n-1\}$, suitably encoded using concept names. This tree structure is supposed to mimick the structure of the computation tree of \Mmc; every $(q,a,i)$-labeled node has further $\alpha$-successors which together encode the actual configuration of this node in the computation tree (as in Section~\ref{sec:lower}). The label $(q,a,i)$ provides the following abstract information about the stored configuration: 
\begin{itemize}
    \item $q$ is the state in the configuration,
    \item $a$ is the symbol under the  head, and 
    \item $i$ is the head's position.
\end{itemize}
%

We represent a configuration of \Mmc in the leaves of a \emph{configuration tree}, which is a binary tree of height~$n$, built from $\alpha$-edges. That is, each leaf is reached from the root by following a path of $n$ $\alpha$-edges. In Figure~\ref{fig:structure-of-models}, configuration trees are shown as gray triangles starting in  $(q,a,i)$-nodes. It is not difficult to axiomatize the described general structure of models in a TBox.\footnote{As our reduction draws from the one developed for \SH in~\cite{EiterLOS09} and the concrete construction of the TBox is rather standard given the underlying ideas, we refrain from giving the TBox here in its full details and rather describe its intended behavior.} 
The challenge, however, is to synchronize neighboring configurations. The main trick is that each configuration tree will represent \emph{two} configurations: the \emph{current} and the \emph{previous} one. The TBox ensures that in a given configuration tree, the represented current configuration is actually a successor configuration of the represented previous configuration. The query $q_w$ is then used to guarantee throughout the computation tree that the current configuration stored in a configuration tree equals the previous configuration stored in each of the successor configuration trees. We use the following concept names in the encoding: 
\begin{itemize}

  \item $B_0,\ldots,B_{n-1}$ to address the cells in a configuration,
  
  \item $X_q$ for each $q\in Q\cup\{\Box\}$, and $Y_a$ for each $a\in\Gamma$ to represent the contents of configurations, 
  
  \item auxiliary symbols $Z_{q,a}$ for $q\in Q\cup\{\Box\}, a\in \new{\Gamma}$, and
  
  \item additional concept names $G_p, G_h$ used as markers. 
  
\end{itemize}
\newcommand{\val}{\ensuremath{\mathsf{val}}\xspace}
(Note the differences with the naming convention in the main part of the paper.)
We write $\val_\Imc(d)$ for the number in $[2^n]$ encoded in binary by the interpretation of $B_0,\ldots,B_{n-1}$ at element $d\in \Delta^\Imc$. A tape cell with address $i$ and content $a\in \Gamma$ is represented by an element $d$ with $\val_\Imc(d)=i$ satisfying $Y_a$. If the head is currently placed over cell $i$ and the machine is in state $q$, then $d$ additionally satisfies $X_q$; otherwise, $d$ satisfies $X_\Box$.

We define now the central gadget, see Figure~\ref{fig:structure-of-models}~(I); the indices $\cdot_h$ and $\cdot_p$ used in the definition stand for \emph{here} and \emph{previous} (configuration).\footnote{This is an analogue of Definition~1 in~\cite{EiterLOS09}.} The gadget is similar to the one in Figure~\ref{fig:config-trees} used for UCQs in Section~\ref{sec:lower}, but it stores the current configuration and the previous one in separate branches to facilitate dealing with CQs. 

\begin{definition}[$i$-cell]\label{def:icell} Let \Imc be an interpretation and $i<2^n$. We call $d\in \Delta^\Imc$ an $i$-cell if the following hold: 
\begin{enumerate}

\item $d$ has $\alpha$-successors $u_p$, $u_h$ with $\val_\Imc(u_p)=\val_\Imc(u_h)=i$, each satisfying exactly one $X_q,q\in Q\cup\{\Box\}$, and exactly one $Y_a,a\in\Gamma$; 

\item $u_p$ and $u_h$ have $\alpha$-successors $v_p$ and $v_h$, respectively, such that $\val_\Imc(v_p)=\val_\Imc(v_h)$ is the bitwise complement of $i$ and for each $q\in Q\cup\{\Box\}$ and each $a\in \Gamma$, 
\begin{enumerate}[label=(\roman*)]

    \item \label{it:uh'_and_up} both $v_h$ and $u_p$ satisfy $Z_{q,a}$,
    
    \item \label{it:uh} 
    \new{$u_h$ satisfies $Z_{q,a}$ unless $u_h$ satisfies both $X_q$ and $Y_{a}$,}
    \item \label{it:up'} 
    \new{$v_p$ satisfies $Z_{q,a}$ unless $u_p$ satisfies both $X_q$ and $Y_{a}$;}
    
\end{enumerate}

\item $v_p$ and $v_h$ have $t_1$-successors $w_p$ and $w_h$ satisfying $G_p$ and $G_h$, respectively;


\item the elements between $u_p$ and $v_p$ and 
between $u_h$ and $v_h$ have (anonymous) $t_1$-successors, and $v_p$ and $v_h$ have (anonymous) $t_2$-successors, as shown in Figure~\ref{fig:structure-of-models}~(I).

\end{enumerate}   
\end{definition}

Hence, elements $u_p$ and $u_h$ store the actual content of the tape cells of the previous and the current configuration, respectively. Elements $v_p,w_p,v_h,w_h$ are needed for the synchronization later on. \new{Note that in items (i)-(iii) these elements are not treated symmetrically. For the current configuration, the cell content $(q,a)$ is encoded by the absence of $Z_{q,a}$ in $u_h$, whereas for the previous configuration, it is encoded by the absence of $Z_{q,a}$ in $v_p$, a child of $u_p$. This difference of one level, along with the negation involved in the encoding, is what will allow us to detect copying errors with a single CQ. }

We now formally define $(q,a,i)$-configuration nodes, the roots of configuration trees.\footnote{This is an analogue of Definition~2 in~\cite{EiterLOS09}. \new{Condition~3 is not part of the conference version, but is used in the technical report~\cite{EiterLOS09TR}}.} 
Below, by an \emph{$\alpha^m$-successor} we mean an element reachable by traversing $m$ $\alpha$-edges.

\begin{definition}[$(q,a,i)$-configuration node] 
\label{def:config-node}
An element $d$ of an interpretation $\Imc$ is a \emph{$(q,a,i)$-configuration node} if:
\begin{enumerate}

    \item \new{for every $j<2^n$, $d$ has an $\alpha^n$-successor $e_j$ that is a $j$-cell, referred to as \emph{the $j$-cell of $d$};}
    
    \item \new{the $u_h$-elements of the $j$-cells $e_j$, $j<2^n$, describe a configuration compatible with $q,a,i$, that is,}
    \begin{itemize}
    
        
        \item the $u_h$-element of $e_i$ satisfies $X_q$ and $Y_a$;
        
        \item the $u_h$-elements of all $e_j$ with $j\neq i$ satisfy $X_\Box$.
        
    \end{itemize}

    \item \new{if $q\neq q_0$, then the $u_p$-elements of the $j$-cells $e_j$, $j<2^n$, describe a configuration (in the way specified above) such that the configuration stored in the $u_h$-elements is a successor configuration of the one stored in the $u_p$-elements.}
    
\end{enumerate}

\end{definition}


\new{The reader might have noticed that Definition~\ref{def:config-node} does not mention any tree structure, unlike the intuition given at the beginning of the proof. Following \cite{EiterLOS09}, we include only those assumptions that are necessary to construct the query $q_w$. An auxiliary tree structure is used when axiomatizing configuration nodes with a TBox, but the query need not be aware of it. }

The intended interpretations encoding computations of $\Mmc$ are built from $(q,a,i)$-configuration nodes as follows.

\begin{definition}[computation tree]
\label{def:computation-tree}
A transitive-tree interpretation $\Imc$ is a \emph{computation tree for} $w$ if the following holds:
\begin{enumerate}

    \item the root $d_0$ of $\Imc$ has an $\alpha$-successor $d$ that is a $(q_0,a,0)$-configuration node whose $i$-cells describe the initial configuration for input $w$; 
    
    \item \label{it:succell}
    for each $(q,a,i)$-configuration node $d$, if $q\in Q_\exists$ (resp., $q\in Q_\forall$), then for some (resp., for all) transitions $(p,\b,\ell)\in \delta(q,a)$ there is an $\alpha^2$-successor $e$ of $d$ that is a $(p,\new{a'} ,i{+}\ell)$-configuration node \new{for some $a'\in\Gamma$}. 
\end{enumerate}
\end{definition}

We call the computation tree \Imc \emph{accepting} if $q=q_\mathsf{acc}$ in each $(q,a,i)$-configuration node without successor configuration nodes. Furthermore, $\Imc$ is \emph{proper} if for each pair of successive configuration nodes $d,e$ as in Item~\ref{it:succell} in Definition~\ref{def:computation-tree} and each $j<2^n$, the $j$-cell of $d$ has the same $(X_q,Y_a)$-label in its $u_h$-element as the $j$-cell of $e$ in its $u_p$-element\new{; that is, the configuration described in the $u_p$-elements of the $j$-cells of $e$ is the same as the one described by the $u_h$-elements in the $j$-cells of $d$.}

Based on the provided intuitions, it is routine to prove the following characterization of acceptance. 

\begin{proposition}
    $\Mmc$ accepts $w$ iff there is a proper accepting computation tree for $w$.
\end{proposition}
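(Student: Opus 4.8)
The plan is to prove the two implications separately, in each case establishing the correspondence between the recursive ATM notion of acceptance and the geometric structure captured by Definitions~\ref{def:icell}--\ref{def:computation-tree}.

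\textbf{From acceptance to a proper accepting computation tree.} Suppose $\Mmc$ accepts $w$ and fix a classical computation tree $T$ witnessing this: its root is labelled $q_0w$, each existential node has exactly one successor configuration, each universal node has one successor per transition, and every leaf carries the state $q_\mathsf{acc}$ (such a $T$ exists by the recursive definition of acceptance together with the assumption that successor sequences are finite). I would turn $T$ into a transitive-tree interpretation $\Imc$ by realizing each node $\nu$ of $T$, say with configuration $w'qw''$, head position $i$, and scanned symbol $a$, as a $(q,a,i)$-configuration node in the sense of Definition~\ref{def:config-node}. For each such node I build the full binary configuration tree of height $n$ with its $2^n$ many $i$-cells (Definition~\ref{def:icell}), storing in the $u_h$-elements the tape contents of $\nu$'s own configuration and in the $u_p$-elements those of $\nu$'s parent (for the root, where $q=q_0$, item~3 of Definition~\ref{def:config-node} imposes no constraint, so the $u_p$-layer may be filled arbitrarily). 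The $X$/$Y$-labels and the $Z_{q,a}$-flags are set exactly as prescribed by items (i)--(iii) of Definition~\ref{def:icell}; in particular the single missing $Z_{q,a}$ at $u_h$ (resp.\ at $v_p$) records the current (resp.\ previous) content of the cell. Finally I connect $\nu$ to its children by $\alpha^2$-edges as required by Item~\ref{it:succell} of Definition~\ref{def:computation-tree}, and take the transitive closure of the resulting tree-shaped interpretation. By construction the contents stored in the $u_p$-layer of a child coincide with those in the $u_h$-layer of its parent, so $\Imc$ is proper; and since the leaves of $T$ carry $q_\mathsf{acc}$, the interpretation is accepting. Checking that the remaining structural requirements of the three definitions are met is routine.

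\textbf{From a proper accepting computation tree to acceptance.} Conversely, let $\Imc$ be a proper accepting computation tree for $w$. To each configuration node $d$ I associate the configuration $\kappa(d)$ read off from the $u_h$-elements of its $i$-cells (head position, state, and symbol as recorded by the $X$/$Y$-labels, equivalently by the unique missing $Z_{q,a}$). Item~1 of Definition~\ref{def:computation-tree} guarantees that the initial $(q_0,a,0)$-node reads off the initial configuration $q_0w$. The heart of the argument is to show, by well-founded induction along the successor relation on configurations (well-founded by the assumption that there are no infinite sequences of successor configurations), that $\kappa(d)$ is an accepting configuration for every configuration node $d$ of $\Imc$. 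The crucial glue is the combination of properness with item~3 of Definition~\ref{def:config-node}: if $e$ is an $\alpha^2$-successor configuration node of $d$, then $\kappa(e)$, being a successor of the configuration stored in the $u_p$-layer of $e$, is in fact a successor of $\kappa(d)$, because properness identifies that $u_p$-layer with the $u_h$-layer of $d$.

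\textbf{Main obstacle.} I expect the delicate point to be the universal case of this induction. For $q\in Q_\forall$ I must argue that the $\alpha^2$-successor configuration nodes provided by Item~\ref{it:succell} of Definition~\ref{def:computation-tree} account for \emph{all} successor configurations of $\kappa(d)$, each exactly once. This is precisely where the standing assumption that no two transitions in $\delta(q,a)$ yield the same state is used: it makes the map from transitions to successor configurations injective and lets me recover, from the state $p$ and head position recorded in a successor node $e$, exactly the transition that $e$ realizes. Hence every successor configuration is witnessed by some node $e$ with $\kappa(e)$ accepting by the induction hypothesis, so $\kappa(d)$ is accepting; the existential case (one accepting witness suffices) and the leaf case (which forces $q=q_\mathsf{acc}$ and thereby also rules out $q_\mathsf{rej}$-nodes) are immediate. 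Applying the claim to the initial configuration node yields that $q_0w$ is accepting, i.e.\ $\Mmc$ accepts $w$.
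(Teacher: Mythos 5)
Your proposal is correct and is exactly the argument the paper intends: the paper gives no explicit proof of this proposition, dismissing it as ``routine'' given the preceding intuitions, and what you write is precisely that routine argument made explicit (building the interpretation from a classical computation tree with the $u_h$/$u_p$ layers, and conversely reading off configurations and doing well-founded induction). In particular, your two key observations --- that properness combined with item~3 of Definition~\ref{def:config-node} turns $\alpha^2$-successorship of configuration nodes into genuine successorship of configurations, and that the standing assumption that no two transitions in $\delta(q,a)$ share a target state lets the universal case account for every successor configuration exactly once --- identify correctly where the paper's standing assumptions on $\Mmc$ are actually needed.
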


With some effort, and using fresh concept names to propagate information, one can construct a TBox that enforces computation trees \new{(not necessarily proper)} below elements satisfying $R$. \new{It is almost identical to what has been done in the original proof~\cite{EiterLOS09TR}, so we only sketch the differences. The first difference is that the original TBox uses a single non-transitive role to encode the structure, whereas we use the composition $\alpha$ of two transitive roles. It is straightforward to adjust the TBox to this change. The second difference is that our gadget encoding an $i$-cell has a slightly different structure. It is easy to adapt the original concept inclusions enforcing correct $i$-cells.}
\begin{proposition}
Given $w$, we can build in polynomial time a knowledge base $\Kmc_w=(\Tmc_w,\Amc_w)$ with $\Amc_w=\{R(a)\}$ whose tree-shaped models are exactly the accepting computation trees for $w$.
\end{proposition}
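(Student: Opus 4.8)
The plan is to assemble $\Tmc_w$ as a union of small groups of concept inclusions, one group per structural ingredient of a computation tree, and then to prove the two inclusions of the claimed equality. Its models, being models of an $\Smc$ TBox, are transitive-tree interpretations, i.e. transitive closures of a genuine tree $\Imc_0$; this is exactly the shape of an accepting computation tree, so both directions can be argued directly against the tree structure. I would introduce a modest supply of auxiliary concept names to (i) mark which \emph{phase} of the encoding a node occupies—configuration node, intermediate node of an $\alpha$-edge, cell, or one of the gadget elements $u_p,u_h,v_p,v_h,w_p,w_h$—and (ii) serve as bounded counters $L_0,\dots,L_n$ recording the depth inside a configuration tree and a few flags recording the position inside an $i$-cell gadget. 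Since each configuration tree has height $n$ and each gadget has constant depth, all these markers are polynomial in $n$, so $\Tmc_w$ is built in polynomial time.

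With the markers in place, each requirement of Definitions~\ref{def:icell}--\ref{def:computation-tree} becomes a handful of normal-form CIs. I would axiomatize: (a) that the $R$-element has an $\alpha$-successor that is a $(q_0,\cdot,0)$-configuration node describing the initial configuration for $w$ (which symbol sits on which cell is fixed by address-dependent CIs); (b) that every configuration node generates a full binary tree of $\alpha$-edges of height $n$, where at level $k$ its two $\alpha$-children fix bit $B_k$ to $0$ and to $1$, so that the $2^n$ leaves are exactly the $i$-cells and carry the addresses $B_0,\dots,B_{n-1}$ (Definition~\ref{def:config-node}); (c) that each leaf is a genuine $i$-cell, forcing the gadget elements with their prescribed $t_1/t_2$-edges and the $Z_{q,a}$-labelling of items (i)--(iii); (d) the successor relation inside a single configuration tree, namely that the configuration stored in the $u_h$-elements is a $\delta$-successor of the one stored in the $u_p$-elements—this only compares the two elements hanging below a common cell and is propagated cell-by-cell using the head position $i$ and flags distinguishing the head cell; (e) the branching, where a node in a state $q\in Q_\exists$ is forced to have one $\alpha^2$-successor configuration node (realised by chaining four normal-form $\exists$-CIs through auxiliary concepts) and a node in $q\in Q_\forall$ one such successor for each of the finitely many transitions in $\delta(q,a)$; and (f) acceptance, forcing $q=q_{\mathsf{acc}}$ at every configuration node without a successor configuration node.

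The correctness proof splits as usual. For soundness, I would take any accepting computation tree $\Imc$ and check that it satisfies every CI of $\Tmc_w$; this is immediate once the auxiliary markers are placed in the only consistent way. For completeness, I would take an arbitrary transitive-tree model $\Imc$ of $\langle\Tmc_w,\{R(a)\}\rangle$ and reconstruct the structure: the generating CIs force, below the $R$-element, an $\alpha$-tree of configuration nodes, each of which is shown—by induction on the level counter $L_0,\dots,L_n$—to carry a correct height-$n$ binary tree of $i$-cells, after which the $\delta$- and acceptance-CIs certify that $\Imc$ is an accepting computation tree. As the construction adapts the one in \cite{EiterLOS09TR}, the bulk of this is inherited; only the two deviations flagged in the main text—replacing the single non-transitive role by the composition $\alpha=t_1\circ t_2$ and the altered $i$-cell gadget—require fresh verification.

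The main obstacle is precisely the transitivity of $t_1$ and $t_2$: a CI $A\sqsubseteq\forall t_i.B$ constrains \emph{all} $t_i$-descendants, not just the immediate successor, so naive one-step propagation breaks down. I would overcome this by exploiting that models are transitive closures of a genuine tree $\Imc_0$ and by laying out the encoding so that, along every information-carrying branch, $t_1$- and $t_2$-edges strictly alternate and no two $t_1$-edges (or two $t_2$-edges) are ever consecutive on a directed path. Under this discipline, ``advance one $\alpha$-step'' is realised as ``follow a $t_1$-edge into a phase-marked intermediate node, then a $t_2$-edge into the next configuration node,'' and the phase markers together with the disjointness of the role interpretations in $\Imc_0$ prevent transitive shortcuts from conflating distinct nodes. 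The delicate point is to show that this discipline is actually \emph{forced} on every transitive-tree model, rather than merely available in the intended one; here the altered gadget of Definition~\ref{def:icell}, which keeps the current and previous configurations on separate branches, does the essential work.
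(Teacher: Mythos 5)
Your high-level plan is the same as the paper's: both treat the construction as an adaptation of the ATM encoding of \cite{EiterLOS09TR} and isolate the same two deviations, namely the composition $\alpha=t_1\circ t_2$ of two transitive roles and the modified cell gadget. The genuine problem lies in the one place where you add substance beyond that citation: your alternation ``discipline'' for taming transitivity. The requirement that no two $t_1$-edges (nor two $t_2$-edges) be consecutive on a directed path is violated by the very gadget the TBox must enforce. Item~4 of Definition~\ref{def:icell} stipulates that the element between $u_p$ and $v_p$ --- itself the target of a $t_1$-edge from $u_p$ --- has an anonymous $t_1$-successor, and that $v_p$, the target of a $t_2$-edge, has an anonymous $t_2$-successor. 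These consecutive same-role edges are not incidental; they are exactly what later gives the query $q^*$ its two matching options, so they cannot be legislated away. Consequently, a TBox that enforced your strict alternation would exclude the intended accepting computation trees from its models, and the inclusion ``every accepting computation tree is a tree-shaped model'' would fail. Your closing ``delicate point,'' that the discipline must be shown to be \emph{forced} on every transitive-tree model, is therefore doubly misguided: achieving it would contradict that inclusion, and it is not what the other direction needs either.

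What the adaptation actually requires is weaker and purely local: in the intended structures, every directed path that uses only $t_1$-edges (or only $t_2$-edges) has length at most two and ends in an anonymous dead end, so a $\forall t_i$-constraint propagated from a spine element overshoots only into leaves where the inherited concept memberships are harmless. This observation, not a global alternation discipline, is the content behind the paper's remark that adjusting the TBox of \cite{EiterLOS09TR} from a single non-transitive role to $\alpha$-edges is straightforward; the converse direction is then obtained from the existential and universal CIs forcing the required substructure below the $R$-element, exactly as in the original encoding, with no appeal to the shape of arbitrary models. A further slip worth flagging: your opening claim that models of an $\Smc$ TBox are automatically transitive-tree interpretations is false --- $\Smc$ merely admits tree-like models, it does not force them --- and your argument survives only because the proposition itself quantifies exclusively over tree-shaped models.
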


We use the query $q_w$ to detect computation trees that are not proper; that is, a computation tree $\Imc$ is proper iff $\Imc \not\models q_w$. We first provide a characterization of properness in terms of the concept names introduced above. Let $\mathbf{B}=\{B_0,\ldots,B_{n-1}\}$ and $\mathbf{Z} = \big \{Z_{q,a} \bigm | a\in \Gamma, q\in Q\cup\{\square\}\big\}$. We call cells $c,c'$ \emph{$A$-conspicuous} for a concept name $A$ if 
\begin{itemize}

    \item[$(\dagger)$] $A$ is true at $u_h$ in cell $c$ and at $u_p$ in $c'$, or 
    
    \item[$(\ddagger)$] $A$ is true at $v_h$ in cell $c$ and at $v_p$ in $c'$.
    
\end{itemize}
We have the following characterization.\footnote{This is the analogue of Proposition~4 in~\cite{EiterLOS09}; our proof is almost verbatim the same.}
\begin{lemma}\label{lem:conspicious}
A computation tree is not proper iff $(\ast)$ there exist cells $c,c'$ in successive configuration trees in $\Imc$ that are $A$-conspicuous for all $A\in\mathbf{B}\cup\mathbf Z$.
\end{lemma}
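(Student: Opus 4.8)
The plan is to establish the equivalence by unfolding the low‑level encoding of Definition~\ref{def:icell} and reading off the high‑level meaning of the concept names in $\mathbf{B}$ and $\mathbf{Z}$ at the four distinguished elements $u_h,u_p,v_h,v_p$ of a cell; this mirrors the argument for Proposition~4 in~\cite{EiterLOS09}. Fix two cells $c$ and $c'$ lying in the configuration trees of successive configuration nodes $d$ and $e$. For such a cell $c$, call its \emph{current content} the unique pair $(q,a)$ with the $u_h$-element of $c$ in $X_q\cap Y_a$, and its \emph{previous content} the unique $(q,a)$ with the $u_p$-element of $c$ in $X_q\cap Y_a$; both are well defined by item~1 of Definition~\ref{def:icell}. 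The definition of properness says precisely that, for all such $d,e$ and every address $j$, the current content of the $j$-cell of $d$ equals the previous content of the $j$-cell of $e$. Hence it suffices to show that $c,c'$ are $A$-conspicuous for all $A\in\mathbf{B}\cup\mathbf{Z}$ iff $c$ and $c'$ share the same address while the current content of $c$ differs from the previous content of $c'$.

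For the bits $\mathbf{B}$, recall that $\val_\Imc(u_h)=\val_\Imc(u_p)$ equals the address of the cell, while $\val_\Imc(v_h)=\val_\Imc(v_p)$ is its bitwise complement. Thus, for a fixed $k$, condition $(\dagger)$ for $B_k$ asserts that bit $k$ is $1$ in both addresses, while $(\ddagger)$ asserts that it is $0$ in both; so $c,c'$ are $B_k$-conspicuous exactly when bit $k$ of the two addresses agrees. Being $B_k$-conspicuous for all $k<n$ is therefore equivalent to $c$ and $c'$ having the same address.

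For the symbols $\mathbf{Z}$, I would use the three clauses of item~2 of Definition~\ref{def:icell}: by clause~(i) both the $u_p$-element of $c'$ and the $v_h$-element of $c$ satisfy \emph{every} $Z_{q,a}$; by clause~(ii) the $u_h$-element of $c$ satisfies $Z_{q,a}$ iff $(q,a)$ is not the current content of $c$; and by clause~(iii) the $v_p$-element of $c'$ satisfies $Z_{q,a}$ iff $(q,a)$ is not the previous content of $c'$. Substituting these facts, $(\dagger)$ for $Z_{q,a}$ collapses to ``$(q,a)$ is not the current content of $c$'' and $(\ddagger)$ collapses to ``$(q,a)$ is not the previous content of $c'$''. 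Hence $c,c'$ fail to be $Z_{q,a}$-conspicuous exactly when $(q,a)$ is simultaneously the current content of $c$ and the previous content of $c'$, which can occur for at most one symbol and only when the two contents coincide. Consequently $c,c'$ are $Z_{q,a}$-conspicuous for all $(q,a)$ iff the current content of $c$ differs from the previous content of $c'$.

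Combining the two analyses gives the displayed characterization, and reading it in both directions yields the lemma: an improper tree supplies a pair of $j$-cells in successive configuration trees conspicuous for all of $\mathbf{B}\cup\mathbf{Z}$, and conversely any conspicuous pair must sit at a common address carrying mismatched contents, i.e.\ it witnesses a copying error. I expect the only delicate points to be bookkeeping ones: the deliberate asymmetry of the encoding (the current content is read off $u_h$, but the previous content off $v_p$, one level below $u_p$), and the fact that the positive CQ atoms force us to detect \emph{inequality} of contents through the \emph{presence} of $Z$-symbols—this is exactly why clause~(i) stocks $u_p$ and $v_h$ with the full $\mathbf{Z}$-set and why both disjuncts $(\dagger),(\ddagger)$ are needed. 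One should also record that, in a computation tree, the cells quantified in $(\ast)$ are the canonical $j$-cells of the configuration nodes (Definition~\ref{def:config-node}), so that the conspicuous pair matches the quantification used in the definition of properness.
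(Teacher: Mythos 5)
Your proof is correct and follows essentially the same route as the paper's: both arguments unfold Definition~\ref{def:icell}, use the bitwise-complement trick to show that $\mathbf{B}$-conspicuousness for all bits is equivalent to equality of addresses, and use clauses (i)--(iii) to relate $\mathbf{Z}$-conspicuousness to the (in)equality of the current content of $c$ and the previous content of $c'$. The only difference is organizational—you package the $\mathbf{Z}$-analysis as a single clean biconditional (noting that $(\dagger)$ and $(\ddagger)$ each collapse because $u_p$ of $c'$ and $v_h$ of $c$ carry all $Z$-symbols), whereas the paper argues the two directions of the lemma separately with the same facts—and your explicit flagging of the canonical-cell bookkeeping is a welcome touch.
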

\begin{proof}
  First note that if $c,c'$ are cells of successive configurations in \Imc, then the conditions imposed on $\val_\Imc(\cdot)$ in Definition~\ref{def:icell} imply that $\val_\Imc(c)=\val_\Imc(c')$ iff for all $B\in\mathbf B$, $c$ and $c'$ are $B$-conspicuous; this is because bitwise complement is used for the addresses of the $v_p$ and $v_h$ elements in the cells. 

  Now suppose that $\Imc$ is proper and let $c,c'$ be cells of two successive configurations. First, if they are not $B$-conspicuous for some $B\in\mathbf B$, then $(\ast)$ is indeed violated. Now, assume that $c$ and $c'$ are $B$-conspicuous for every $B\in \mathbf B$. As explained above, this means that $\val_\Imc(c)=\val_\Imc(c')$. Now, as \Imc is proper, the $u_h$ element in $c$ and the $u_p$ element in $c'$ must satisfy the same $X_q$, for $q\in Q\cup\{\Box\}$, and the same $Y_a$, for $a\in \Gamma$. Then, by Item~2~\ref{it:uh} of Definition~\ref{def:icell}, $Z_{q,a}$ is false at the $u_h$ element of $c$; and by Item~2~\ref{it:up'}, $Z_{q,a}$ is false at the $v_p$ element of $c'$. 
  Hence, $c,c'$ cannot be $Z_{q,a}$-conspicuous, and~$(\ast)$ is violated as well.

  Conversely, let \Imc be improper. Then there exist two $j$-cells $c,c'$ in successive configurations such that the $u_h$ element of $c$ and the \new{$u_p$} element of $c'$ satisfy some different pairs $X_q,Y_a$ and $X_{q'},Y_{a'}$ of concept names. As $\val_\Imc(c)=j=\val_\Imc(c')$, $c$ and $c'$ are $B$-conspicuous for every $B\in \mathbf B$.
  Then, by Item~2~\ref{it:uh'_and_up} of Definition~\ref{def:icell}, $Z_{q,a}$ is true at the $v_h$ element of $c'$. Now, by Item~2~\ref{it:up'}, since $(q,a)\neq (q',a')$, $Z_{q,a}$ is also true at the $v_p$ element of $c'$. We can argue symmetrically that $Z_{q',a'}$ is true at the $u_h$ element of $c$ and the $u_p$ element of $c'$. For $(q'',a'')\notin\{(q,a),(q',a')\}$, $Z_{q'',a''}$ holds at the $u_p,v_p,u_h,v_h$ elements of both $c$ and $c'$. In summary, $c,c'$ are $Z$-conspicuous for all $Z\in\mathbf{Z}$ and thus $(\ast)$ is satisfied. 
\end{proof}

It thus remains to find the query $q_w$ that has a match into \Imc iff $(\ast)$ in Lemma~\ref{lem:conspicious} is satisfied. This is the crucial place where our reduction differs from the one in~\cite{EiterLOS09}. While the overall structure of our $q_w$ is the same as in~\cite{EiterLOS09}, the inner structure differs. 

The query $q_w$ is shown in Figure~\ref{fig:queries-lower}~(II). It is obtained by taking a copy of the query $q_A$ from Figure~\ref{fig:queries-lower}~(I) for each $A\in\mathbf B\cup\mathbf Z$ and identifying the variables satisfying $G_h$ and $G_p$, respectively.
In Figure~\ref{fig:queries-lower}~(I), the edge labeled with $\alpha^{n+2}$ represents a path with alternating $t_1$-edges and $t_2$-edges of total length $2(n{+}2)$, and likewise for $\alpha^{n+4}$. The dashed edge from $x_1^A$ to $y_1^A$ labeled with $q^*$ means that $x_1^A$ and $y_1^A$ are connected with the query $q^*$, shown in  Figure~\ref{fig:q-star},
by identifying $x$ with $x_1^A$ and $y$ with $y_1^A$; likewise for $x_2^A$ and $y_2^A$.

\begin{figure} 
\centering
\begin{tikzpicture} [>=latex,thin,point/.style={circle,draw=black,fill,minimum
    size=1.2mm,inner sep=0pt},scale=1.0, bend angle=45]\footnotesize

  \node[point, label=above:{$x^A$}] (a0) at (1,0) {};
  \node[point, label=left:$A$, label=right:$x^A_1$] (a1) at (0,-1) {};
  \node[point,label=below:$G_h$,label=right:$y^A_1$] (a2) at (0,-2) {};
  \node[point,label=right:$A$, label=left:$x^A_2$] (b1) at (2,-1.3) {};
  \node[point,label=below:$G_p$,label=left:$y^A_2$] (b2) at (2,-2.3) {};

  \draw[->] (a0) -> node[left] {$\alpha^{n+2}$} (a1);
  \draw[->,dashed] (a1) -> node[left] {$q^*$} (a2);
  \draw[->] (a0) -> node[right] {$\alpha^{n+4}$} (b1);
  \draw[->,dashed] (b1) -> node[left] {$q^*$} (b2);

  \node[point, label=above:{$x^A$}] (c0) at (4,0) {};
  \node[point, label=left:$A$] (c1) at (3,-1) {};
  \node[point, label=below:$G_h$] (c2) at (3,-2) {};
  \node[point, label=right:$A$] (d1) at (5,-1.3) {};
  \node[point, label=below:$G_p$] (d2) at (5,-2.3) {};

  \draw[->] (c0) -> (c1);
  \draw[->,dashed] (c1) -> (c2);
  \draw[->] (c0) ->  (d1);
  \draw[->,dashed] (d1) -> (d2);

  \node[point, label=above:{$x^B$}] (e0) at (4.7,0) {};
  \node[point, label=left:$B$] (e1) at (3.7,-1) {};
  \node[point, label=right:$B$] (f1) at (5.7,-1.3) {};

  \draw[->] (e0) -> (e1);
  \draw[->] (e0) ->  (f1);
  \draw[->,dashed] (e1) -> (c2);
  \draw[->,dashed] (f1) -> (d2);

  \node[point, label=above:{$x^Y$}] (g0) at (6,0) {};
  \node[point, label=left:$Y$] (g1) at (4.5,-1) {};
  \node[point, label=right:$Y$] (h1) at (7,-1.3) {};

  \node at (5.4,0) {$\ldots$};

  \draw[->] (g0) -> (g1);
  \draw[->] (g0) -> (h1);
  \draw[->,dashed] (g1) -> (c2);
  \draw[->,dashed] (h1) -> (d2);
  
  \node at (1,-3) {$\text{(I)}$};
  
  \node at (5,-3) {$\text{(II)}$};
  
  \end{tikzpicture}
    
\caption{Queries $q_A$ and final query $q_w$.}
\label{fig:queries-lower}
\end{figure}
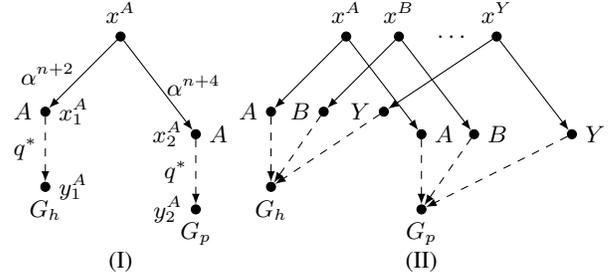

\begin{figure}
\centering
\begin{tikzpicture}[>=latex,thin,point/.style={circle,draw=black,fill,minimum size=1.4mm,inner sep=0pt},scale=1.0, bend angle=45]\footnotesize

\node[point, label=left:{$x$}] (x0) at (6,0) {};
  \node[point] (y0) at (7,-1) {};
  \node[point] (x1) at (6,-0.8) {};
  \node[point] (y1) at (7,-1.8) {};
  \node[point] (x2) at (6,-1.6) {}; \node[point,label=left:{$y$}] (y2) at (6,-2.3) {};

  \draw[->] (x0)-> node[right]{$\:t_1$} (y0);
  \draw[->] (x1)-> node[above]{$t_1\;\;$} (y0);
 \draw[->] (x1)-> node[right]{$\:t_2$} (y1);
 \draw[->] (x2)-> node[above]{$t_2\;\;$} (y1);
 \draw[->] (x2)-> node[right]{$t_1$} (y2);
    
\end{tikzpicture}

%
%
\caption{Query $q^*$.}
\label{fig:q-star}
\end{figure}
%

%
\begin{proposition}
   A computation tree \Imc is proper iff $\Imc\not\models q_w$.
\end{proposition}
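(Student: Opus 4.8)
The plan is to combine Lemma~\ref{lem:conspicious} with a structural analysis of how $q_w$ can match inside a transitive-tree interpretation. Since a computation tree $\Imc$ is improper iff condition $(\ast)$ holds, it suffices to prove that $\Imc\models q_w$ iff $(\ast)$ holds, i.e.\ iff there are cells $c,c'$ in successive configuration trees that are $A$-conspicuous for every $A\in\mathbf B\cup\mathbf Z$. I would first isolate two \emph{rigidity} facts about the two building blocks of $q_w$ (the $\alpha^m$-paths and the gadget $q^*$), and then apply them in each direction.

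First I would establish rigidity of the path gadgets. Because the main spine of a computation tree strictly alternates $t_1$- and $t_2$-edges, a query path $\alpha^m=(t_1t_2)^m$ can be matched only by exactly $m$ consecutive tree $\alpha$-steps along the spine: transitivity cannot turn a $t_1$-edge into a $t_2$-edge or conversely, and single tree $t_i$-edges are isolated on the spine. Hence, if the root $x^A$ of a copy $q_A$ maps to a node $r$, then $x_1^A$ (resp.\ $x_2^A$) is forced onto the unique $\alpha^{n+2}$- (resp.\ $\alpha^{n+4}$-)descendant of $r$. Second, I would analyse $q^*$ by chasing its $t_1/t_2$-zigzag through the cell gadget of Definition~\ref{def:icell}: a match of $q^*$ whose endpoint $y$ lands on a $G_h$-node $w_h$ of a cell $c$ forces its endpoint $x$ to be either $u_h$ or $v_h$ of the \emph{same} cell $c$, using the anonymous $t_1$- and $t_2$-successors provided by item~4 of Definition~\ref{def:icell}; symmetrically for a $G_p$-node $w_p$ with endpoints $u_p,v_p$. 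These are the only matches reaching $w_h$ (resp.\ $w_p$), since it is the unique $G_h$- (resp.\ $G_p$-)node in its cell and its only incoming tree edge is the $t_1$-edge from $v_h$ (resp.\ $v_p$).

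For the direction $(\ast)\Rightarrow\Imc\models q_w$, I would build a match explicitly. Fix cells $c,c'$ that are $A$-conspicuous for all $A\in\mathbf B\cup\mathbf Z$, lying in the $j$-cells of two successive configuration nodes $d$ and $e$, where $e$ is the $\alpha^2$-successor of $d$ guaranteed by Definition~\ref{def:computation-tree}. Map the shared $G_h$-variable to $w_h^c$ and the shared $G_p$-variable to $w_p^{c'}$. For each $A$ separately, conspicuousness gives either $(\dagger)$ ($A$ at $u_h^c$ and $u_p^{c'}$) or $(\ddagger)$ ($A$ at $v_h^c$ and $v_p^{c'}$); in case $(\ddagger)$ I set $x^A\mapsto d$, so the rigid paths place $x_1^A$ on $v_h^c$ and $x_2^A$ on $v_p^{c'}$, while in case $(\dagger)$ I set $x^A$ to the $\alpha$-predecessor of $d$, shifting $x_1^A$ and $x_2^A$ one level up to $u_h^c$ and $u_p^{c'}$. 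In either case $A$ holds at $x_1^A$ and $x_2^A$, and the second rigidity fact completes each $q^*$ into $w_h^c$ and $w_p^{c'}$. Since only the $G_h$- and $G_p$-variables are shared across copies, these per-$A$ choices are independent and all copies consistently target the same $w_h^c$ and $w_p^{c'}$, yielding a match of $q_w$.

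For the converse, I would read $(\ast)$ off an arbitrary match of $q_w$. The shared endpoints force a single $G_h$-node $w_h^c$ and a single $G_p$-node $w_p^{c'}$, hence two cells $c,c'$. By the first rigidity fact, mapping $x^A\mapsto r$ puts $x_1^A$ on the $\alpha^{n+2}$-descendant and $x_2^A$ on the $\alpha^{n+4}$-descendant of $r$; together with the second fact (placing $x_1^A$ on the $u_h$- or $v_h$-level of $c$ and $x_2^A$ on the $u_p$- or $v_p$-level of $c'$) and the fact that configuration nodes lie exactly $\alpha^2$ apart, only two options survive: both $x_1^A,x_2^A$ on the $v$-level or both on the $u$-level, each forcing $c'$'s configuration node to be the $\alpha^2$-successor of $c$'s; the mixed options would require a configuration node at $\alpha$-distance $1$ or $3$ from another, which does not exist. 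Thus $c,c'$ are successive, and as $A$ holds at both $x_1^A$ and $x_2^A$ we obtain exactly $(\dagger)$ or $(\ddagger)$; ranging over all $A\in\mathbf B\cup\mathbf Z$ gives $(\ast)$. I expect the main obstacle to be the rigidity analysis of $q^*$ inside the cell gadget: one must verify both that $q^*$ \emph{does} match from each of the $u$- and $v$-levels and that it \emph{cannot} reach a $w$-node from anywhere else, which is precisely where the exact shape of the gadget in Definition~\ref{def:icell}, including its anonymous successors, is indispensable.
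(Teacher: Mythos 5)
Your proof is correct and follows essentially the same route as the paper's: both argue that each copy $q_A$ forces $y_1^A,y_2^A$ onto $G_h$/$G_p$-nodes of cells in successive configuration trees (via the rigidity of the $\alpha^{n+2}$/$\alpha^{n+4}$ paths, which differ by exactly $\alpha^2$), and that $q^*$ leaves exactly the two placements $v_h/v_p$ (case $(\ddagger)$) and $u_h/u_p$ (case $(\dagger)$), so a match of $q_w$ exists iff condition $(\ast)$ of Lemma~\ref{lem:conspicious} holds. Your write-up is in fact somewhat more explicit than the paper's, which asserts the key rigidity claims without working them out; your analysis of $q^*$ inside the cell gadget spells out exactly the step the paper leaves implicit.
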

\begin{proof}
We claim that a match of $q_A$, $A\in \mathbf B\cup \mathbf{Z}$, into some computation tree $\Imc$ detects $A$-conspicuous cells in two successive configurations. Observe first that $y_1^A$ and $y_2^A$ have to be mapped to the ends of two cells $c,c'$, due to concept names $G_h$ and $G_p$. Moreover, these cells have to be in successive configuration trees since the path from $x^A$ to $x_1^A$ is precisely $\alpha^2$ shorter than the path from $x^A$ to $x_2^A$. Finally, due to the query $q^*$ between $x_1^A$ and $y_1^A$ (and $x_2^A$ and $y_2^A$, respectively) there is a choice as to where $x_1^A$ and $x_2^A$ are mapped:
\begin{itemize}
    \item either $x^A$ is mapped to the root of the configuration tree of cell $c$, \new{in which case} $x_1^A$ is mapped to the $v_h$ element of its cell, and $x_2^A$ is mapped to the $v_p$ element in its cell, or

    \item $x^A$ is mapped to the $\alpha$-predecessor of the root of the configuration tree of cell $c$, 
    $x_1^A$ is mapped to the $u_h$ element of its cell and $x_2^A$ is mapped to the $u_p$ element in its cell.
    
\end{itemize}
This choice is the crucial insight here. It is enabled by the structure of $q^*$ and the anonymous elements in the cells provided by Definition~\ref{def:icell}~(4). The first case corresponds to case~$(\ddagger)$ in the definition of $A$-conspicuous cells while the second case corresponds to case~$(\dagger)$. 

The proposition follows immediately from the claim. 
\end{proof}

This finishes the proof of Theorem~\ref{thm:2expComp}.

\section{Proof of Theorem~\ref{th:acyclic_transitive}}
\label{app:acyclic}

This section of the appendix is devoted to the details of the proof of Theorem~\ref{th:acyclic_transitive} in Section~\ref{sec:acyclic}. In particular, we prove that the constructed interpretation $\Jmc_{d_0}$ satisfies the desired conditions of the theorem.

To this end, we show that each interpretation $\Jmc_d$ satisfies the conditions of the invariant:
\begin{itemize}

    \item $d \in \Delta^{\Jmc_d}\subseteq \Delta^\interp$, 
    
    \item $\tp(\Jmc_{d}, e)=\tp(\interp, e)$ for all $e\in\Delta^\Jmc_{d}$,
    
    \item $\rch^{\Jmc_{d}^+\!}_t(e) = \rch^{\Imc}_t(e)$ for all $e\in\Delta^\Jmc_{d}$,
    
    \item $\Jmc_d^+ \not \models \{q_{d,i} \mid i\leq m\}$.
    
    \item  $|\Delta^{\Jmc_d}|\leq (|\rch^{\Imc}_t(d)|+1)!$, 
    
\end{itemize}

The first three conditions follow directly from the construction. The remaining two are proved in the following two lemmas. Notice that, for the case where $d=d_0$, these two conditions become respectively $\Jmc_{d_0}^+\not\vDash Q$ and $|\Delta^{\Jmc_{d_0}}|\leq (|\NC(\tbox)|+1)!$ (since $\rch^{\Jmc_{d_0}^+}_t\subseteq\NC(\tbox)$), as required by the theorem.

\begin{lemma}
    $|\Delta^{\Jmc_d}|\leq (|\rch^{\Imc}_t(d)|+1)!$.
\end{lemma}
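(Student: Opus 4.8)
The plan is to prove this size bound by the same induction used to construct $\Jmc_a$, namely induction on the set $M = \rch^{\Imc}_t(a)$ ordered by strict inclusion. This order is well-founded because $M \subseteq \NC(\tbox)$ is finite, and in both cases of the construction the recursive sub-interpretations $\Jmc_{c_j}$ are built only from elements $c_j$ with $\rch^{\Imc}_t(c_j) \subsetneq M$. Hence the induction hypothesis $|\Delta^{\Jmc_{c_j}}| \leq (|\rch^{\Imc}_t(c_j)|+1)!$ applies to exactly these elements, and since $\rch^{\Imc}_t(c_j) \subsetneq M$ gives $|\rch^{\Imc}_t(c_j)| \leq |M|-1$, we obtain $|\Delta^{\Jmc_{c_j}}| \leq |M|!$ for each $j$. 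The whole argument then rests on the single arithmetic inequality
\[
1 + n \cdot n! \;\leq\; (n+1)! \qquad \text{for all } n \geq 0,
\]
which holds because $(n+1)! = n\cdot n! + n!$ and $n! \geq 1$.

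In the first case of the construction, $\Delta^{\Jmc_a}$ is the fresh element $a$ together with the disjoint union of the $\Jmc_{c_j}$ for $j \leq \ell$, where $\ell \leq |M|$. Using the bound $|\Delta^{\Jmc_{c_j}}| \leq |M|!$, I would estimate
\[
|\Delta^{\Jmc_a}| \;\leq\; 1 + \ell \cdot |M|! \;\leq\; 1 + |M| \cdot |M|! \;\leq\; (|M|+1)!,
\]
the last step being the arithmetic inequality with $n = |M|$.

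In the second case, the domain of $\Jmc_a$ is contained in $\{a, b_1, \dots, b_k\}$ (the cycle nodes) together with the disjoint union of $\Jmc_{c_1}, \dots, \Jmc_{c_\ell}$. Since the $b_i$ and $c_j$ are exactly the selected elements $d_1, \dots, d_p$ with $p \leq |M|$, we have $k + \ell \leq |M|$. With $|\Delta^{\Jmc_{c_j}}| \leq |M|!$ this gives
\[
|\Delta^{\Jmc_a}| \;\leq\; 1 + k + \ell \cdot |M|! \;\leq\; 1 + (|M|-\ell) + \ell \cdot |M|! \;=\; 1 + |M| + \ell\,(|M|!-1),
\]
and since $\ell \leq |M|$ and $|M|! - 1 \geq 0$, the right-hand side is at most $1 + |M| + |M|(|M|!-1) = 1 + |M|\cdot|M|!$, which is again $\leq (|M|+1)!$.

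The argument is essentially bookkeeping, so I expect no deep obstacle; the one delicate point is the second case, where the $k$ freshly added cycle nodes $b_1, \dots, b_k$ come on top of the recursively constructed parts. The key is that these nodes must be traded against the number $\ell$ of recursive branches via the constraint $k + \ell \leq |M|$, so that the extra additive $k$ is absorbed rather than multiplied; this is precisely what keeps the total below $(|M|+1)!$ and, for $a = a_0$, yields $|\Delta^{\Jmc_{a_0}}| \leq (|\NC(\tbox)|+1)!$ as required by Theorem~\ref{th:acyclic_transitive}.
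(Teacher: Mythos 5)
Your proof is correct and takes essentially the same route as the paper: the paper observes that in both cases $\Jmc_a$ is assembled from at most $|\rch^{\Imc}_t(a)|+1$ pieces, each either a single element or some $\Jmc_b$ with $\rch^{\Imc}_t(b)\subsetneq\rch^{\Imc}_t(a)$, and concludes by a routine induction on this well-founded order. Your explicit bookkeeping (the constraint $k+\ell\leq|M|$ in the second case and the inequality $1+n\cdot n!\leq (n+1)!$) merely spells out the arithmetic that the paper leaves implicit.
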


\begin{proof}
In both cases, the interpretation $\Jmc_{d}$ is obtained by putting together at most $|\rch^{\Imc}_t(d)|+1$ pieces: each of these pieces being either a single element, or interpretation of the shape $\Jmc_e$, with $|\rch^{\Imc}_t(e)| < |\rch^{\Imc}_t(d)|$. Hence, this upper bound follows by a routine induction.
%
%
\end{proof}






\begin{lemma}
$\Jmc_d^+ \not \models \{q_{d,i}\mid i\leq m\}$.
\end{lemma}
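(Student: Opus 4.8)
The plan is to proceed by induction on $M = \rch^\Imc_t(a)$ (ordered by $\subseteq$), following the same case split used in the construction of $\Jmc_a$, and to argue by contradiction: suppose some $q_{a,i}$ has a match $\delta$ in $\Jmc_a^+$, and derive a violation of the invariant~\eqref{eq:invariant} at an already-constructed piece. The key structural fact I would exploit is that every $q_{a,i}$ is \emph{acyclic}, so its underlying multigraph is a forest; combined with anti-monotonicity of $\bar{\qbf}$, this lets me decompose any match along the $t$-edges of $\Jmc_a^+$.

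In \textbf{Case 1} ($\Jmc_a$ is $a$ with $t$-edges to the disjoint union of the $\Jmc_{c_j}$), the transitive closure $\Jmc_a^+$ adds $t$-edges from $a$ into each $\Jmc_{c_j}^+$, but crucially introduces \emph{no} new edges between distinct $c_j$-blocks and no edges back into $a$. So in a hypothetical match $\delta$ of $q_{a,i}$, every variable is sent either to $a$ or into a single block $\Jmc_{c_j}^+$ (here I use connectedness of each component of $q_{a,i}$ together with acyclicity to see that a connected subquery cannot straddle two sibling blocks without a shared non-$a$ vertex). I would first argue that the set of variables mapped to $a$ is exactly a subset of the initial variables matchable at $a$, i.e.\ contained in $X_{a,i}$; this uses that $a$ has no incoming $t$-edge in $\Jmc_a^+$ and has the same type as in $\Imc$. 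After removing the atoms touching $X_{a,i}$, each connected component of the residual query $q'$ must be matched entirely inside some $\Jmc_{c_j}^+$. But the chosen $q_{b,i} = q'_{a,i}$ is precisely a residual component that has \emph{no} match in any strict subtree of $\Imc_a$; since $\rch^\Imc_t(c_j) = \rch^{\Jmc_{c_j}^+}_t(c_j)$ and types agree, a match of this component in $\Jmc_{c_j}^+$ would lift to a match in $\Imc_{c_j} \subseteq \Imc_a$ (a strict subtree), contradicting the construction. The delicate point is matching the bookkeeping between ``residual component forbidden in $\Imc$'' and ``no match in $\Jmc_{c_j}^+$,'' which I expect to discharge using the third invariant ($\rch$ is preserved) plus the induction hypothesis $\Jmc_{c_j}^+ \not\models \{q_{c_j,i}\}$ together with $q_{c_j,i} = q'_{a,i} \preceq q_{a,i}$.

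In \textbf{Case 2} (the cyclic gadget $a, b_1, \dots, b_k, b_1$ with the $\Jmc_{c_j}$ hanging off $b_k$), the transitive closure makes $\{a,b_1,\dots,b_k\}$ into a $t$-clique feeding into each $\Jmc_{c_j}^+$. This is where the choice of $b_0$ and anti-monotonicity of $\preceq$ are essential, and I expect this to be \textbf{the main obstacle}. The idea is that any match $\delta$ of $q_{a,i}$ in $\Jmc_a^+$ can be ``rerouted'': variables landing in the clique can be pushed down to $b_0$-like behavior, because all elements $b \in \desc{b_0} \cap \alpha_M$ share the same forbidden tuple $\bar q = \bar q_b$, which is $\preceq$-minimal. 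Concretely, I would show that a match in $\Jmc_a^+$ yields a match of some $q_{b_\ell,i}$ — which equals the minimal $q$-component — either within the clique part (contradicting minimality, since descending one more $t$-step would still forbid the same query yet the construction guarantees strict progress) or inside one of the $\Jmc_{c_j}^+$ blocks (handled as in Case~1 via the induction hypothesis, using $\rch^\Imc_t(c_j) \subsetneq M$). The heart of the argument is that the cycle cannot manufacture a match that the infinite tree $\Imc_a$ lacked: any cyclic use of the clique collapses, by acyclicity of $q_{a,i}$ and minimality of $\bar q$, to a match already available below $b_0$ in $\Imc$, which the invariant forbids. I would make this precise by mapping $\Jmc_a^+$ homomorphically back into $\Imc_a$ (sending the clique onto the $\preceq$-constant chain below $b_0$ and each $\Jmc_{c_j}^+$ into $\Imc_{c_j}$ using the preserved types and $\rch$-values), so that any query match in $\Jmc_a^+$ transfers to $\Imc_a$, contradicting~\eqref{eq:invariant}.
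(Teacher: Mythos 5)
Your overall scaffolding (induction on $\rch^\Imc_t(a)$, proof by contradiction, the same two-case split) matches the paper, and your Case 1 is essentially salvageable once you drop the ``lifting'' language and keep only the fallback you yourself mention: all variables of the residual component $q'_{a,i}$ avoid $a$, so by connectedness the match lands inside a single block $\Jmc_{c_j}^+$, and since $q_{c_j,i}$ is a \emph{subquery} of $q'_{a,i}$ (by anti-monotonicity of $\preceq$; note it is in general not equal to $q'_{a,i}$, as $c_j$ need not be a direct successor of $a$), the induction hypothesis $\Jmc_{c_j}^+\not\models q_{c_j,i}$ gives the contradiction. This is exactly the paper's Case 1.

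The genuine gap is in Case 2, which is precisely where the paper says the subtlety lies. Your plan is to ``make this precise by mapping $\Jmc_a^+$ homomorphically back into $\Imc_a$'', sending the clique onto a chain below $b_0$ and each $\Jmc_{c_j}^+$ into $\Imc_{c_j}$. No such homomorphism can exist: in $\Jmc_a^+$ every $b_j$ lies on a $t$-cycle, hence $\pair{b_j}{b_j}\in t^{\Jmc_a^+}$, whereas $\Imc_a$ is an induced part of the transitive closure of a tree-shaped interpretation and therefore has no $t$-self-loops; a role-preserving map would have to send a reflexive point to a reflexive point. The same objection applies recursively to the blocks $\Jmc_{c_j}^+$, which also contain cycles, so even your Case 1 ``lift to $\Imc_{c_j}$'' is unavailable. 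Moreover, if such a homomorphism existed, then \emph{every} query failing in $\Imc_a$ would fail in $\Jmc_a^+$, and the construction would not need to depend on the queries at all --- but it does: the choice of $b_0$ depends on $\bar{\qbf}$, and preservation holds only for the specific forbidden tuple. The paper's actual Case 2 argument never transfers matches back to $\Imc_a$. It sets $q''_{a,i} = q_{b_1,i} = \dots = q_{b_k,i}$ (a subquery of $q'_{a,i}$), uses the stability of $\bar q$ below $b_0$ to deduce $X_{b_j,i}=\emptyset$, i.e., no initial variable of $q''_{a,i}$ is matchable at any $b_j$; combined with the facts that no variable of $q''_{a,i}$ can be mapped to $a$ and that all $t$-predecessors of clique elements lie in the clique together with $a$, acyclicity and connectedness of $q''_{a,i}$ force its match to avoid the clique entirely and land inside a single $\Jmc_{c_j}^+$, where anti-monotonicity ($q_{c_j,i}$ is a subquery of $q''_{a,i}$) and the induction hypothesis yield the contradiction. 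Your ``contradicting minimality / strict progress'' branch for matches touching the clique is a placeholder for exactly this missing argument, and the homomorphism it ultimately falls back on is the step that fails.
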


\begin{proof} 
    We proceed by induction on $\rch^\Imc_t(d)$. In the inductive base, $\rch^\Imc_t(d)=\emptyset$ and $\Jmc_d$ consists just of the single element $d$ which is a copy of $d$ in $\Imc_d$. Since, $\Imc_d\not\models  \{q_{d,i}\mid i\leq m\}$, the same is true for $\Jmc_d$.
    
    For the inductive step, let $\rch^\Imc_t(d)\neq \emptyset$ and suppose that $\Jmc_e^+ \not \models \{q_{e,i}\mid i\leq m\}$ for all $e \in \Delta^\Imc$ such that $\rch^\Imc_t(e)\subsetneq \rch^\Imc_t(d)$.
    Towards showing a contradiction, assume that $q_{d,i}$, for some $i\leq m$, admits a match $\delta$ in $\Jmc_d^+$.

   Regardless of the case in the construction we followed, each element in $\Jmc_d^+$ is reachable from $d$. In consequence, using the notation from the definition of $\bar \qbf$, we can assume without loss of generality that $\delta(x)=d$ for all $x\in X_{d,i}$. Consider the (nonempty) subquery $q'_{d,i}$ of $q_{d,i}$, as in the definition of $\bar \qbf$.
   By the anti-monotonicity of $\preceq$, $q_{e,i}$ is a subquery of
   $q'_{d,i}$, for all $e\in \desc{d}$. 
   

    Let us consider the first case of the dichotomy, illustrated in Figure~\ref{fig:dicho} on the left. Because $q_{d,i}'$ is connected and none of its initial variables are mapped to $d$, it follows that $\delta$ is a match of $q_{d,i}'$ in  $\Jmc_{f_j}^+$ for some $j\leq \ell$. Since $f_j\in \desc{d}$, 
    $q_{f_j,i}$ is a subquery of $q_{d,i}'$ (as noted above). Hence, $\delta$
   is also a match of $q_{f_j,i}$ in $\Jmc_{f_j}^+$, that is, $\Jmc_{f_j}^+\models q_{f_j,i}$. This is, however, in contradiction to the induction hypothesis. 

    Next, consider the second case of the dichotomy, illustrated in Figure~\ref{fig:dicho} on the right. 
    By construction, $q_{e_1,i} = \dots = q_{e_k,i}$. 
    Let us call this query $q_{d,i}''$. 
    Because all $e_j$ belong to $\desc{d}$, it follows from the anti-monotonicity of $\preceq$ that $q_{d,i}''$ is a subquery of $q_{d,i}'$, so $\delta$ matches $q_{d,i}''$ in $\Jmc_d^+$. By the choice of the $e_j$ (in particular the fact that the $\bar q$ is stable in the subtrees rooted at $e_j$), we know that $X_{e_j,i}=\emptyset$, for all $j\leq k$, that is, the initial variables of $q_{e_j,i}(=q_{d,i}'')$ do not match in $e_j$. 
    %
    Because $q_{d,i}''$ is connected, this implies that $\delta$ is a match of $q_{d,i}''$ in $\Jmc_{d_j}^+$ for some $j\leq \ell$. 
    But again, this contradicts the induction hypothesis, because $q_{d_j}$ is a subquery of $q''_{d,i} = q_{e_j,i}$, due to the anti-monotonicity of~$\preceq$.

    We can conclude: no matter which case of the construction was applied, no $q_{d,i}$ admits a match in $\Jmc_d^t$.
\end{proof}

\section{Appendix of Section~\ref{sec:ptqs}}
\label{ap:ptqs}

\new{
This  appendix is devoted to the proofs of the results of Section~\ref{sec:ptqs}, and in particular, it provides the tools needed in order to demonstrate Lemma~\ref{lem:to_ptq}.}
%
We need one auxiliary notion, that is used in this proof, but also in two other places later on; so we introduce it first, and prove a crucial property. \new{Recall that Lemma~\ref{lem:to_ptq} applies to CQs that use a single transitive role $t$.} 
The following definition is similar to what was called \emph{fork elimination} in~\cite{EiterLOS09}.
\begin{definition}\label{def:queryequivalence}
Given a unary or Boolean conjunctive query $q(\bar x)$, we define $\approx$ to be the smallest equivalence relation on $\var(q)$ satisfying the following conditions:
\begin{itemize}[leftmargin=*]
 
   \item[$(\dagger)$] If $q$ contains atoms $r(x,z),r(y,z')$ for $r\neq t$
   and $z\approx z'$, then $x\approx y$. 
   
  \item[$(\ddagger)$] If $q$ contains atoms $r(x,y)$ and $s(x',y')$ with $t\notin\{r,s\}$ and $y,y'\in \var(C)$ for some $t$-cluster $C$, then $y\approx y'$. 
  
\end{itemize}
We define $q_\approx$ as the query obtained from $q$ by identifying equivalent variables. We further denote with $[x]$ the equivalence class of variable $x$; thus, the classes $[x]$, $x\in\var(q)$ are the variables of $q_\approx$.
\end{definition}

   Recall that $\mathbf T$ is the class of all transitive-tree interpretations. \new{Let us prove some basic properties of $q_\approx$.}
\begin{lemma}\label{lem:queryequivalence}
\mbox{} 
\begin{enumerate}[leftmargin=*]

    \item For any Boolean CQ $q$, we have $\Imc\models q$ iff $\Imc\models q_\approx$ for all $\Imc\in\mathbf{T}$. 
    
    \item For any unary CQ $q(x)$, we have $\pair{\Imc}{d}\models q(x)$ iff $\pair{\Imc}{d}\models q_\approx([x])$ for all $\Imc\in\mathbf{T}$. 

    \item For any unary CQ $q(x)$ such that 
    $\pair{\Imc}{d}\models q(x)$ for some $\Imc\in \mathbf T$ with root $d$, no variable in $q_\approx$ is reachable from variables $[y]$ and $[y']$ such that $r([x],[y]), s([x],[y'])\in q_\approx$ for some $r\neq s$.
    
   
\end{enumerate}
     
\end{lemma}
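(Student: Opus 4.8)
\emph{Overall strategy.} The easy direction is uniform and needs no assumption on the interpretation. Since $q_\approx$ arises from $q$ by identifying variables, the quotient map $\pi\colon\var(q)\to\var(q_\approx)$, $\pi(x)=[x]$, sends each atom of $q$ to an atom of $q_\approx$; hence for any interpretation $\Imc$ and any match $\mu$ of $q_\approx$, the composite $\mu\circ\pi$ is a match of $q$. This gives ``$\Leftarrow$'' in parts~1 and~2 (with $\mu([x])=a$ forcing $\delta(x)=a$). The real content of parts~1 and~2 is the converse over $\mathbf T$, and the plan is to prove the stronger claim that \emph{every} match $\delta$ of $q$ in $\Imc=\Imc_0^+\in\mathbf T$ already respects $\approx$, i.e.\ $x\approx y$ implies $\delta(x)=\delta(y)$; then $\delta$ factors as $\bar\delta\circ\pi$ for a match $\bar\delta$ of $q_\approx$, and $\bar\delta([x])=\delta(x)$ settles the answer variable in part~2. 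Writing $E_\delta=\{(x,y):\delta(x)=\delta(y)\}$ for the kernel of $\delta$, which is an equivalence relation, it suffices by minimality of $\approx$ to verify that $E_\delta$ is closed under the two defining conditions.

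For the $\dagger$-condition, suppose $r(x,z),r(y,z')\in q$ with $r\neq t$ and $\delta(z)=\delta(z')$. As $r$ is non-transitive, $r^\Imc=r^{\Imc_0}$, so $(\delta(x),\delta(z))$ and $(\delta(y),\delta(z))$ are two edges of the tree $G_{\Imc_0}$ entering the same node; a tree node has a unique incoming edge, hence $\delta(x)=\delta(y)$. The $\ddagger$-condition is the main obstacle and is where transitivity enters. I will rely on two structural facts about $\Imc$. First, any node whose unique incoming $\Imc_0$-edge carries a non-transitive label has no incoming $t$-edge in $\Imc$ at all (such an edge would stem from a $t$-path of $\Imc_0$ ending in a $t$-edge into that node); call such nodes $t$-sources. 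Second, the $t$-labelled edges of $\Imc_0$ form a subforest $F$ of the tree, $t^\Imc$-comparability of two nodes implies membership in a common component of $F$, and each component, being a tree, has a unique root, i.e.\ a unique $t$-source. Now if $r(x,y),s(x',y')\in q$ with $r,s\neq t$ and $y,y'$ in one $t$-cluster $C$, then $\delta(y),\delta(y')$ are $t$-sources by the first fact; applying $\delta$ to a path of $t$-atoms joining $y$ to $y'$ in $C$ yields a chain of $t^\Imc$-comparabilities, so $\delta(y),\delta(y')$ lie in one component of $F$ by the second fact; being two roots of the same component, they coincide. Thus $E_\delta$ is closed under both conditions, so $\approx\subseteq E_\delta$, as required.

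For part~3 I argue by contradiction. Suppose $r([x],[y]),s([x],[y'])\in q_\approx$ with $r\neq s$ and some $[z]$ reachable in $G_{q_\approx}$ from both $[y]$ and $[y']$. By the hypothesis together with part~2 there is a match $\bar\delta$ of $q_\approx$ with $\bar\delta([x])=a$, the root of $\Imc_0$. The key observation is that every directed edge of $q_\approx$ is mapped downward in the tree: a non-transitive edge to a child, a $t$-edge to a proper $t$-descendant. Hence reachability of $[z]$ from both $[y]$ and $[y']$ forces $\bar\delta([z])$ to be a common descendant of $\bar\delta([y])$ and $\bar\delta([y'])$, and since the ancestors of a node in a tree form a chain, $\bar\delta([y])$ and $\bar\delta([y'])$ are comparable. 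It then remains to reach a contradiction by cases on $r,s$; note that, there being a single transitive role, at most one of $r,s$ equals $t$. If both are non-transitive, then $\bar\delta([y]),\bar\delta([y'])$ are children of the root $a$, so comparability makes them equal, and $r([x],[y]),s([x],[y'])$ map to parallel edges of $G_{\Imc_0}$ with distinct labels, contradicting that $G_{\Imc_0}$ is a tree. If instead, say, $r\neq t=s$, then $\bar\delta([y])$ is a child of $a$ while $\bar\delta([y'])$ is a proper $t$-descendant of $a$: comparability either puts the depth-one node $\bar\delta([y])$ on the all-$t$ path from $a$ to $\bar\delta([y'])$, forcing its incoming edge to be a $t$-edge and contradicting its $r$-label, or makes $\bar\delta([y'])$ an ancestor of the depth-one node $\bar\delta([y])$, hence $\bar\delta([y'])=a$, contradicting that $\bar\delta([y'])$ is a proper $t$-descendant of $a$. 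In every case we obtain a contradiction, which establishes part~3 and completes the proof.
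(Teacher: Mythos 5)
Your proof is correct, and for Items 1 and 2 it is essentially the paper's own argument: the paper likewise reduces both items to the claim that every match $\delta$ of $q$ in a transitive-tree interpretation respects $\approx$, ``proved by induction on the definition of $\approx$''; your least-fixpoint formulation (the kernel $E_\delta$ is an equivalence relation closed under $(\dagger)$ and $(\ddagger)$, hence contains the smallest such relation) is the same argument, and you supply exactly the details the paper leaves implicit --- uniqueness of incoming tree edges for $(\dagger)$, and the $t$-subforest with its unique per-component root for $(\ddagger)$, both of which are sound. For Item 3 you take a genuinely different route. The paper argues syntactically: a backward induction along the two converging directed paths in $G_{q_\approx}$ (exploiting that $q_\approx$ is saturated under $(\dagger)$ and $(\ddagger)$) produces a single variable with two incoming atoms over distinct role names, and only then does it invoke the semantic fact that no such variable can exist when there is a match into a transitive-tree interpretation. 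You instead argue semantically from the outset: Item 2 yields a match $\bar\delta$ of $q_\approx$ sending $[x]$ to the root, every query edge is mapped strictly downward in the tree, so $\bar\delta([y])$ and $\bar\delta([y'])$ are comparable as common ancestors of $\bar\delta([z])$, and a case analysis on the labels $r,s$ (at most one of which can be $t$) gives a contradiction with the absence of parallel edges and of cycles. Your version buys a concrete, self-contained geometric contradiction and avoids the paper's unspecified ``straightforward inductive argument,'' at the cost of relying on Item 2 and a slightly longer case distinction; the paper's version isolates a purely combinatorial property of $q_\approx$ and confines the semantic reasoning to one observation.
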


\begin{proof}
Items~1 and~2 follow from the fact that in every match $\delta$ of $q$ in $\Imc$ for some $\Imc\in\mathbf T$, we have $\delta(x)=\delta(y)$, for all $x\approx y$. This can easily be proved by induction on the definition of $\approx$. Item~3 is a consequence of the definition of $\approx$. Indeed, if there were two such paths, then a straightforward inductive argument shows that there is a variable that has two incoming role atoms \new{involving different role names}, which is impossible if $\pair{\Imc}{d}\models q(x)$ for some $\Imc\in\mathbf T$.
\end{proof}

   \toPTQ*

\new{The assumption that the query uses at most one transitive role name is not necessary for the decidability in polynomial time of the existence of a match in a transitive-tree interpretation, but the argument we use below to show it  does exploit the assumption in order to simplify the construction of suitable PTQ(s). We provide a general argument in Lemma~\ref{lem:matchtotreeshaped} at the end of this section.} 

\begin{proof}
We give the proof for Boolean CQs first and comment on the necessary changes in the end. Consider a connected Boolean CQ $q$ with at most one transitive role, say~$t$, and let $q'=q_\approx$ as defined in Definition~\ref{def:queryequivalence}. By Lemma~\ref{lem:queryequivalence}~(1), 
$\Imc\models q$ iff $\Imc\models q'$ for all $\Imc\in\mathbf{T}$. So, we proceed with $q'$ (which can clearly be computed in polynomial time). 
%
%
%
%

Let $C_1,\ldots,C_k$ be the clusters in $q'$, corresponding to role names $r_1,\ldots,r_k$, respectively. We define a binary relation $\leadsto$ on these clusters by defining, for all $i\neq j$, $C_i\leadsto C_j$ if there is some $x\in \text{var}(C_i)\cap\text{var}(C_j)$ such that:
\begin{itemize}

\item $x$ is initial in $C_j$, and 

\item \new{if $r_i$ is non-transitive then $x$ is not initial in $C_i$.}

\end{itemize}
The definition of $\leadsto$ is 
\new{designed} 
to mimic the intuition about cluster trees provided in the main part. Intuitively, $C_i\leadsto C_j$ 
\new{means that $C_j$ is mapped `below' $C_i$ by matches to transitive-tree interpretations.} 
In this spirit, $\leadsto$ will be the starting point when constructing 
\new{a}
cluster tree for $q'$ later on in this proof. 

Let us first characterize when $q'$ has a match into a \new{transitive-tree} interpretation.

\medskip\noindent
\textbf{Claim~1.} $q'$ has a match into a transitive-tree interpretation iff the following conditions are satisfied: 
 \begin{enumerate}[label=(\roman*),leftmargin=*]
 
     \item \label{it:acyclicity} $q'$ is acyclic;
 
     \item \label{it:incoming} $q'$ does not contain atoms of shape $r(x,z),s(y,z)$ with $r\neq s$;
     
     \item \label{it:leadsto_acyclicity} the relation $\leadsto$ is acyclic.
     
 \end{enumerate}
 
 \medskip\noindent \textit{Proof of Claim~1.} We start with the ``only if'' direction. If any of Conditions~\ref{it:acyclicity} and~\ref{it:incoming} is not satisfied, then clearly $q'$ does not have a match into a \new{transitive-tree interpretation}. Suppose now that 
 Condition~\ref{it:leadsto_acyclicity} is not satisfied, in order to show the same conclusion.
 
 Let us consider a sequence of clusters $D_1,\ldots,D_{m-1}$ with $m>2$ that forms a cycle with respect to $\leadsto$, that is, $D_{1}\leadsto \ldots \leadsto D_{m-1}\leadsto D_{1}$. For the sake of convenience, we define $D_m=D_1$. Let $s_1,\ldots,s_{m-1}$ be the roles of the clusters, and let $x_1,\ldots, x_{m-1}$ be the variables witnessing $\leadsto$, that is: for each $j<m$, $x_j$ belongs to $\text{var}(D_j)\cap\text{var}(D_{j+1})$, is initial in $D_{j+1}$, and not initial in $D_j$ if $s_j$ is non-transitive.
 
 Notice that one of $s_1,\ldots,s_{m-1}$ has to be non-transitive. Otherwise all $D_j$ would be $t$-clusters, 
 \new{which is impossible because different $t$-clusters never share variables.}\footnote{When two transitive roles are allowed, there might be no non-transitive cluster involved. And indeed, this is the point where an attempt to prove the claim for more than one transitive roles would fail. \new{For example, the naughty query from Figure~\ref{fig:naughty}~(left) can be matched in a transitive-tree interpretation, but the corresponding $\leadsto$ is not acyclic.}} Without loss of generality, we assume that $s_1$ is non-transitive. 
 
  As $s_1$ is not transitive, $x_1$ is not initial in $D_1$, and there is an atom $s_1(x_0,x_1)\in D_{1}$, with $x_0$ being the unique initial variable in $D_1$. But since $x_{m-1}$ is also initial in $D_m=D_1$, we have $x_{m-1}=x_0$.
 
 Assume now that $q'$ admits a match $\delta$ into a \new{transitive-tree interpretation} $\interp$. Because of the atom $s_1(x_0,x_1)$, we have $\pair{\delta(x_0)}{\delta(x_1)}\in s_1^{\interp^+}$, and by simple induction we can show that $\delta$ maps every variable $x_j$ for $1 \leq j \leq  m-1$ strictly below $\delta(x_0)$, thus reaching contradiction with the assertion $x_{m-1}=x_0$. This concludes the proof of the ``only if'' direction.

\medskip 

Let us now prove the ``if'' direction. We suppose that $q'$ satisfies Conditions~\ref{it:acyclicity}--\ref{it:leadsto_acyclicity}, and we show that $q'$ has a match to a \new{transitive-tree interpretation}.

We define from $q'$ a new CQ $q''$ where each $t$-cluster is ``linearized'' in a way described as follows. Consider any $t$-cluster $C$ in $q'$. Since, by Condition~\ref{it:acyclicity}, $q'$ is acyclic so is the cluster $C$. Let $\ell\geq 1$ be the length of the longest path in $C$. Due to acyclicity, we can assign \emph{levels} $0,\ldots,\ell$ to the variables in $C$ in such a way that $t(x,y)\in C$ implies that the level of $x$ is lower than the level of $y$, and that the initial variables in $C$ have level~$0$.\footnote{There might be several assignments satisfying these conditions, but it is sufficient for us to consider any of them.} 

First, $q''$ contains every atom of $q'$ of the shape $r(x,y)$ (resp. $A(x)$), where $x$ and $y$ are \emph{not} variables of any $t$-cluster of $q'$. Second, for every $t$-cluster $C$, with longest path of length $\ell\geq 1$:

    \begin{enumerate}
        \item $q''$ contains the chain of atoms $t(x^C_0,x^C_1),\ldots,$ $t(x^C_{\ell-1},x^C_\ell)$, with fresh variables $x^C_0,\ldots, x^C_{\ell}$.

        \item If there exists in $q'$ an atom of the shape $r(y,x)$ with $x\in\var(C)$ and~$r\neq t$, then $q''$ contains the atom $r(y,x^C_0)$. (Note that, due to 
        \new{Condition}~$(\ddagger)$, there is at most one such $x$, and by Condition~(ii) it is initial in $C$.) 

        \item For each atom $r(x,y)$ (resp. $A(x)$), with $r\neq t$ and $x\in\var(C)$ of level $i\leq \ell$, $q''$ contains the atom $r(x^C_i,y)$ (resp. $A(x^C_i)$).
    \end{enumerate}
Let $\Imc$ be 
the query $q''$ viewed as an interpretation in the standard way. By construction, $q'$ has a match to the transitive closure of $\Imc$. It remains to argue that $q''$ (and hence \Imc) is tree-shaped. Observe that each variable $x$ in $q''$ has at most one incoming atom $r(y,x)$, due to Condition~(ii) and 
\new{Conditions}~$(\dagger)$ and~$(\ddagger)$. So it suffices to show that $q''$ does not have a cycle. 
Suppose otherwise and consider a cycle, that is, a sequence of atoms $r_1(x_1,x_2),r_2(x_2,x_3),\ldots,r_n(x_n,x_1)$ in $q''$. Let $C_1,C_2,\ldots,C_n$ be the clusters containing these atoms and set, for convenience, $C_{n+1}=C_1$. By definition of $\leadsto$, it is the case that $C_i\leadsto C_{i+1}$ for all $i\leq n$, hence $\leadsto$ is not acyclic in contradiction to Condition~(iii).

%
%

This finishes the proof of Claim~1.\hfill $\dashv$

\bigskip
Clearly, Conditions~\ref{it:acyclicity}--\ref{it:leadsto_acyclicity} can be checked in polynomial time. This finishes the proof of the first part of the lemma \new{of Boolean CQs}.

\bigskip
It remains to argue that $q'$ is a PTQ if Conditions~\ref{it:acyclicity}--\ref{it:leadsto_acyclicity} from the claim
are satisfied. Condition~\ref{it:acyclicity} implies that Condition~\ref{it:cluster_acyclicity} of Definition~\ref{def:ptq} (of PTQs) is satisfied. To show that also Condition~\ref{it:cluster_tree} of Definition~\ref{def:ptq} is satisfied, we carefully arrange the clusters
$C_1,\ldots,C_k$ of $q'$ 
in a cluster tree satisfying \new{Items}~(a)--(c) of Definition~\ref{def:clustertree} (of cluster trees). As a starting point for this, we define $C_j$ as \emph{child} of $C_i$ if $C_i\leadsto C_j$ and one of the following  holds:
\begin{enumerate}[label=(\Alph*),leftmargin=*]

    \item $C_j$ is a $t$-cluster;


    \item both $C_i$ and $C_j$ are not $t$-clusters;

    \item $C_i$ is a $t$-cluster, $C_j$ is not a $t$-cluster and there is no $r$-cluster $D$ with $r\neq t$ and $D\leadsto C_j$.
    
\end{enumerate}

\noindent \new{Note that, in (C), such a cluster $D$ might exist if the variable shared by $C_i$ and $C_j$ is initial in $C_i$.}

\medskip\noindent\textbf{Claim~2.} Every cluster is the child of at most one other cluster. 

\medskip\noindent\textit{Proof of Claim~2.}  Suppose first that $C$ is some $t$-cluster. If it is the child of some other cluster $D$, then \new{it is} due to (A). 
Assume that there are \new{two different clusters,} $D_1$ and $D_2$, with $D_i\leadsto C$ for $i=1,2$, and let $x_i$ be the shared variable witnessing the latter. Note that both $D_i$ have to be $r_i$-clusters for some $r_i\neq t$, \new{because $t$-clusters never share variables}. 
\new{By the definition of $\leadsto$, $x_i$ cannot be initial in $D_i$. Hence, there are atoms $r_1(y_1,x_1),r_2(y_2,x_2)$ in $D_1,D_2$. From \new{Condition}~$(\ddagger)$ it follows that $x_1=x_2$. If $r_1 \neq r_2$, we get a contradiction with Condition~\ref{it:incoming}. If $r_1 = r_2$, by Condition~$(\dagger)$, $y_1=y_2$, which is impossible because two different $r_1$ clusters cannot share the initial variable. }

Suppose now that $C$ is an $r$-cluster for $r\neq t$. Then only (B) or (C) can make $C$ a child of other clusters. In case (C) applies, (B) does not apply and $C$ is the child of a unique cluster. If (B) applies, then due to Condition~\ref{it:incoming} and the definition of $\leadsto$, \new{it is impossible to find two different clusters $D_1$ and $D_2$ such that $D_i$ is an $r_i$-cluster with $r_i\neq t$ and $D_i\leadsto C$ for $i=1, 2$.}

This finishes the proof of Claim~2.\hfill $\dashv$

\medskip Hence, the child relation forms a forest, and it is not difficult to verify that Conditions~(a)--(c) of Definition~\ref{def:clustertree} of cluster trees are satisfied. We now turn this forest over clusters into a tree. Suppose the forest consists of trees $T_1,\ldots,T_m$ of clusters. We denote with $V_i$ the set of all variables that occur in (a cluster in) $T_i$. Clearly, the $V_i$ cannot be pairwise disjoint since $q'$ is connected. So suppose distinct $V_i,V_j$ share some variable $x$. Note that $x$ cannot appear as a non-initial variable in two different clusters, due to Condition~(ii). So let us suppose without loss of generality that $x$ occurs only as an initial variable in $T_i$, and let $D$ be any cluster in $T_i$ that mentions $x$.

We argue that $D$ is not a $t$-cluster. Towards a contradiction, assume that $D$ is a $t$-cluster and let $C$ be a cluster in $T_j$ mentioning $x$. Then, $C$ is an $r$-cluster for some $r\neq t$, \new{because different $t$-clusters never share variables. We now show that $C$ and $D$ are siblings, or one is a child of the other, either of which is impossible because $T_i$ and $T_j$ are disjoint.   
If $x$ is non-initial in $C$, then $C\leadsto D$ and hence $D$ is a child of $C$ by (A). Suppose that $x$ is initial in $C$. Then, $D\leadsto C$. If there is no $s$-cluster $D'$ with $s\neq t$ and $D'\leadsto C$, then $C$ is a child of $D$ by (C). If such $D'$ does exists, then both $D$ and $C$ are children of $D'$. We conclude that $D$ cannot be a $t$-cluster.}

Hence, $D$ is an $r$-cluster for $r\neq t$. Using 
\new{a very similar argument,} 
we can show that every cluster $C$ in $T_j$ mentioning $x$ is not a $t$-cluster. So let $C$ be any such cluster in $T_j$. Clearly, $x$ has to be initial in both $D$ and $C$, as otherwise either $D$ would be child of $C$ or vice versa. But then both $D$ and $C$ are also root clusters of $T_i$ and $T_j$, respectively. \new{Indeed, if either had a parent, the other one would be its sibling. }


Overall, we have shown that, if there is a variable shared between different $T_i,T_j$, then it has to be the initial variable of the root cluster in both $T_i$ and $T_j$, \new{neither of which is a $t$-cluster.} Clearly, there cannot be two such variables, due to connectedness of $q'$. Hence, $T_1,\ldots,T_m$ share a single variable $x$ with the mentioned properties. 
We then make the root cluster of $T_2,\ldots,T_m$ children of the root cluster of $T_1$. (The choice of $T_1$ is arbitrary, any other $T_i$ would work as well.) It can be shown that Conditions~(a)--(c) of Definition~\ref{def:clustertree} are preserved by this operation. This finishes the proof of the Boolean case.

\bigskip 
It remains to discuss the unary case. Let $q(x)$ be a unary CQ with $x$ initial. Let $q'(x')=q_\approx([x])$, where $q_\approx$ as in Definition~\new{\ref{def:queryequivalence}}. By Item~2 of Lemma~\ref{lem:queryequivalence}, we have $\pair{\Imc}{d}\models q(x)$ iff $\pair{\Imc}{d}\models q'(x')$ for all $\Imc\in\mathbf T$. One can show along the lines of Claim~1 that $\pair{\Imc}{d}\models q'(x')$ for some $\Imc\in\mathbf T$ with root $a$ iff $q'$ satisfies Conditions~(i)--(iii) from Claim~1 and $x'$ is initial in $q'$. 

Let $q_1(x'),\ldots,q_k(x')$ be all queries that can be obtained from $q'(x')$ by the following process: 

\begin{itemize}

    \item take all unary atoms $A(x')\in q'$;

    \item pick an atom $r(x',y)$ and compute the set $V$ of variables reachable from $y$ in the directed graph underlying $q'$;

    \item include in $q_i$ the atom $r(x',y)$ and all atoms that mention only variables from $V\cup\{x'\}$.
    
\end{itemize}
Item~3 of Lemma~\ref{lem:queryequivalence} implies that for all $\Imc\in\mathbf T$, $\pair{\Imc}{d}\models q'(x')$ iff $\pair{\Imc}{d}\models q_i(x')$ for all $i\leq k$. 

It remains to define the child relation as before (on each individual $q_i$) and observe that Claim~2 is not affected by having an answer variable $x'$. By construction, $x'$ is initial in each of the $q_i(x')$ and it is contained in a uniquely determined root cluster. Hence, each $q_i(x')$ is a PTQ, and we can return $q_1(x'),\ldots,q_k(x')$.
%

\end{proof}

\begin{lemma}\label{lem:matchtotreeshaped}
    There is a polynomial time algorithm that decides whether a given connected (unary or Boolean) CQ has a match to a \new{transitive-tree interpretation}.
\end{lemma}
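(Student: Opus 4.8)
The plan is to reduce matchability into $\mathbf T$ to a purely combinatorial test on a canonically folded version of the query, and to run that test in polynomial time. Fix a connected CQ $q$; I treat the Boolean case and indicate the (minor) changes for the unary case at the end. First I would reformulate what a match into $\mathbf T$ is. If $\Imc=\Imc_0^+$ for a tree-shaped $\Imc_0$ and $\delta$ is a match of $q$ in $\Imc$, then every atom forces its target strictly below its source in the tree order of $\Imc_0$: a non-transitive atom $r(x,y)$ forces $\delta(x)$ to be the \emph{parent} of $\delta(y)$ with the connecting edge labelled $r$, whereas a transitive atom $t(x,y)$ forces a descending path from $\delta(x)$ to $\delta(y)$ all of whose edges are labelled $t$. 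Since each node has a unique parent edge carrying a single label, the last edge of every such path is exactly the parent edge of $\delta(y)$; consequently \emph{all} atoms entering a given variable must share the same role name. This yields a necessary \emph{uniform-incoming-role} condition. Conversely, to witness matchability it suffices to exhibit \emph{any} transitive-tree together with such a $\delta$.

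Next I would eliminate the freedom of identifying variables by computing a forced folding. Generalising Definition~\ref{def:queryequivalence} to arbitrarily many transitive roles, let $\approx$ be the least equivalence on $\var(q)$ that merges the sources of equally-labelled non-transitive atoms entering a common target and equates the two non-transitive in-edges feeding the same transitive cluster (rules $(\dagger)$ and $(\ddagger)$). As in Lemma~\ref{lem:queryequivalence}, every match into $\mathbf T$ respects $\approx$, by induction on the generation of $\approx$ using uniqueness of parents and of descending monochromatic paths in a tree; so one may replace $q$ by $q_\approx$, computable in polynomial time. It then remains to decide matchability of $q_\approx$ by checking, in polynomial time: \textbf{(a)} $q_\approx$ is directed-acyclic, so that ``strict ancestor'' is a strict partial order $\sqsubset$ (its transitive closure); \textbf{(b)} the uniform-incoming-role condition; and \textbf{(c)} a \emph{branch-consistency} condition obtained as follows. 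For each transitive role $t$ and each variable $z$, the $t$-in-neighbours of $z$ together with $z$ must lie on a single descending $t$-monochromatic branch; one threads these variables onto one chain, extends $\sqsubset$ accordingly, and propagates the forced edge colours, rejecting if any tree edge is forced to carry two distinct role names. All three tests are polynomial in $\|q\|$.

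The main obstacle is step~\textbf{(c)}, precisely because with several transitive roles the clean cluster-tree characterisation of Lemma~\ref{lem:to_ptq} is unavailable. On one hand, the naughty query of Figure~\ref{fig:naughty} is matchable although its $\leadsto$ relation is cyclic, so $\leadsto$-acyclicity is \emph{not} necessary. On the other hand, conditions \textbf{(a)} and \textbf{(b)} alone are \emph{not} sufficient: the acyclic, uniform-incoming query $\{t(a,c),t(b,c),r(a,f),r(b,f)\}$ with distinct transitive $t,r$ has no match, because the tree property forces $a$ and $b$ onto one branch, whereupon the lower of the two lies strictly inside both a $t$-path (to $c$) and an $r$-path (to $f$), forcing its parent edge to carry both $t$ and $r$. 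Thus the delicate point is that the forced comparabilities stem from the tree property, not only from $\sqsubset$, and may induce colour conflicts that must be detected globally.

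The crux of the correctness argument is therefore \emph{completeness} of the propagation in \textbf{(c)}: showing that whenever no colour conflict is detected, the forced comparabilities can be serialised into an actual transitive-tree realising a match. I would prove this by constructing the tree greedily along a topological order of $\sqsubset$, attaching each class under the lowest class forced above it, colouring each new parent edge by the uniform incoming role of its child, and verifying—using the absence of detected conflicts—that every monochromatic-path requirement of a transitive atom is met by the resulting branch. Correctness of \textbf{(b)} and \textbf{(c)} then gives both directions. For the unary case one applies the same procedure to $q_\approx([x])$ and additionally requires the answer variable to be initial and mapped to the root, exactly as in the unary part of Lemma~\ref{lem:to_ptq}.
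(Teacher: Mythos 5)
There is a genuine gap, and it is located exactly at your step \textbf{(c)}. Your own motivating counterexample is wrong: the query $\{t(a,c),t(b,c),r(a,f),r(b,f)\}$ with $t,r$ distinct transitive roles \emph{does} have a match into a transitive-tree interpretation, namely by mapping $a$ and $b$ to the same element, whose two children (one via a $t$-edge, one via an $r$-edge) host the images of $c$ and $f$. Your argument that it is unmatchable silently assumes $\delta(a)\neq\delta(b)$ (``the lower of the two''); since $a\approx b$ is \emph{not} forced here (rule $(\dagger)$ only applies to non-transitive roles), the folding $q_\approx$ leaves $a$ and $b$ distinct, and a correct algorithm must still accept. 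But your branch-consistency test, which threads the $t$-in-neighbours of $c$ (and the $r$-in-neighbours of $f$) onto \emph{strict} chains and then detects a colour conflict on the parent edge of the lower element, would reject this query. So condition (c) as designed is unsound, and the greedy completeness argument you sketch cannot be carried out, because it is built on the same false premise.

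The missing idea is precisely the one the paper supplies: besides the \emph{forced} identifications of $\approx$, one must also perform identifications that are \emph{not} forced by every match but preserve matchability, namely rule (R2): identify all distinct initial variables of a common $t$-cluster. In your example this merges $a$ and $b$, after which the paper's two simple tests (acyclicity and no variable with incoming atoms over two different role names) correctly characterize matchability, with no analogue of your test (c) needed. The price is that (R2) cannot be justified by the ``every match respects it'' argument you use for $\approx$; the paper instead proves that if the query before identification has a match into some transitive-tree interpretation, then the query after identification has a match into a \emph{modified} one, obtained by surgery on the tree (linearizing the relevant subtree into a path and re-attaching subtrees). Any repair of your proposal has to either import (R2) together with this rearrangement lemma, or make the threading in (c) nondeterministically allow equalities between threaded variables---at which point you must argue a canonical choice exists to stay in polynomial time, which is exactly the work that (R2) and the surgery argument do.
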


\begin{proof}
Consider a connected (unary or Boolean CQ) $q$. 
We first transform $q$ by applying the following steps exhaustively:
\begin{enumerate}[label=(R\arabic*),leftmargin=*]
 
   \item If $q$ contains distinct atoms $r(x,z),r(y,z)$ for non-transitive $r$, then identify $x$ and $y$.
     

  \item If $q$ contains a $t$-cluster, $t$ transitive, with distinct initial variables $x,y$, then identify $x,y$.
  
 \end{enumerate}
Let $q'$ be the result of the transformation.
Note that the transformation can be done in polynomial time. 

\begin{claim}
\label{cl:charac_q'}
    $q$ has a match to a \new{transitive-tree interpretation} iff $q'$ satisfies the following conditions: 
 \begin{enumerate}[label=(\roman*),leftmargin=*]
 
     \item $q'$ is acyclic;
 
     \item $q'$ does not contain atoms $r(x,z),s(y,z)$ with $r\neq s$.

 \end{enumerate}
\end{claim}

\textit{Proof of the claim.} The "if"-direction is rather straightforward. The important insights are that (a) there is a match of $q$ to $q'$ (viewed as interpretation) and (b) we can transform $q'$ to a TQ $q''$ by "linearizing" the transitive clusters and such that $q'$ has a match to $q''$ (viewed as interpretation). See the proof of Lemma~\ref{lem:to_ptq} for details on the linearization.

For "only if", we show that for every (unary or Boolean) CQ $p$ we have that $p$ has a match to a \new{transitive-tree interpretation} if the query $p'$ obtained from $p$ by applying~(R1) or~(R2) has. This is clear for~(R1), since any match of $p$ containing $r(x,z),r(y,z)$ with $r$ non-transitive into a \new{transitive-tree interpretation} has to map $x$ and $y$ to the same element. 

For~(R2), suppose first that $p'$ has a match to a transitive-tree interpretation~\Imc. Since $p'$ is obtained from $p$ via variable identification, there is also a match of $p$ to $\Imc$. Suppose now that $p$ has a match $\rho$ to $\Imc^+$ for some tree-shaped interpretation~\Imc, and let $x,y$ be the variables in the $t$-cluster $C$ identified by~(R2). Let moreover denote $V$ the set of variables $V\subseteq \var(C)$ that are reachable in $C$ from $y$ but not from $x$ (including $y$). Consider the restriction $\Jmc$ of $\Imc$ to domain $\{\delta(z)\mit z\in V\}$. Clearly, $\Jmc$ is a tree. We transform $\Jmc$ into a path $\widehat\Jmc$ by identifying domain elements having the same distance from the root. Then we modify $\Imc$ as follows: 
\begin{itemize}

    \item merge the root of $\widehat\Jmc$ with $\delta(x)$ in $\Imc$; 
    
    \item make every successor of $\delta(x)$ a successor of the last element of $\widehat \Jmc$ (these successors are then not direct successors of $\delta(x)$ anymore);

    \item for every non-$t$-successor $e$ of some element $d$ in $\Jmc$ make $e$ a successor of the element in $\widehat J$ that identifies with $d$. 
    
\end{itemize}
Let $\widehat\Imc$ be the result of this transformation.
It is routine to verify that $p'$ has a match to $\widehat\Imc^+$.

\smallskip It remains to note that Conditions~(i) and~(ii) are obviously necessary conditions for having a match to a \new{transitive-tree interpretation}. 
\hfill$\dashv$

\medskip Since $q'$ can be computed in polynomial time and the conditions in Claim~1 can be decided in polynomial time as well, this finishes the proof of the lemma.
\end{proof}

\section{Appendix of Section~\ref{sec:upper}}
\label{ap:upper}

\new{This Appendix contains the missing proofs of the various results in Section~\ref{sec:upper}.}

\subsection{Proofs of Theorems~\ref{th:reduction-rooted} and~\ref{th:reduction-single}} \label{sap:reduction-single}

In order to prove Theorems~\ref{th:reduction-rooted} and~\ref{th:reduction-single}, we first provide a characterization of non-entailment of UCQs in terms of non-entailment of queries in the form of rooted entailment, see Propositions~\ref{pr:reduction-rooted} and~\ref{pr:reduction-single} below. Both the nature of the characterization and the tools used to prove it are relatively standard. Similar techniques can be found, for example, in~\cite{Lutz08}. 

First, we introduce the standard notion of forest interpretation: if $\Theta$ is a finite subset of $\NI$, then an interpretation $\interp$ is a \emph{$\Theta$-forest interpretation} if there is a family $(\Imc_a)_{a\in\Theta}$ of tree-shaped interpretations rooted in $a$ such that (i) the $(\Imc_a)_{a\in\Theta}$ have pairwise disjoint domains, and (ii) $\Imc$ is obtained as the transitive closure of the union of the restriction of $\Imc$ to $\Theta$ and all $\Imc_a$. Note that in all forest interpretations, transitive role names are interpreted as transitive relations. Furthermore, the tree-shaped interpretations $\Imc_a$ are uniquely determined by $\Imc$ and $\Theta$. We refer to them as \emph{induced by $\Imc$}.

It is known that UCQ entailment can be studied over forest interpretations without loss of generality, see for example~\cite{GlimmLHS08}.

\begin{restatable}{lemma}{forest}
\label{lem:forest}
  Let $\Kmc = \pair{\tbox}{\abox}$ be a knowledge base. Then: 
  \begin{enumerate}
      \item For each Boolean UCQ $Q$ with $\Kmc\not\models Q$, there is an $\indiv(\abox)$-forest interpretation $\Jmc$ 
  such that $\Jmc^+\vDash\Kmc$ but $\Jmc^+\not\models Q$.

  \item For each tuple $\bar a$ from $\NI(\Amc)$ and each rooted UCQ $Q(\bar x)$ with $\Kmc\not\models Q(\bar a)$, there is an $\indiv(\abox)$-forest interpretation $\Jmc$ 
  such that $\Jmc^+\vDash\Kmc$ but $\pair{\Jmc^+}{\bar a}\not\models Q(\vec x)$.
  \end{enumerate}
\end{restatable}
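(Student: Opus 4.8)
The plan is to establish this forest-model property by unraveling an arbitrary countermodel into a forest-shaped interpretation and then transferring non-satisfaction of the query back along a type- and role-preserving homomorphism. Since $\Kmc\not\models Q$ (resp.\ $\Kmc\not\models Q(\bar a)$), I would first fix a model $\Imc\models\Kmc$ with $\Imc\not\models Q$ (resp.\ $\pair{\Imc}{\bar a}\not\models Q(\bar x)$); such an $\Imc$ exists by the definition of entailment. The goal is to build from $\Imc$ an $\indiv(\abox)$-forest interpretation whose transitive closure is still a model of $\Kmc$ but still refutes the query.

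Next I would define the forest skeleton by the standard unraveling around $\Theta=\indiv(\abox)$: keep the restriction $\Imc|_\Theta$ as the (arbitrary) core on the named individuals, and replace the anonymous part reachable from each $a\in\Theta$ by the tree of paths $a\,r_1\,e_1\cdots r_n\,e_n$ that start at $a$ and follow role edges of $\Imc$ into the anonymous domain, putting the tree edge $p\to p\cdot r\cdot e$ exactly when $(\mathrm{tail}(p),e)\in r^\Imc$. Concept-name membership of each path is declared to be inherited from its last element, so that the tail map $h$ sending each path to its last element (and each named individual to itself) preserves types \emph{exactly}. By construction each tree is tree-shaped with pairwise disjoint domains, named individuals are only linked by core edges, and the transitive closure $\Jmc^+$ of the core together with the trees is a $\Theta$-forest interpretation. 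To guarantee every existential CI $A\sqsubseteq\exists r.B$ is witnessed, I would, as usual, add a child for \emph{every} $\Imc$-edge leaving the tail, so the unraveling over-generates successors rather than choosing witnesses; coincidences arising if two names denote the same element of $\Imc$ are handled routinely.

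The crux is to verify that $h$ is a homomorphism $\Jmc^+\to\Imc$, i.e.\ that it preserves all role edges, and this is exactly where transitivity enters. For a non-transitive role every $\Jmc^+$-edge is a tree or core edge and maps to an $\Imc$-edge directly; for a transitive $t$, an edge $(p,q)\in t^{\Jmc^+}$ arises from a $t$-path $p=p_0\to\cdots\to p_k=q$ in the skeleton, each step maps to a $t^\Imc$-edge, and since $t^\Imc$ is transitive (as $\Imc\models\Tmc$) we get $(h(p),h(q))\in t^\Imc$. Granting this, model-hood of $\Jmc^+$ follows: the concept-equation CIs transfer because $h$ preserves types exactly; existential CIs are met by the added children (or by core edges for named individuals); and universal CIs $A\sqsubseteq\forall r.B$ transfer precisely because $h$ preserves role edges, since $(p,q)\in r^{\Jmc^+}$ and $p\in A^{\Jmc^+}$ force $h(p)\in A^\Imc$, $(h(p),h(q))\in r^\Imc$, hence $h(q)\in B^\Imc$ and $q\in B^{\Jmc^+}$. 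The ABox holds because $h$ is the identity on $\Theta$ and the core retains $\Imc|_\Theta$. Finally, any match $\delta$ of a disjunct $q\in Q$ in $\Jmc^+$ would compose with $h$ to a match $h\circ\delta$ of $q$ in $\Imc$, contradicting $\Imc\not\models Q$; in the rooted case $h$ fixes $\bar a$, so a match with $\delta(\bar x)=\bar a$ would yield $\pair{\Imc}{\bar a}\models Q(\bar x)$, again a contradiction. I expect the main obstacle to be exactly this transitive case of the homomorphism check and the resulting transfer of $\forall t.B$ constraints: one must be careful that $t$-edges of $\Jmc^+$, which appear only after transitive closure, still map down to genuine $t^\Imc$-edges, and that the skeleton trees stay genuinely tree-shaped (no parallel edges) so that $\Jmc^+$ is a legitimate forest interpretation.
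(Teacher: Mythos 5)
Your overall strategy is the right one, and it is worth noting that the paper itself does not prove Lemma~\ref{lem:forest} at all: it treats the statement as a known fact and points to the literature (see the sentence preceding the lemma, citing \cite{GlimmLHS08}). What you wrote is precisely the standard unraveling argument underlying those references, and your handling of the crux you identify is correct: tree/core edges map to $\Imc$-edges, closure edges map to $t^\Imc$-edges by transitivity of $t^\Imc$, exact type preservation gives back the Boolean and universal CIs, and composing any match with the tail homomorphism $h$ transfers non-satisfaction of $Q$ (also in the rooted case, since $h$ is determined on the individuals).

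There is, however, one genuine flaw in the construction as you literally state it: you build the trees from paths that ``follow role edges of $\Imc$ into the anonymous domain''. In an arbitrary countermodel an anonymous element $d$ may have its only $r$-successor satisfying $B$ among the elements $b^\Imc$ interpreting ABox individuals (models in which anonymous elements cycle back to named ones are perfectly legal). Under your restriction such edges are dropped, so the path node representing $d$ loses its witness for $A\sqsubseteq\exists r.B$ and $\Jmc^+\not\models\Tmc$; keeping such edges as edges pointing to the roots is not allowed either, because in a $\Theta$-forest interpretation the only edges outside the trees run between elements of $\Theta$, and within a tree no edge returns to the root. The standard repair---consistent with your later remark that one should ``add a child for \emph{every} $\Imc$-edge leaving the tail''---is to unravel \emph{through} named elements: when $(\mathrm{tail}(p),e)\in r^\Imc$ with $e=b^\Imc$ for some $b\in\Theta$, create the fresh path node $p\cdot r\cdot e$ as a copy of $b^\Imc$ inside the tree (not identified with the root $b$), give it the type of $b^\Imc$, and continue unraveling below it. With this one correction, which keeps $h$ well defined as the tail map, the rest of your argument (homomorphism, model-hood, query transfer, tree-shapedness via recording the role name in each path node) goes through verbatim.
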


We now develop query decompositions, with the intention of decomposing possible matches of a CQ to forest interpretations. We need two auxiliary notions. A \emph{subdivision} of a CQ $q$ is any CQ that can be obtained by replacing an arbitrary number of atoms $t(x,z)$ with $t$ transitive by two atoms $t(x,y),t(y,z)$, for some fresh variable $y$. For a set $\Theta$,
a \emph{$\Theta$-split of a CQ $q$} is a family of pairs $(U_a,V_a)_{a\in \Theta}$ such that the 
following properties are satisfied, for each $a\in\Theta$: 
\begin{enumerate}[label=(S\arabic*),leftmargin=*]

  \item $\bigcup_{a\in\Theta}U_a\cup V_a =  \text{var}(q)$; 

  \item the sets $U_a,V_a$ are disjoint and disjoint from all other $U_b,V_b$ for $a\neq b\in\Theta$;

  \item for each $x\in V_a$ and $r(x,y)\in q$, we have $y\in V_a$;

  \item for each $x\in U_a$ and $r(x,y)\in q$, we have $y\in V_a$ or $y\in U_b$ for some $b\in \Theta$ with $b\neq a$.

\end{enumerate}

For a subset $V\subseteq \text{var}(q)$, let us denote by $q|_V$ the restriction of $q$ to variables in $V$. Given a $\Theta$-split $\sigma=(U_a,V_a)_{a\in\Theta}$ of $q$, we first define a query $q_\sigma$ by replacing in $q$, for each $a\in\Theta$, all variables in $U_a$ (if any) with $x_a$. 
%
Then, we define queries $\widehat {q_\sigma}$ and $q_{\sigma}^a$ by taking:
\begin{align*}
\widehat {q_\sigma} & = q_\sigma|_{\{x_a\mid a\in \Theta\}} \\
q_\sigma^a & =q_\sigma|_{\{x_a\}\cup V_a}, \text{ for each $a\in \Theta$}
\end{align*}
We treat $q_\sigma^a$ as a unary CQ with free variable $x_a$ if $U_a\neq\emptyset$, and there is an atom $r(x,y)\in q$ with $x\in U_a$ and $y\in V_a$, and as a Boolean CQ otherwise. Let us finally denote with $\delta_{\Theta}$ the map defined by taking $\delta_\Theta(x_a)=a$, for each $a\in\Theta$.

We call a $\Theta$-split $\sigma=(U_a,V_a)_{a\in\Theta}$ of a subdivision of $q$
\emph{admissible} if the following two conditions are satisfied:
\begin{itemize}

  \item for each $a\in \Theta$, $q_\sigma^a$ has a match to the transitive
  closure of a tree-shaped interpretation, and

  \item every variable $z$ introduced in the subdivision is contained in some $U_a$.
  
\end{itemize}
We have the following characterization of admitting a match to forest-interpretations in terms of splits. 
\begin{restatable}{lemma}{matchsplitCQ}
\label{lem:splitCQ}
    Let $\interp$ be a $\Theta$-forest interpretation with induced tree-shaped interpretations $(\Imc_a)_{a\in \Theta}$. Then a Boolean $q$ admits a match to $\interp$ if and only if there exists an admissible $\Theta$-split $\sigma=(U_a,V_a)_{a\in\Theta}$ of a subdivision $p$ of $q$ such that:
    \begin{enumerate}[label=(\alph*),leftmargin=*]
       
        \item $\delta_\Theta$ is a match of $\widehat {p_\sigma}$ in $\Imc$, and
        
        \item for every $a\in\Theta$, there is a match $\delta_a$ of $p_\sigma^a$ in $\Imc_a^+$ which satisfies $\delta_a(x_a)=a$ in case $U_a\neq \emptyset$.
        
    \end{enumerate}
\end{restatable}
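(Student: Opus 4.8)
The plan is to prove both directions by decomposing, respectively recomposing, a match along the boundary between the named part of $\interp$ and the anonymous trees $\Imc_a$. The single structural fact driving everything is the geometry of a $\Theta$-forest interpretation: each $\Imc_a$ is a tree whose only named element is its root $a$, all its edges point away from the root, and the only edges of $\interp$ leaving a tree originate at named elements (via the restriction of $\interp$ to $\Theta$). Consequently, in $\interp$ every $r$-edge with a \emph{non-root} source stays inside the tree of that source, while an $r$-edge leaving a named element $a$ either descends into $\Imc_a$, reaches another named element, or (for transitive $t$ only) is obtained from the transitive closure by concatenating edges that pass through named elements. This last possibility is exactly what forces us to pass to a subdivision $p$ of $q$: by inserting fresh variables on transitive atoms we can break every such concatenated edge at the named elements it traverses, so that in $p$ every atom is realised either within a single tree $\Imc_a^+$ or between two named elements.

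For the \textbf{``if'' direction} I would take an admissible split $\sigma=(U_a,V_a)_{a\in\Theta}$ of a subdivision $p$ together with the matches $\delta_\Theta$ of $\widehat{p_\sigma}$ in $\interp$ and $\delta_a$ of $p_\sigma^a$ in $\Imc_a^+$, and define $\delta\colon\var(p)\to\Delta^\interp$ by $\delta(v)=a$ for $v\in U_a$ and $\delta(v)=\delta_a(v)$ for $v\in V_a$; this is well defined and total by (S1)--(S2). Checking that $\delta$ is a match of $p$ is a short case analysis on each atom $r(x,y)$: if both endpoints lie in the named part the atom survives in $\widehat{p_\sigma}$ and is matched by $\delta_\Theta$; if $x\in U_a$ and $y\in V_a$, or if $x,y\in V_a$, the atom survives in $p_\sigma^a$ and is matched by $\delta_a$; the remaining combinations ($x\in V_a$ with $y\notin V_a$, or $x\in U_a$ with $y\in V_b$ for $b\neq a$) are excluded by (S3) and (S4). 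Here I use that $\Imc_a^+$ is a subinterpretation of $\interp$ (as $\interp$ contains $\Imc_a$ and is transitively closed), so matches in $\Imc_a^+$ transfer to $\interp$. Finally, since $p$ is a subdivision of $q$, a match of $p$ yields a match of $q$ by collapsing each subdivided transitive path back to a single $t$-edge, using transitivity of $t^\interp$.

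For the \textbf{``only if'' direction} I would start from a match $\delta$ of $q$ in $\interp$ and first build the subdivision: for each transitive atom $t(x,y)$ fix a witnessing $t$-path in the underlying union graph, read off the named elements $b_1,\dots,b_j$ it visits before its final descent into one tree, and replace $t(x,y)$ by the chain $t(x,z_1),\dots,t(z_j,y)$ with fresh variables $z_i$ sent to $b_i$. Extending $\delta$ to these new variables, I then set $U_a=\{v\mid \delta(v)=a\}$ and $V_a=\{v\mid \delta(v)\in\Delta^{\Imc_a}\setminus\{a\}\}$; properties (S1)--(S2) are immediate, (S3) follows because a non-root source keeps all its successors inside its own tree, and (S4) follows because the subdivision has removed every transitive atom that would otherwise jump from a named element straight into the interior of another tree. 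Admissibility is then clear: every subdivision variable was placed at a named element (hence in some $U_a$), and each $p_\sigma^a$ is matched by the restriction of $\delta$ inside the transitive-tree interpretation $\Imc_a^+$; conditions (a) and (b) hold by construction of $U_a,V_a$.

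The \textbf{main obstacle} is the ``only if'' direction, and specifically the interplay between transitivity and the subdivision. One must argue that a witnessing $t$-path in a forest interpretation has the rigid shape ``a walk through named elements followed by a single monotone descent into one tree,'' so that subdividing exactly at the traversed named elements localizes every resulting atom. Getting this right is what makes (S3) and (S4) hold simultaneously; the descent-only orientation of the trees and the fact that named elements are the sole gateways between trees are the facts to lean on. Atoms whose endpoints both land at named elements (including the degenerate case of a loop at a single named element) require no splitting at all, as they are absorbed into $\widehat{p_\sigma}$ and discharged by $\delta_\Theta$.
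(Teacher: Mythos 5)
Your proposal is correct and takes essentially the same approach as the paper: the ``if'' direction glues $\delta_\Theta$ with the matches $\delta_a$ and collapses subdivided atoms using transitivity, while the ``only if'' direction extends the given match over a subdivision whose fresh variables sit at named elements and then defines $U_a=\{z\mid \delta'(z)=a\}$ and $V_a=\{z\mid \delta'(z)\in\Delta^{\Imc_a}\setminus\{a\}\}$, exactly as the paper does. The only difference is one of granularity: the paper subdivides each crossing transitive atom just once, sending the single fresh variable to the root $b$ of the destination tree (intermediate named elements need not be recorded, since condition (a) is evaluated in the transitively closed $\Imc$), whereas your chain $t(x,z_1),\dots,t(z_j,y)$ records every traversed named element, which is harmless but strictly speaking requires reading ``subdivision'' as closed under iterated application.
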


\begin{proof}
For "if", suppose that there exists an admissible $\Theta$-split $\sigma=(U_a,V_a)_{a\in\Theta}$ of a subdivision of $q$, that satisfies Items~(a) and~(b) from the lemma. It is routine to verify that the union of $\delta_U$ defined by taking
\[\delta_U(z)=a\text{ for all $a\in \Theta$, $z\in U_a$}\]
and the matches $\delta_a$, $a\in\Theta$ from Item~(b) is actually a match of $q$ in $\Imc$.

 For "only if", suppose that there exists a match $\delta$ of $q$ in~$\Imc$. We first define a subdivision $q'$ of $q$ and a match $\delta'$ of $q'$ in $\Imc$ as follows. Start with $\delta'=\delta$ and then:  
 \begin{itemize}
 
     \item Replace every atom $r(x,y)\in q$ with $r$ transitive and such that $\delta(x)=a\in\Theta$ and $\delta(y)\in \Delta^{\Imc_b}\setminus\{b\}$ for some $b\neq a$ with $r(x,z),r(z,y)$ for some fresh variable $z$ and set $\delta'(z)=b$.
     
 \end{itemize}
 It should be clear that the resulting $\delta'$ is a match of $q'$ in $\Imc$. Now, we define a $\Theta$-split of $q'$ by taking, for $a\in\Theta$:
 \begin{align*}
     U_a&=\{z\mid \delta'(z)=a\} \\
     V_a&=\{z\mid \delta'(z)\in \Delta^{\Imc_a}\setminus\{a\}\}
 \end{align*}
 It is not difficult to verify that $\sigma=(U_a,V_a)_{a\in\Theta}$ satisfies Conditions~(S1)-(S4) of splits and is admissible. Moreover, Item~(a) is satisfied, and Item~(b) is witnessed by $\delta'$ extended with $\delta'(x_a)=a$ for all $a\in \Theta$.
\end{proof}

There is the following analogue for rooted CQs. 
\begin{restatable}{lemma}{matchsplitRCQ}
\label{lem:splitRCQ}
    Let $\interp$ be a $\Theta$-forest interpretation with induced tree-shaped interpretations $(\Imc_a)_{a\in \Theta}$ and $a_0\in\Theta$, and let $\bar a$ be a tuple from $\Theta$. Then a rooted CQ $q(\bar x)$ admits a match $\delta$ in $\interp$ with $\delta(\bar x)=\bar a$ if and only if there exists an admissible $\Theta$-split $\sigma=(U_a,V_a)_{a\in\Theta}$ of a subdivision $p$ of $q$ such that:
    \begin{enumerate}[label=(\alph*),leftmargin=*]

        \item $x\in U_{\delta(x)}$, for all $x\in \bar x$,
       
        \item $\delta_\Theta$ is a match of $\widehat {p_\sigma}$ in $\Imc$, and
        
        \item for every $a\in\Theta$, there is a match $\delta_a$ of $p_\sigma^a$ in $\Imc_a^+$ which satisfies $\delta_a(x_a)=a$ in case $U_a\neq \emptyset$.
        
    \end{enumerate}
\end{restatable}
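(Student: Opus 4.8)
The plan is to adapt the proof of Lemma~\ref{lem:splitCQ} almost verbatim, adding only the bookkeeping needed to control the images of the answer variables $\bar x$. The crucial observation is that, since $\bar a$ is a tuple from $\Theta$, any match $\delta$ of $q(\bar x)$ in $\interp$ with $\delta(\bar x)=\bar a$ sends every answer variable to a named individual, i.e.\ to the root of one of the trees $\Imc_a$; consequently each answer variable will land in a $U$-set rather than a $V$-set, which is exactly what condition~(a) records. Throughout, for an answer variable $x$ I write $a_x$ for the component of $\bar a$ corresponding to $x$, so that condition~(a) reads $x\in U_{a_x}$.

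For the ``only if'' direction I would start from a match $\delta$ of $q(\bar x)$ in $\interp$ with $\delta(\bar x)=\bar a$ and run the construction from the proof of Lemma~\ref{lem:splitCQ} without change: form the subdivision $p$ by splitting every transitive atom $r(x,y)$ with $\delta(x)\in\Theta$ and $\delta(y)\in\Delta^{\Imc_b}\setminus\{b\}$ for some $b\neq\delta(x)$, extend $\delta$ to the induced match $\delta'$ of $p$, and set $U_a=\{z\mid \delta'(z)=a\}$ and $V_a=\{z\mid \delta'(z)\in\Delta^{\Imc_a}\setminus\{a\}\}$. The verifications of (S1)--(S4), of admissibility, and of items~(b) and~(c) are then identical to the Boolean case. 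The only genuinely new point is item~(a): the answer variables of a rooted CQ are original variables, never introduced by the subdivision, so $\delta'(x)=\delta(x)=a_x\in\Theta$, whence $x\in U_{a_x}$, as required.

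For the ``if'' direction, given an admissible split satisfying (a)--(c), I would define $\delta$ exactly as in the proof of Lemma~\ref{lem:splitCQ}, namely as the union of $\delta_U$ (with $\delta_U(z)=a$ for all $z\in U_a$) with the tree-matches $\delta_a$ from item~(c); the argument that this is a match of $q$ in $\interp$ carries over unchanged. To see in addition that $\delta(\bar x)=\bar a$, fix an answer variable $x$. By~(a) we have $x\in U_{a_x}$, so $\delta_U(x)=a_x$; moreover $U_{a_x}\neq\emptyset$, so the relevant clause of~(c) forces $\delta_{a_x}(x_{a_x})=a_x$, and hence the two pieces of $\delta$ agree on $x$ with value $a_x$. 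Ranging over all answer variables yields $\delta(\bar x)=\bar a$.

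I expect no serious obstacle, as essentially all the content is inherited from Lemma~\ref{lem:splitCQ}. The one place requiring care is reconciling the two definitions of $\delta$ on the answer variables, i.e.\ checking that $\delta_U$ and the relevant $\delta_{a_x}$ agree there; this is precisely where the nonemptiness clause of~(c) together with condition~(a) is used, and it is also where the anchoring of $\bar x$ at individuals of $\Theta$ is essential: were some answer variable mapped strictly inside a tree, it would fall into a $V$-set and condition~(a) would fail.
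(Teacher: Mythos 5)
Your proposal is correct and matches the paper's intent exactly: the paper states Lemma~\ref{lem:splitRCQ} without proof, presenting it as the direct analogue of Lemma~\ref{lem:splitCQ}, and the expected argument is precisely the one you give — rerun the Boolean construction, note that answer variables are never subdivision variables and are mapped into $\Theta$ (hence land in $U$-sets, giving condition~(a)), and in the converse direction use condition~(a) together with the nonemptiness clause of~(c) to conclude $\delta(\bar x)=\bar a$. No gap; the adaptation you spell out is exactly the routine bookkeeping the paper leaves implicit.
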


We next prove an auxiliary statement that will enable the proof of Theorem~\ref{th:reduction-rooted}. Intuitively, it is a reduction from entailment of rooted UCQs to entailment of the form $\pair{\Tmc}{\tau}\models Q$, as introduced in the main part. 

\begin{proposition} \label{pr:reduction-rooted}
Let $\Kmc=\pair{\tbox}{\abox}$ be a knowledge base, $Q(\bar x)$ be a rooted UCQ with $\bar x=\langle x_1,\ldots,x_n\rangle$, $\bar a=\langle a_1,\ldots,a_n\rangle$ a tuple from $\NI(\Amc)$, and let $\Theta = \NI(\Amc)$. Then $\Kmc\nvDash Q(\bar a)$ if and only if there is a model $\Jmc$ of $\abox$ and $\{ A\sqsubseteq \forall r.B\mid A\sqsubseteq \forall r.B\in \Tmc\}$ that only interprets the concept and role names in $\Kmc$ non-empty, and a family $(Q^1_a)_{a\in\Theta}$ of unary rooted UCQs, satisfying the following:
\begin{enumerate}

    \item For every $q\in Q$, every subdivision $p$ of $q$ and every admissible $\Theta$-split $\sigma=(U_a,V_a)_{a\in\Theta}$ of $p$ such that $x_i\in U_{a_i}$ for each $i\leq n$ and $\delta_{\Theta}$ is a match of $\widehat{p_{\sigma}}$ in $\Jmc$, there is some $a\in\Theta$ such that $U_a\neq\emptyset$ and $p_\sigma^a(x_a)\in Q_a^1$.
    
    \item For every $a\in \Theta$, we have  $\pair{\tbox}{\tp(\Jmc,a)}\not\models Q^1_a$.
     
\end{enumerate}
Moreover, in the ``only if'' direction, the family of rooted UCQs can be chosen so that the cardinality of 
each $Q^1_a$ is exponential in $\|Q\|$, yet the size of CQ in $Q^1_a$ is polynomial (even linear) in $\|Q\|$.
\end{proposition}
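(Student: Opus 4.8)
The plan is to prove both directions via the forest-model property (Lemma~\ref{lem:forest}) and the split characterization of matches (Lemma~\ref{lem:splitRCQ}), reading the latter as a decomposition of a potential match of $Q(\bar a)$ into a \emph{root part} living on the ABox individuals $\Theta=\NI(\Amc)$ and \emph{tree parts} living in the individual trees hanging off each $a\in\Theta$. Throughout, $\Jmc$ plays the role of the ABox part, and the queries in $Q^1_a$ are exactly the tree parts $p_\sigma^a(x_a)$ that must be avoided below $a$. For the ``only if'' direction, assume $\Kmc\nvDash Q(\bar a)$. By Lemma~\ref{lem:forest}(2) there is a $\Theta$-forest interpretation whose transitive closure $\Imc^+$ models $\Kmc$ with $\pair{\Imc^+}{\bar a}\nvDash Q$; I take it to interpret only the names in $\Kmc$, and, via a standard unraveling-style construction, I arrange that the existential CIs at each individual $a$ are witnessed inside its own induced tree $\Imc_a$, so that $\Imc_a^+$ is a standalone model of $\tbox$ (universal CIs survive deletion of the cross-individual edges, existential ones by the witnessing choice). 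Since $\Imc_a$ is attached to the rest only through its root $a$ and $\Smc$ has no inverses, the restriction of $\Imc^+$ to $\Delta^{\Imc_a}$ coincides with $\Imc_a^+$; in particular $\tp(\Imc_a^+,a)=\tp(\Imc^+,a)$. I then let $\Jmc$ be the restriction of $\Imc^+$ to $\Theta$: it models $\Amc$ and the universal CIs, uses only $\Kmc$-names, and satisfies $\tp(\Jmc,a)=\tp(\Imc^+,a)$.

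To define $Q^1_a$ and check the two conditions, I run over all triples $(q,p,\sigma)$ with $q\in Q$, $p$ a subdivision of $q$, and $\sigma$ an admissible $\Theta$-split of $p$ such that $x_i\in U_{a_i}$ for all $i$ and $\delta_\Theta$ matches $\widehat{p_\sigma}$ in $\Jmc$. For each such triple, Lemma~\ref{lem:splitRCQ} together with $\pair{\Imc^+}{\bar a}\nvDash Q$ forces item~(c) of that lemma to fail, i.e.\ there is some $a$ with $U_a\neq\emptyset$ and $\pair{\Imc_a^+}{a}\nvDash p_\sigma^a(x_a)$; I pick one such $a$ and add $p_\sigma^a(x_a)$ to $Q^1_a$. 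Each $p_\sigma^a$ is connected with answer variable $x_a$, since by the split conditions (S3)--(S4) every variable of $V_a$ is reachable from $U_a$, which collapses to $x_a$. Condition~1 then holds by construction, and condition~2 holds because $\Imc_a^+$ is a model of $\tbox$ whose root $a$ has type $\tp(\Jmc,a)$ and satisfies $\pair{\Imc_a^+}{a}\nvDash q'$ for every $q'\in Q^1_a$.

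For the ``if'' direction, assume $\Jmc$ and $(Q^1_a)_{a\in\Theta}$ are given. By condition~2 and the transitive-tree-model property, for each $a$ I pick a transitive-tree model $\Mmc_a\vDash\tbox$ with root $r_a$ whose $\NC(\tbox)$-type equals $\tp(\Jmc,a)$ and with $\pair{\Mmc_a}{r_a}\nvDash Q^1_a$. I assemble a $\Theta$-forest interpretation by taking $\Jmc$ and gluing each $\Mmc_a$ below $a$, identifying $r_a$ with $a$ and keeping the concept labels of $\Jmc$ at $a$, and let $\Imc^+$ be its transitive closure. Agreement of the $\NC(\tbox)$-types at the glue points, together with $\Jmc$ and the $\Mmc_a$ being models of the relevant CIs and the usual propagation of value restrictions across transitive roles, yields $\Imc^+\vDash\Kmc$. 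Finally, if some $q\in Q$ admitted a match with $\delta(\bar x)=\bar a$, Lemma~\ref{lem:splitRCQ} would supply a triple $(q,p,\sigma)$ meeting the hypotheses of condition~1; that condition then produces an $a$ with $p_\sigma^a(x_a)\in Q^1_a$, while item~(c) of Lemma~\ref{lem:splitRCQ} gives $\pair{\Imc_a^+}{a}=\pair{\Mmc_a}{r_a}\models p_\sigma^a(x_a)$, contradicting the choice of $\Mmc_a$. Hence $\pair{\Imc^+}{\bar a}\nvDash Q$, so $\Kmc\nvDash Q(\bar a)$.

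For the size bound, it suffices to subdivide each transitive atom of $q$ at most once, so each $p_\sigma^a$ has size linear in $\|Q\|$, whereas the number of (subdivision, split) pairs, and hence $|Q^1_a|$, is exponential in $\|Q\|$. The step I expect to be the main obstacle is condition~2 in the ``only if'' direction: one must arrange the forest countermodel so that restricting it to a single tree still yields a model of $\tbox$ with the correct root type. This hinges on witnessing the existential requirements of ABox individuals inside their own trees and on the transitive closure not creating spurious edges into the interior of a tree from outside, both of which rely essentially on the absence of inverse roles in $\Smc$.
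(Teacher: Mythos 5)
Your proof follows essentially the same route as the paper's: both directions combine the forest-model property (Lemma~\ref{lem:forest}) with the split characterization (Lemma~\ref{lem:splitRCQ}), the family $Q^1_a$ is obtained by collecting exactly those tree components $p_\sigma^a(x_a)$ that are refuted in the induced trees $\Imc_a^+$, and the converse direction rebuilds a countermodel by gluing tree-shaped models refuting $Q^1_a$ below the individuals of $\Jmc$. The differences are presentational only: you explicitly arrange the existential witnesses of each ABox individual inside its own tree so that $\Imc_a^+\models\Tmc$ (a point the paper merely asserts), while, like the paper, you leave the verification that the glued interpretation satisfies $\Tmc$ (in particular the propagation of value restrictions over transitive roles across the glue points) as a routine check.
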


\begin{proof}
  For the "if"-direction, let us assume that there is a model $\Jmc$ of $\abox$ and $\{ A\sqsubseteq \forall r.B\mid A\sqsubseteq \forall r.B\in \Tmc\}$ that interprets at most the concept and role names occurring in $\Kmc$ non-empty and a family $(Q^1_a)_{a\in\indiv(\abox)}$ satisfying Points~1 and~2 from the statement. Note that $\Jmc=\Jmc^+$ since $\Jmc\models \{ A\sqsubseteq \forall r.B\mid A\sqsubseteq \forall r.B\in \Tmc\}$.

By Item~2, we have $\pair{\tbox}{\tp(\Jmc,a)}\nvDash Q^1_a$,  for each $a\in\indiv(\abox)$. Hence, by Lemma~\ref{lem:forest}, for each $a\in\indiv(\abox)$, there are tree-shaped interpretations $\interp_a$, rooted in $a$, with $\tp(\Imc_a,a)=\tp(\Jmc,a)$, such that $\interp_a^+\vDash\tbox$ and $\pair{\interp_a^+}{a}\nvDash Q^1_a$. We can assume without loss of generality that the $\Imc_a$ have pairwise disjoint domains.
 
 We define $\interp$ as the transitive closure of the union of the following interpretations:
 \begin{itemize}
     \item the restriction $\Jmc_\Amc$ of $\Jmc$ to domain $\indiv(\Amc)$, 
     \item $\Imc_a$, for $a\in\indiv(\Amc)$.
 \end{itemize}
 Note that it is not a disjoint union since $\Jmc_\Amc$ and $\Imc_a$ share domain element $a$, for each $a\in\indiv(\Amc)$. This is not a problem because of matching types: indeed, by construction, we have
 $\tp(\interp_a,a)=\tp(\Jmc,a)$. We claim that $\interp$ is a model of $\Kmc$ but $\pair{\interp}{\bar a}\not\models Q(\bar x)$. 
 
 For the former, note first that $\Imc$ is a model of $\Amc$ since $\Jmc_\Amc$ is a model of $\Amc$. To show that \Imc is a model of \Tmc, we only have to show something for elements in $\indiv(\Amc)$. However, it is routine to show that all inclusions in \Tmc are satisfied using the fact that each $\Imc_a^+$ is a model of \Tmc and the fact that $\Jmc$ is a model of $\{ A\sqsubseteq \forall r.B\mid A\sqsubseteq \forall r.B\in \Tmc\}$.

 For the latter, $\pair{\Imc}{\bar a}\not\models Q(\vec x)$, let us assume that $\pair{\Imc}{\bar a}\models q(\vec x)$ for some $q\in Q$. Since $\Imc$ is a $\Theta$-forest interpretation with induced $\Imc_a$, $a\in\Theta$, Lemma~\ref{lem:splitRCQ} implies that there exists an admissible $\Theta$-split $\sigma=(U_a,V_a)_{a\in\Theta}$ of a subdivision $p$ of $q$ such that $x\in U_{\delta(x)}$ for all $x\in\bar x$, $\delta_\Theta$ is a match of $\widehat {p_\sigma}$ in $\interp$, and for every $a\in\Theta$, $p_{\sigma}^a$ admits a match $\delta_a$ to $\Imc_a^+$ which additionally satisfies $\delta_a(x_a)=a$ in case $U_a\neq \emptyset$. Item~1 of our assumptions implies that $p_\sigma^c\in Q_c^1$, for some $c$ with $U_c\neq \emptyset$. But then $\pair{\Imc_c^+}{c}\models Q_c^1$, in contradiction to our choice of $\Imc_c$.

\medskip
 For the "only if"-direction, let us assume that $\Kmc\nvDash Q(\bar a)$. By Lemma~\ref{lem:forest}, there exists an $\indiv(\abox)$-forest interpretation $\interp$ such that $\interp\vDash\Kmc$ and $\pair{\Imc}{\bar a}\not\models Q(\vec x)$. Without loss of generality, we can assume that $\Imc$ interprets only concept and role names from $\Kmc$ as non-empty. Let $\Jmc=\Imc$ and $(\interp_a)_{a\in\indiv(\abox)}$ the tree-shaped interpretations induced by $\Imc$ and $\Theta = \indiv(\Amc)$. We show how to construct the family $(Q_a^1)_{a\in \Theta}$ of rooted CQs. 
 
Consider any admissible $\Theta$-split $\sigma=(U_a,V_a)_{a\in\Theta}$ of some subdivision $p$ of some $q\in Q$ such that $x_i\in U_{a_i}$ for $i=1..n$ and $\delta_\Theta$ is a match of $\widehat {p_{\sigma}}$ in $\interp$. Since $q$ does not have a match in $\Imc$, Lemma~\ref{lem:splitRCQ} tells us that for some $a\in\indiv(\abox)$ with $U_a\neq \emptyset$, $\pair{\Imc_a^+}{a}\not\models p_\sigma^a$. We add this $p_\sigma^a$ to $Q_a^1$.

Clearly, the constructed set $Q_a^1$ satisfies the required size conditions and Item~1 of the lemma. It remains to show that Item~2 of the lemma is also satisfied, that is, $\pair{\Tmc}{\tp(\Jmc,a)}\not\models Q^1_a$. But this is easy: $\Imc_a^+$ is a model of $\Tmc$, $a$ satisfies $\tp(\Imc_a,a)=\tp(\Jmc,a)$, and $\pair{\Imc_a^+}{a}\not\models q$ for all queries put into $Q_a^1$, simply by construction. 
\end{proof}

We need one more ingredient to finally prove Theorem~\ref{th:reduction-rooted}. The following auxiliary lemma states that we can transform a unary CQ into a union of TQs which are equivalent over tree-shaped interpretations.

\begin{lemma}\label{lem:treeification}
Let $q(x)$ be a unary CQ with $n$ variables. Then, one can compute in time exponential in the size of $q$, a UTQ $Q(x)$ such that
\begin{enumerate}[leftmargin=*]

    \item  $\pair{\Imc}{a}\models q$ iff $\pair{\Imc}{a}\models Q$, for all transitive-tree interpretations $\Imc$ rooted at $a$, and 
    
    \item each TQ in $Q$ has at most $n$ variables.
    
\end{enumerate}
\end{lemma}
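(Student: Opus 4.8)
The plan is to define $Q(x)$ as the set of all TQs $p(x')$ with at most $n$ variables for which there is a homomorphism $h$ from $q$ into $p^+$, the transitive closure of $p$ viewed as an interpretation, with $h(x)=x'$, where $x'$ is the root of $p$. There are only exponentially many candidate TQs with at most $n$ nodes over the finitely many role and concept names occurring in $q$ (bounded by the number of rooted tree shapes, the role labelings of the edges, and the concept labelings of the nodes), and the existence of a homomorphism between two fixed finite structures can be tested within exponential time; hence $Q$ is computable in exponential time, and Item~2 holds by construction. Throughout I assume that $q$ is connected, which is the only case needed (and is forced for TQs anyway). Soundness, i.e.\ the ``if'' direction of Item~1, is then immediate: fix $\Imc\in\mathbf T$ rooted at $a$ and suppose $\pair{\Imc}{a}\models p$ for some $p\in Q$, witnessed by a match $\mu$ with $\mu(x')=a$. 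Since transitive roles are interpreted transitively in $\Imc$, $\mu$ is also a match of $p^+$ in $\Imc$, so $\mu\circ h$ is a match of $q$ in $\Imc$ mapping $x$ to $a$; thus $\pair{\Imc}{a}\models q$.

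The substance is completeness. Given a match $\delta$ of $q$ in $\Imc=\Imc_0^+$ for a tree-shaped $\Imc_0$, with $\delta(x)=a$ the root, let $S=\delta(\var(q))\subseteq\Delta^{\Imc_0}$; then $a\in S$ and $|S|\le n$. Ordering $S$ by the ancestor relation of $\Imc_0$, let $T$ be the tree on $S$ rooted at $a$ in which the parent of each $c'\neq a$ is its closest strict $S$-ancestor $c$. I build a TQ $p$ with node set $S$, tree $T$, unary atoms $A([c])$ exactly when $A(v)\in q$ with $\delta(v)=c$, and each $T$-edge $(c,c')$ labeled by a role determined below.

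The crux is to label the $T$-edges consistently, and the key claim is that the unique $\Imc_0$-path from $c$ to its $T$-child $c'$ is \emph{uniformly} labeled, by a transitive role whenever it has length at least $2$. To see this, consider the subtree $\Imc_0^{c'}$ of $\Imc_0$ rooted at $c'$. As $a\notin\Imc_0^{c'}$ while some image lies in $\Imc_0^{c'}$, connectedness of $q$ yields a boundary-crossing atom $r(u,v)\in q$ with $\delta(u)$ a strict ancestor of $c'$ and $\delta(v)\in\Imc_0^{c'}$; its image-path necessarily passes through $c'$. If $r$ is non-transitive, this path has length $1$, forcing the parent of $c'$ in $\Imc_0$ to be $\delta(u)\in S$, so $c$ is this parent and the $c$-to-$c'$ path has length $1$. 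If $r$ is transitive, the entire path from $\delta(u)$ to $\delta(v)$ is labeled $r$; since $\delta(u)$ is an $S$-ancestor of $c'$, it lies on or above $c$, so the $c$-to-$c'$ stretch is a uniform $r$-labeled sub-path. I label $(c,c')$ accordingly.

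With this labeling, $p$ is a genuine TQ with at most $n$ variables, the map $[c]\mapsto c$ is a match of $p$ in $\Imc$ rooted at $a$, and $v\mapsto[\delta(v)]$ is a homomorphism $q\to p^+$ sending $x$ to the root: indeed every role atom $r(u,v)\in q$ yields an $r$-path (resp.\ a direct $r$-edge) in $\Imc_0$ which decomposes, along the images it meets, into $r$-labeled $T$-edges, giving $r([\delta(u)],[\delta(v)])\in p^+$. Hence $p\in Q$ and $\pair{\Imc}{a}\models p$, so $\pair{\Imc}{a}\models Q$, completing the proof. The main obstacle is precisely the uniform-labeling claim: its subtlety is that an image's closest $S$-ancestor may be several $\Imc_0$-edges away, and the crossing-atom argument above is exactly what rules out mixed labels on that stretch.
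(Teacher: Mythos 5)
Your proposal is correct and follows essentially the same route as the paper: your $Q$ consists of all $\le n$-variable TQs $p$ admitting a root-preserving homomorphism from $q$ into $p^+$ (the paper phrases these as tree interpretations over $\{1,\dots,n\}$ whose transitive closure satisfies $q$ at the root), soundness is by composing homomorphisms, and your closest-ancestor tree on $S=\delta(\var(q))$ with crossing-atom labels is exactly the paper's witness, namely the restriction of \Imc to the image of $\delta$ with only the ``immediate'' transitive edges kept. If anything, you are more explicit than the paper on two points it dismisses as routine: the uniform-labeling claim that guarantees the extracted structure really is a tree, and the restriction to connected $q$, which both proofs implicitly need and which suffices for the lemma's intended use.
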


\begin{proof}
Let $q(x)$ be a rooted unary CQ with $n$ variables and mentioning concept names $\mathsf{CN}$ and role names $\mathsf{RN}$. 

For the construction, we consider the class $\mathbf I$ of tree interpretations $\Imc$ with domain contained in $\{1,\ldots,n\}$ and root $1$ that interpret only $\mathsf{CN}$ and $\mathsf{RN}$ as possibly non-empty. Each 
such $\Imc$ can be viewed as a TQ $q_\Imc(x_1)$ as follows: 
\begin{itemize}

\item $q_\Imc$ contains atoms $r(x_i,x_j)$ whenever $(i,j)\in r^\Imc$ and $r\in\mathsf{RN}$, and atoms $A(x_i)$ whenever $i\in A^\Imc$ and $A\in\mathsf{CN}$. 

\item The answer variable of $q_\Imc$ is $x_1$.

\end{itemize}

We define $Q$ as the set of all $q_\Imc$ with $\Imc\in\mathbf I$ such that
$\pair{\Imc^+}{1}\models q(x)$. Clearly, $Q$ can be computed in exponential time
since $\mathbf I$ can be constructed in exponential time. We claim that $Q$
satisfies Points~1 and~2 from the lemma. Item~2 is satisfied by construction. 

For the "if"-direction of Item~1, suppose that $\pair{\Imc}{a}\models Q$ for some transitive-tree interpretation $\Imc$. Then $\pair{\Imc}{a}\models q_\Jmc(x_1)$ for some TQ $q_\Jmc\in Q$.
Since $q_\Jmc\in Q$, there is a match $\delta'$ of $q$ to $\Jmc^+$ with $\delta'(x)=1$. Then the composition of $\delta$ and $\delta'$, that is, the map $\hat\delta$ defined by taking 
\[\hat\delta(y)=\delta(x_{\delta'(y)})\text{ for all $y\in\var(q)$}\]
is a match of $q$ in $\Imc$ with $\hat\delta(x)=a$. Hence $\pair{\Imc}{a}\models q$.

For the "only if"-direction of Item~1, suppose that $\pair{\Imc}{a}\models q$ for some transitive-tree interpretation $\Imc$, that is, there is a match $\delta$ of $q$ into $\Imc$ with $\delta(x)=a$.
Define an interpretation $\Jmc$ with domain $\Delta^\Jmc=\{\delta(y)\mid y\in\var(q)\}$ as follows: 
\begin{itemize}
    \item $d\in A^\Jmc$ if $d\in A^{\Imc}$, for all $d\in \Delta^\Jmc$ and $A\in \mathsf{CN}$,
    
    \item $(d,e)\in r^\Jmc$ if $(d,e)\in r^{\Imc}$, for all $d,e\in \Delta^\Jmc$ and non-transitive $r\in \mathsf{CN}$,
    
    \item $(d,e)\in t^\Jmc$ if $(d,e)\in t^{\Imc}$ and there is no $f\in \Delta^\Jmc$ with $(d,f),(f,e)\in t^{\Imc}$, for all $d,e\in \Delta^\Jmc$ and transitive $t\in \mathsf{CN}$.
      
\end{itemize}
We can rename the domain elements of $\Jmc$ such that they are from $\{1,\ldots,n\}$ and such that $1$ is the root. In particular, $a$ is renamed to $1$. Clearly, $\Jmc\in \mathbf I$. By definition, $q$ has a match to $\Jmc^+$, hence $q_\Jmc\in Q$. Since (up to renaming) $\Jmc$ is a sub-interpretation of $\Imc$, $q_\Jmc$ has a match $\delta$ in $\Imc$ with $\delta(x_1)=a$. Thus, $\pair{\Imc}{a}\models Q$ as required.
\end{proof}

\reductionrooted*

\begin{proof}
The non-deterministic algorithm starts with guessing an interpretation $\Imc_0$ with domain $\NI(\Amc)$ which interprets only concept and role names occurring in $\pair{\Tmc}{\Amc}$ non-empty. Note that there are only exponentially many such interpretations. Set $\tau_a=\tp(\Imc_0,a)$ for all $a\in \indiv(\Amc)$, and let $\Jmc=\Imc_0^+$ be the transitive closure of $\Imc_0$.

 The algorithm checks whether $\Jmc$ is a model of $\Amc$ and $\{ A\sqsubseteq \forall r.B\mid A\sqsubseteq \forall r.B\in \Tmc\}$. If this is not the case, it rejects.

  Let $\Theta =\indiv(\Amc)$, $\bar x=\langle x_1,\ldots,x_n\rangle$, and $\bar a=\langle a_1,\ldots,a_n\rangle$. The algorithm computes all admissible $\Theta$-splits $\sigma=(U_a,V_a)_{a\in \Theta}$ of subdivisions $p$ of some $q\in Q$. Note that there are only exponentially many $\Theta$-splits and that admissibility can be checked in polynomial time, by Lemma~\ref{lem:matchtotreeshaped}. If $x_i\in U_{a_i}$ for all $i=1..n$ and $\delta_\Theta$ is a match of $\widehat{q_\sigma}$ in $\Jmc$, then the algorithm non-deterministically does the following:
  \begin{itemize}
  
      \item[$(\ast)$] pick $a\in\Theta$ with $U_a\neq\emptyset$, and compute, via Lemma~\ref{lem:treeification}, a unary UTQ $P_\sigma^a(x)$ equivalent to $p_\sigma^a(x)$ (in the sense of Item~1 of Lemma~\ref{lem:treeification}). Then construct $\widehat P_\sigma^a(x)$ from $P_\sigma^a(x)$ by picking (non-deterministically) for each $p(x)\in P_\sigma^a(x)$ one atom $r(x,y)$ and dropping all subqueries starting in an atom $s(x,y')\neq r(x,y)$, including that atom. Then 
      add (all disjuncts of) $\widehat P_\sigma^a$ to $Q_a^1$.
      
  \end{itemize}
  After treating all subdivisions $p$ of a query from $q\in Q$ and all their $\Theta$-splits in this way, the algorithm outputs $\tau_a$ and $Q^1_a$, for all $a\in\NI(\Amc)$.

  It is routine to verify that the algorithm runs in (non-deterministic) exponential time. Moreover, based on Proposition~\ref{pr:reduction-rooted} and Lemma~\ref{lem:treeification}, it can be shown that it behaves
  as claimed in Theorem~\ref{th:reduction-rooted}. We give some details.

  Suppose $\pair{\Tmc}{\Amc}\not\models Q(\bar a)$, and let $\Theta=\indiv(\Amc)$.
  By Proposition~\ref{pr:reduction-rooted}, there is a model $\Jmc$ of \Amc and
$\{ A\sqsubseteq \forall r.B\mid A\sqsubseteq \forall r.B\in \Tmc\}$ that interprets at most concept and role names from $\pair{\Tmc}{\Amc}$ non-empty and a
family of rooted UCQs $(Q_a^1)_{a\in\Theta}$ satisfying Points~1 and~2 from Proposition~\ref{pr:reduction-rooted}. We show that the algorithm has a run which outputs families $(\tau_a)_{a\in\indiv(\Amc)},(\widehat Q_a^1)_{a\in\indiv(\Amc)}$ such that $\pair{\Tmc}{\tau_a}\not\models \widehat Q_a^1$, for all $a\in\indiv(\Amc)$.
Let $\Jmc_0$ be the restriction of $\Jmc$ to $\indiv(\Amc)$. This $\Jmc_0$ can be guessed by the algorithm in the first step, and $\tau_a$ is just taken as $\tau_a=\tp(\Jmc_0,a)$, for all $a$.

Let $p$ be a subdivision of some $q\in Q$ and let $\sigma=(U_a,V_a)_{a\in\Theta}$ be an arbitrary admissible $\Theta$-split of $p$ such that $x_i\in U_{a_i}$ for $i=1..n$ and $\delta_\Theta$ is a match of $\widehat{p_\sigma}$ in $\Jmc_0$. By Item~1 of Proposition~\ref{pr:reduction-rooted}, there is some $a$ such that $U_a\neq \emptyset$ and $p_\sigma^a\in Q_a^1$. We let the non-deterministic algorithm pick this $p_\sigma^a$, and add the UTQ $P_\sigma^a$ equivalent to $p_\sigma^a$ to $\widehat Q_a^1$ as described in Step~$(\ast)$.

We claim that the computed $\tau_a,\widehat Q_a^1$ satisfy $\pair{\Tmc}{\tau_a}\not\models \widehat Q_a^1$, for all $a\in\Theta$. Suppose otherwise, that is, there is a model $\Imc$ and element $a$ with $\tp(\Imc,a)=\tau_a$ such that $\pair{\Imc}{a}\models \widehat Q_a^1$. By Lemma~\ref{lem:forest}, we can assume that $\Imc=\Imc_0^+$ for some tree-shaped $\Imc_0$. Then $\pair{\Imc}{a}\models P^a_\sigma$ for some $P_\sigma^a$, added in Step~$(\ast)$. By Lemma~\ref{lem:treeification}, we have
$\pair{\Imc}{a}\models p^a_\sigma$. But since $p^a_\sigma\in Q_a^1$, this is in contradiction to Item~2 of Proposition~\ref{pr:reduction-rooted}, that is, $\pair{\Tmc}{\tp(\Jmc_0,a)}\not\models Q_a^1$.

The other direction is shown using similar arguments. 
\end{proof}

We can now state the promised characterization of non-entailment. Remember that, if $q$ is a CQ that has a match in a \new{transitive-tree interpretation}, then $q_\approx$ can be decomposed into a conjunction of PTQs which is equivalent to $q$ over the class of \new{transitive-tree interpretations} (see Lemma~\ref{lem:to_ptq}).

\begin{proposition} \label{pr:reduction-single}
Let $\Kmc=\pair{\tbox}{\abox}$ be a knowledge base, and $Q$ be a Boolean UCQ that mentions a single transitive role name $t$. Let $\Theta=\indiv(\Amc)$. Then $\Kmc\nvDash Q$ if and only if there is a model $\Jmc$ of $\abox$ and $\{ A\sqsubseteq \forall r.B\mid A\sqsubseteq \forall r.B\in \Tmc\}$ that interprets at most the concept and role names occurring in $\Kmc$ non-empty, a family $(Q^0_a)_{a\in\Theta}$ of Boolean UPTQs, and a family $(Q^1_a)_{a\in\Theta}$ of unary UPTQs, satisfying the following:
\begin{enumerate}

    \item For every $q\in Q$ and every admissible $\Theta$-split $\sigma=(U_a,V_a)_{a\in \Theta}$ of a subdivision $p$ of $q$ such that $\delta_\Theta$ is a match of $\widehat{p_{\sigma}}$ to $\Jmc$, there is some $a\in\Theta$ such that one of the following holds:
    \begin{itemize}
        
        \item $U_a\neq \emptyset$ and the connected component of $p_{\sigma}^a(x_a)$ containing $x_a$ is in $Q^1_a$; or
        
        \item $V_a\neq\emptyset$ and some Boolean subquery of  $p_{\sigma}^a$ is in $Q^0_a$.
        
    \end{itemize}
    
    \item For every $a\in \Theta$, we have $\pair{\tbox}{\tp(\Jmc,a)}\not\models Q^0_a\vee Q^1_a$.
     
\end{enumerate}
Moreover, in the ``only if'' direction, the UPTQs can be chosen so that, for every $a\in\indiv(\abox)$, the size of $Q^0_a$ is linear in $\|Q\|$, the size of
$Q^1_a$ is exponential in $\|Q\|$, yet the number of subPTQs of PTQs in $Q^1_a$ is polynomial in $\|Q\|$.

\end{proposition}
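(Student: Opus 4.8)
The plan is to mirror the proof of Proposition~\ref{pr:reduction-rooted}, replacing the rooted-query machinery by the PTQ machinery of Section~\ref{sec:ptqs} and treating the Boolean ``floating'' subqueries separately from the rooted ones. Both directions rest on Lemma~\ref{lem:forest} (reduction to forest interpretations) and Lemma~\ref{lem:splitCQ} (characterization of matches to forest interpretations via admissible $\Theta$-splits). For the ``if'' direction I would start from $\Jmc$, $(Q^0_a)_a$, $(Q^1_a)_a$ satisfying Points~1--2. By Point~2 and Lemma~\ref{lem:forest}, for each $a\in\Theta$ I obtain a tree-shaped $\Imc_a$ rooted at $a$ with $\tp(\Imc_a,a)=\tp(\Jmc,a)$, $\Imc_a^+\models\Tmc$, $\Imc_a^+\nvDash Q^0_a$, and $\langle\Imc_a^+,a\rangle\nvDash Q^1_a$. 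Glueing the $\Imc_a$ onto the restriction of $\Jmc$ to $\NI(\Amc)$ (identifying $a$ with the root of $\Imc_a$, which is sound since types agree) and taking the transitive closure yields a forest interpretation $\Imc$; verifying $\Imc\models\Kmc$ is routine, using that $\Jmc$ satisfies the universal restrictions at ABox elements and each $\Imc_a^+\models\Tmc$ on the tree parts, exactly as in Proposition~\ref{pr:reduction-rooted}. If some $q\in Q$ matched $\Imc$, Lemma~\ref{lem:splitCQ} would produce an admissible $\Theta$-split $\sigma$ of a subdivision $p$ of $q$ with $\delta_\Theta$ matching $\widehat{p_\sigma}$ in $\Jmc$ and each $p^a_\sigma$ matching $\Imc_a^+$ (with $x_a\mapsto a$ when $U_a\neq\emptyset$). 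Point~1 then supplies an $a$ for which either the connected component of $p^a_\sigma$ containing $x_a$ is in $Q^1_a$, contradicting $\langle\Imc_a^+,a\rangle\nvDash Q^1_a$, or some Boolean subquery of $p^a_\sigma$ is in $Q^0_a$, contradicting $\Imc_a^+\nvDash Q^0_a$.

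For the ``only if'' direction, Lemma~\ref{lem:forest} gives a forest countermodel $\Imc$ with $\Imc\models\Kmc$, $\Imc\nvDash Q$, interpreting only the symbols of $\Kmc$; I set $\Jmc=\Imc$ with induced trees $(\Imc_a)_a$. For each admissible split $\sigma$ of a subdivision $p$ of some $q\in Q$ with $\delta_\Theta$ matching $\widehat{p_\sigma}$ in $\Jmc$, Lemma~\ref{lem:splitCQ} forces some $p^a_\sigma$ to fail in $\Imc_a^+$, hence at least one connected component of $p^a_\sigma$ fails. A structural observation about (S1)--(S4) is that, after collapsing $U_a$ to $x_a$, the variable $x_a$ has no incoming atom, so the component $c$ containing $x_a$ is a connected unary query with $x_a$ initial, whereas every other (Boolean) component lives entirely in $V_a$ and in fact coincides with a subdivision of a connected component of $q$. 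If the failing component is $c$, then $\langle\Imc_a^+,a\rangle\nvDash c(x_a)$; since admissibility guarantees that $c$ matches some transitive-tree, and $x_a$ is initial, it matches one with $x_a$ at the root, so Lemma~\ref{lem:to_ptq} (unary case) yields unary PTQs $\hat c_1(x_a),\dots,\hat c_k(x_a)$ equivalent to $c$ over such interpretations, one of which fails at $\langle\Imc_a^+,a\rangle$; I add that PTQ to $Q^1_a$. If instead a Boolean component fails, I apply Lemma~\ref{lem:to_ptq} (Boolean case) to the underlying connected component of $q$ and add the resulting Boolean PTQ to $Q^0_a$. Points~1 and~2 then hold by construction, the latter because $\Imc_a^+\models\Tmc$ witnesses non-entailment.

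The routine parts are the glueing and the verification of $\Imc\models\Kmc$, which carry over essentially verbatim from Proposition~\ref{pr:reduction-rooted}. The main obstacle is the size analysis and its interaction with the PTQ conversion. For $Q^0_a$ the key is the observation above: Boolean components of split queries are exactly (subdivisions of) connected components of CQs in $Q$, so it suffices to populate $Q^0_a$ with one Boolean PTQ per connected component of each $q\in Q$ that fails below $a$; since connected components have total size linear in $\|Q\|$ and Lemma~\ref{lem:to_ptq} returns a single linear-size PTQ per connected Boolean CQ, $Q^0_a$ is linear in $\|Q\|$. For $Q^1_a$ the difficulty is that there are exponentially many admissible splits, so $Q^1_a$ itself may be exponentially large; the crucial claim is that the number of \emph{subPTQs} of its PTQs is only polynomial. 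This is what I expect to be the hard part. I would argue that every PTQ contributed to $Q^1_a$ is assembled, via the linearization underlying Lemma~\ref{lem:to_ptq} and the equivalence $q_\approx$ of Lemma~\ref{lem:queryequivalence}, from a fixed polynomial-size pool of sub-parts of $q_\approx$ for $q\in Q$ --- each determined by an entry variable together with a downward-closed set of clusters --- and that the subPTQs of the contributed PTQs all come from this pool. Consequently their number is polynomial in $\|Q\|$ even though the contributed PTQs are exponentially many, which is precisely the guarantee needed for Theorem~\ref{th:reduction-single}.
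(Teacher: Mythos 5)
Your two equivalence directions and the construction of the families $(Q^0_a)_{a\in\Theta}$, $(Q^1_a)_{a\in\Theta}$ follow the paper's proof essentially step for step: Lemma~\ref{lem:forest} to pass to forest countermodels, the glueing of the tree-shaped $\Imc_a$ onto the restriction of $\Jmc$ to $\NI(\Amc)$, Lemma~\ref{lem:splitCQ} to drive the contradiction in the ``if'' direction, and, in the ``only if'' direction, the case split between the failing $x_a$-component (converted via the unary case of Lemma~\ref{lem:to_ptq}, keeping a failing PTQ for $Q^1_a$) and a failing Boolean component (converted via the Boolean case, for $Q^0_a$). Your observation that Boolean components of $p^a_\sigma$ are (trivial subdivisions of) connected components of $q$, which gives the linear bound on $Q^0_a$, also matches the paper's size analysis.

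The genuine gap is exactly where you predicted it: the polynomial bound on the number of subPTQs of PTQs in $Q^1_a$. The pool you propose --- sub-parts of $q_\approx$ ``each determined by an entry variable together with a downward-closed set of clusters'' --- is not of polynomial size: the number of downward-closed sets of clusters is in general exponential in $\|Q\|$, so this indexing cannot yield the bound, and you give no argument restricting which pairs can actually occur. The paper closes this hole by exploiting admissibility of the splits. First (Claim~1 in the paper's proof), in every admissible $\Theta$-split each $V_a$ is a union of $\approx$-classes. Second, by the construction in the unary case of Lemma~\ref{lem:to_ptq}, every unary PTQ extracted from the $x_a$-component of $p^a_\sigma$ has a root cluster seeded by an atom $r([x_a],[y])$ arising from an atom $r(x_0,y)\in q$ with $x_0\in U_a$ and $y\in V_a$, and the remainder of that PTQ is precisely the set of atoms over variables reachable from $[y]$. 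By Condition~(S3), this reachable set is contained in $V_a$, and --- this is the decisive point --- it does not depend on the particular split; hence the single atom $r(x_0,y)$, together with $\approx$ (fixed by $q$), already determines the whole PTQ, and similarly each subPTQ is determined by its entry variable. Since $q$ has only linearly many binary atoms, only polynomially many distinct PTQs and subPTQs can ever be contributed to $Q^1_a$, even though, ranging over exponentially many splits, $Q^1_a$ itself may have exponential size. Without this split-independence argument your ``fixed polynomial-size pool'' remains an unproven assertion, and it is precisely the property that Theorem~\ref{th:reduction-single} and Lemma~\ref{lem:bounded-tiles-single} need.
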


\begin{proof}
  For the "if"-direction, let us assume that there is a model $\Jmc$ of $\abox$ and $\{ A\sqsubseteq \forall r.B\mid A\sqsubseteq \forall r.B\in \Tmc\}$ that interprets at most the concept and role names occurring in $\Kmc$ non-empty and a family $(Q^1_a)_{a\in\indiv(\abox)}$ satisfying Points~1 and~2 from the statement.

By Item~2, we have $\pair{\tbox}{\tp(\Jmc,a)}\nvDash Q^0_a\vee Q^1_a$, for each $a\in\Theta$. Hence, by Lemma~\ref{lem:forest}, for each $a\in\Theta$, there are tree-shaped interpretations $\interp_a$, rooted in $a$, with $\tp(\Imc_a,a)=\tp(\Jmc,a)$, and such that $\interp_a^+\vDash\tbox$, $\Imc_a^+\not\models Q_a^0$, and $\pair{\interp_a^+}{a}\nvDash Q^1_a$. We can assume without loss of generality that the $\Imc_a$ have pairwise disjoint domains.
 
 We define $\interp$ as the transitive closure of the union of the following interpretations:
 \begin{itemize}
     \item the restriction $\Jmc_\Amc$ of $\Jmc$ to domain $\indiv(\Amc)$, 
     \item $\Imc_a$, for $a\in\Theta$.
 \end{itemize}
 Note that it is not a disjoint union since $\Jmc_\Amc$ and $\Imc_a$ share domain element $a$, for each $a\in\indiv(\Amc)$. This is not a problem because of matching types: indeed, by construction, we have
 $\tp(\interp_a,a)=\tp(\Jmc,a)$. We claim that $\interp$ is a model of $\Kmc$ but $\Imc\not\models Q$. 
 
 For the former, note first that $\Imc$ is a model of $\Amc$ since $\Jmc$ is a model of $\Amc$. To show that \Imc is a model of \Tmc, we only have to show something for elements in $\indiv(\Amc)$. However, it is routine to show that all inclusions in \Tmc are satisfied using the fact that each $\Imc_a^+$ is a model of \Tmc and the fact that $\Jmc$ is a model of $\{ A\sqsubseteq \forall r.B\mid A\sqsubseteq \forall r.B\in \Tmc\}$.

 For the latter, $\Imc\not\models Q$, let us assume that $\Imc\models q$ for some $q\in Q$. Since $\Imc$ is a $\Theta$-forest interpretation with induced $\Imc_a$, Lemma~\ref{lem:splitCQ} implies that there exists an admissible $\Theta$-split $\sigma=(U_a,V_a)_{a\in\Theta}$ of a subdivision $p$ of $q$ such that $\delta_\Theta$ is a match of $\widehat {p_\sigma}$ in $\interp$, and for every $a\in\Theta$, $p_{\sigma}^a$ admits a match $\delta_a$ to $\Imc_a^+$ which additionally satisfies $\delta_a(x_a)=a$ in case $U_a\neq \emptyset$. Item~1 of our assumptions implies that either the connected component of $p_\sigma ^c(x_c)$ is contained $Q_c^1$, for some $c$ with $U_c\neq \emptyset$, some Boolean subquery of $p_\sigma^c$ is contained in $Q_c^0$ for some $c$ with $V_c\neq\emptyset$.
 
 But then either $\pair{\Imc_c^+}{c}\models Q_c^1$ or $\Imc_c^+\models Q_c^0$, in contradiction to our choice of $\Imc_c$.

\medskip
 For the "only if"-direction, let us assume that $\Kmc\nvDash Q$. By Lemma~\ref{lem:forest}, there exists an $\indiv(\abox)$-forest interpretation $\interp$ such that $\interp\vDash\Kmc$ and $\Imc\not\models Q$. Without loss of generality, we can assume that $\Imc$ interprets only concept and role names from $\Kmc$ as non-empty. Let $\Jmc=\Imc$ and $(\interp_a)_{a\in\indiv(\abox)}$ the tree-shaped interpretations induced by $\Imc$ and $\Theta = \indiv(\Amc)$. We show how to construct the families $(Q_a^0)_{a\in\Theta}$ and $(Q_a^1)_{a\in \Theta}$ of UPTQs. 
 
Consider any admissible $\Theta$-split $\sigma=(U_a,V_a)_{a\in\Theta}$ of some subdivision $p$ of some $q\in Q$ such that $\delta_\Theta$ is a match of $\widehat {p_{\sigma}}$ in $\interp$. Since $q$ does not have a match in $\Imc$, Lemma~\ref{lem:splitCQ} tells us that for some $a\in\indiv(\abox)$ with $U_a\neq \emptyset$, $\pair{\Imc_a^+}{a}\not\models p_\sigma^a$. This means that either the connected component $p'$ of $p_\sigma^a$ containing $x_a$ satisfies 
$\pair{\Imc_a^+}{a}\not\models p'$ or some Boolean subquery $p''$ of $p_\sigma^a$ satisfies $\Imc^+_a\not\models p''$. In the former case, we add the PTQs obtained from $p'$ via Item~2 of Lemma~\ref{lem:to_ptq} to $Q_a^1$, and in the latter case, the PTQ obtained from $p''$ via Item~1 of Lemma~\ref{lem:to_ptq} to $Q_a^0$.

Clearly, the constructed set $Q_a^1$ satisfies Item~1 of the lemma. We show that Item~2 of the lemma is satisfied as well, that is, $\pair{\Tmc}{\tp(\Jmc,a)}\not\models Q_a ^0 \vee Q^1_a$, for all $a\in\Theta$. But this is easy: $\Imc_a^+$ is a model of $\Tmc$, $a$ satisfies $\tp(\Imc_a,a)=\tp(\Jmc,a)$, and $\pair{\Imc_a^+}{a}\not\models q$ for all queries put into $Q_a^1$, and $\Imc_a ^+\not\models q$ for all queries put into $Q_a^0$, simply by construction. 

\smallskip
It remains to argue that the claimed size restrictions are satisfied as well. We begin with analyzing the number of PTQs in $Q^0_a$. By the definition of splits, every PTQ added to $Q^0_a$ is equivalent to a Boolean subquery of some CQ in $Q$, of which there clearly are only linearly many. 

Let us now analyze the number of subPTQs of queries in $Q^1_a$. For this purpose, let us fix some $q\in Q$ and recall the equivalence relation $\approx$ from Definition~\ref{def:queryequivalence}. Let $\sigma=(U_a,V_a)_{a\in \Theta}$ be an admissible $\Theta$-split of some subdivision $p$ of $q$.
We make a few observations.

\medskip\noindent\textit{Claim~1.} For every $a\in \Theta$ and $x\approx y$: if $x\in V_a$, then $y\in V_a$. 

\medskip\noindent\textit{Proof of Claim~1.} This follows by a routine induction on the definition of $\approx$ (when viewed as a saturation procedure).
%
\hfill$\dashv$

\medskip This means that in any admissible split the $V_a$ are always finite unions of equivalence classes of $\approx$ (and clearly $\approx$ does not have more than $|\var(q)|$ equivalence classes). 
Let $q_1(x'),\ldots,q_m(x')$ be the decomposition of $q_\approx$ into unary PTQs that is computed in the proof of Lemma~\ref{lem:to_ptq} when applied to a unary CQ $q^a_\sigma(x)$ which has a match in a \new{transitive-tree interpretation}. Recall that $x'$ is an equivalence class of $\approx$ (actually, the class of $x$). By construction the root cluster of each $q_i(x')$ contains an atom $r(x',[y])$ for some atom $r(x_0,y)\in q$ with $x_0\in U_a$ and $y\in V_a$. Since $V_a$ is closed under successors (see Condition~(S3)) of splits, this atom $r(x_0,y)$ together with $\approx$ determines $q_i(x')$. Since there are only linearly many binary atoms in $q$, the linear upper bound on the number of subPTQs follows.
\end{proof}

With Proposition~\ref{pr:reduction-single} at hand, it is not difficult to prove Theorem~\ref{th:reduction-single}, which we restate here for the reader's convenience. The proof is essentially along the lines of the proof of Theorem~\ref{th:reduction-rooted}.

\reductionone*

\begin{proof}
The non-deterministic algorithm starts with guessing an interpretation $\Imc_0$ with domain $\NI(\Amc)$ which interprets only concept and role names occurring in $\pair{\Tmc}{\Amc}$ non-empty. Note that there are only exponentially many such interpretations. Set $\tau_a=\tp(\Imc_0,a)$ for all $a\in \indiv(\Amc)$, and let $\Jmc=\Imc_0^+$ be the transitive closure of $\Imc_0$.

 The algorithm checks whether $\Jmc$ is a model of $\Amc$ and $\{ A\sqsubseteq \forall r.B\mid A\sqsubseteq \forall r.B\in \Tmc\}$. If this is not the case, it rejects.

  Let $\Theta =\indiv(\Amc)$. The algorithm computes all admissible $\Theta$-splits $\sigma=(U_a,V_a)_{a\in \Theta}$ of subdivisions $p$ of some $q\in Q$. Note that there are only exponentially many $\Theta$-splits and that admissibility can be checked in polynomial time, by Lemma~\ref{lem:matchtotreeshaped}. If $\delta_\Theta$ is a match of $\widehat{q_\sigma}$ in $\Jmc$, then the algorithm non-deterministically does the following:
  \begin{itemize}
  
      \item[$(\ast)$] pick $a\in\Theta$ with $U_a\cup V_a\neq \emptyset$ and
      do one of the following: 
      \begin{itemize}
          \item verify that $U_a\neq \emptyset$ and let $q_1(x'),\ldots,q_k(x')$ be the sequence of unary PTQs obtained from applying Lemma~\ref{lem:to_ptq} to the connected component of $p_\sigma^a$ containing $x_a$. Then pick some $q_i(x')$ and put it into $Q_a^1$, or
          

          \item verify that $V_a\neq\emptyset$ and (non-deterministically) pick a Boolean subquery $p'$ of $p_\sigma^a$ and put the PTQ equivalent to $p'$ (obtained via Lemma~\ref{lem:to_ptq}) into $Q_a^0$.
          
      \end{itemize}
      
  \end{itemize}
   After treating all $q\in Q$ and all splits of subdivisions of $q$ in this way, the algorithm outputs $\tau_a,Q^0_a,Q^1_a$ for all $a\in\NC(\Amc)$.

   We can now verify (along the lines of the proof of Theorem~\ref{th:reduction-rooted}) that the algorithm behaves as claimed in the theorem.
\end{proof}

\subsection{Proof of Theorem~\ref{th:reductionbasic}}
\label{ap:second_reduction}

\new{In this section of the appendix, we show a proof of Theorem~\ref{th:reductionbasic}, highlighting the correspondence between our entailment problem and the existence of mosaics:}

\reductionbasic*

\begin{proof} 
Assume first that $\pair{\Tmc}{\tau}\not\models Q^0\vee Q^1$ for some TBox \Tmc, a set $\tau\subseteq \NC(\Tmc)$, a Boolean UPTQ $Q^0$, and a unary UPTQ $Q^1$. It follows from the same arguments used to prove Lemma~\ref{lem:forest} that there is a transitive-tree interpretation $\Imc$ rooted at some $d^*$ such that $\Imc\models \Tmc$, $\text{tp}(\Imc,d^*)=\tau$, $\Imc\not\models Q^0$, and $\langle\Imc, 
d^*\rangle\nvDash Q^1$. We can further assume without loss of generality that $\Imc$ interprets only concept names occurring in $\Tmc$ as non-empty. We extend $\Imc$ by interpreting the fresh concept names $A_{p(x)}$ by taking:
\[A_{p(x)}^\Imc = \{d\in\Delta^\Imc\mid \pair{\Imc}{d}\models p(x)\}.\]
Let $\Sigma_C$ be the set of all concept names interpreted non-empty in $\Imc$ after this extension. 

To every domain element $d\in\Delta^\Imc$ and every role name $r\in \NR(\Tmc)$, we associate a tile $\langle \Imc_{d,r},d,r\rangle$ where $\Imc_{d,r}$ is the restriction of $\Imc$ to domain $\{d\}\cup\{e \in \Delta^\Imc \mid (d,e)\in r^\Imc\}$ and signature $\Sigma_C\cup\{r\}$. It is routine to verify that:

\medskip\noindent\textbf{Claim 1.} $\langle\Imc_{d,r},d,r\rangle$ is a tile, for each $d\in \Delta^\Imc$, $r\in \NR(\Tmc)$.

\medskip 
Let $\Mos$ be the set of all tiles defined in this way. It is routine to verify: 

\medskip\noindent\textbf{Claim 2.} $\Mos$ is a mosaic for \Tmc and $\tau$, and against $Q^0$ and $Q^1$. 

\medskip
Conversely, take a mosaic $\Mos$ for $\Tmc$ and $\tau$, and against $Q^0$ and $Q^1$. 
We assume without loss of generality that the tiles in $\Mos$ have pairwise disjoint domains.
We construct, in an inductive way, a model $\Imc$ witnessing $\pair{\Tmc}{\tau}\not\models Q^0\vee Q^1$. The idea is to patch the tiles in an appropriate fashion. Throughout the process, we store provenance information for each introduced domain element. Formally, we will associate to every $d\in\Delta^\interp$ a tile $\langle \Imc_{d},e_d,r_d\rangle$ and an element $d'\in \Delta^{\Imc_d}$ with $\text{tp}(\Imc,d)=\tp(\Imc_d,d')$. This is convenient as we have to rename elements throughout the construction.

For the inductive base, 
let $\langle\Imc_r,d_r,r\rangle\in \Mos$, for 
$r\in \NR(\Tmc)$ be a family of tiles witnessing Condition~3 of $\Mos$ being a mosaic. We rename elements in the $\Imc_r$ in a way that they have disjoint domains except that $d_r=d^*$, for each $r$. Then $\Imc$ is the union of all $\Imc_r,r\in\NR(\Tmc)$. For the provenance, we set,
for each $d\in \Delta^{\Imc_r}\setminus\{d_0\}$ and $r\in \NR(\Tmc)$, 
\[\langle\Imc_d,e_d,r_d\rangle:=\langle\Imc_r,d_r,r\rangle\]
and let $d'$ be the name of $d$ before the renaming process. 
(We do not need provenance information for the root element $d^*$.)

In the inductive step, we apply the following rule in a fair and exhaustive way:
\begin{itemize}

    \item[$(\dagger)$] Take any element $d\in \Delta^\Imc$ with associated tile
    $\langle \Imc_d,e_d,r_d\rangle\in\Mos$ and element $d'\in \Delta^{\Imc_d}$. Moreover, let $s$ be any role name
    such that $s=r_d$ only if both $r_d$ is non-transitive and $e_d\neq d'$. By
    Condition~4 of Definition~\ref{def:mosaic}, there is a tile
    $\langle\Jmc,e_0,s\rangle\in\Mos$ with
    $\text{tp}(\Imc,d)=\text{tp}(\Imc_d,d')=\text{tp}(\Jmc,e_0)$. Let $\Jmc'$ be obtained from
    $\Jmc$ by renaming $e_0$ to $d$ and all other elements in a way that $\Imc$
    and $\Jmc$ share only domain element $d$. Then take the union of $\Imc$ with
    $\Jmc'$. Moreover, for every element $e\in \Delta^{\Jmc'}$ with $e\neq d$, 
    set the provenance $\langle \Imc_{e},e_0,r_e\rangle:=\langle \Jmc,e_0,s\rangle$,
    and define $e'$ to be $e$ before the renaming.
    
\end{itemize} 
 
Let $\Imc$ be the interpretation obtained in the limit of the
above construction. It is not difficult to see that $\text{tp}(\Imc,d^*)\cap
\NC(\Tmc)=\tau$. Moreover, using Condition~4 of mosaics, we can verify that $\Imc\models \Tmc$. It remains to show that $\Imc\not\models Q^0$ and $\pair{\Imc}{d^*}\nvDash Q^1$. To this end, establish the following auxiliary claim.

\medskip\noindent\textbf{Claim 3.} For every $d\in \Delta^\Imc$ and auxiliary concept name $A_{p(x)}$: $d\notin A_{p(x)}^\Imc$ implies $\pair{\Imc}{d}\not\models p(x)$.

\medskip\noindent\textit{Proof of Claim~3.} The proof of the claim is by induction on the canonical cluster tree for the PTQ $p(x)$. 
In the inductive base, $p(x)$ has a single $r$-cluster for some role name $r$. Suppose that $\pair{\Imc}{d}\models p(x)$. But since $p(x)$ is a single cluster, we have that $\pair{\Jmc}{d'}\models p(x)$ for some tile $\langle\Jmc,d',r\rangle$ such that $d'\in \Delta^\Jmc$ is a copy of $d$. But since $d'$ is a copy of $d$, we have $d'\notin A_{p(x)}^{\Jmc}$ and thus $\pair{\Jmc}{d'}\not\models p(x)$ by Condition~2 of mosaics, a contradiction.

In the inductive step, $p(x)$ has several clusters. Let $C$ be the root cluster of $p(x)$ that contains $x$. Suppose it is an $r$-cluster.
Consider $p_{C}(x)$ and let $p_{C}'(x)$ be obtained from $p_{C}(x)$ by dropping all auxiliary atoms. Clearly, $p'_{C}(x)$ is the sub-query of $p$ induced by the variables in cluster $C$. 
Suppose that $\pair{\Imc}{d}\models p(x)$, and let this be witnessed by match $\delta$. Up to renaming, we can assume that $\delta$ is a match of 
$p'_{C}(x)$ in some tile $\langle \Jmc,d_0,r\rangle$ that was involved in the construction $d$ in $\Imc$, that is, $\pair{\Jmc}{d'}\models p'_{C}(x)$ such that $d'\in \Delta^\Jmc$ is a copy of $d$. But since, by assumption, $d'\notin A_{p(x)}^\Jmc$, $\delta$ cannot be a match of $p_{C}(x)$ in $\Jmc$, due to Condition~2 of mosaics. Hence, there is some auxiliary atom $A_{q(z)}(z)\in p_{C}(x)$ such that $\delta(z)\notin A_{q(z)}^\Jmc$, hence $\delta(z)\notin A_{q(z)}^\Imc$. Induction yields $\pair{\Imc}{\delta(z)}\not\models q(z)$, but by assumption $\delta$ witnesses exactly the opposite, a contradiction. 

This finishes the proof of Claim~3.\hfill$\dashv$

\medskip
Relying on arguments similar to those in the proof of Claim~3, we can use Conditions~1 and~3 of mosaics to show that $\Imc\not\models Q^0$ and $\pair{\Imc}{d^*}\not\models Q^1$, respectively. 
This completes the proof of the theorem. 
\end{proof}

\subsection{Proofs of Tile and Mosaic Size Bounds}
\label{sb:sizebounds}

This last section of the appendix is devoted to the proofs of the crucial Lemmas~\ref{cor:mosaic_exp_tile}-\ref{lem:bounded-tiles-rooted}.

\corboundedsize*

\begin{proof}
We show that every tile can be replaced with a finite one of appropriate size, ensuring that the resulting set remains a mosaic for $\Tmc$ and $\tau$, and against $Q^0$ and $Q^1$.

First, let $\langle\interp, d_0, r\rangle\in\Mos$, with $r$ a non-transitive role name. Then we just have to pick a witnessing $r$-successor of $d_0$ in $\Imc$ for each $B \in \NC(\tbox)$ such that there is a CI $A \sqsubseteq \exists r.~B$ in $\tbox$ with $d_0 \in A^\interp$. Let $\Jmc$ be the interpretation of size at most $|\NC(\Tmc)| + 1$ obtained by restricting $\Imc$ to $d_0$ and the picked successors. It is routine to check that $\langle\Jmc, d_0, r\rangle$ is a tile for $\Tmc$ that can replace $\langle\Imc, d_0, r\rangle$ in $\Mos$.

Now, let us consider a tile $\langle\interp, d_0, t\rangle$ for a transitive role name $t$. We use Theorem~\ref{th:acyclic_transitive} to produce a bounded-size tile that can replace $\langle\Imc, d_0, t\rangle$ in $\Mos$ without affecting Conditions 1--4 in Definition~\ref{def:mosaic}.

Towards this end, we encode properties necessary to maintain Conditions 1--3 in a Boolean UPTQ $P$.
To facilitate this, we first introduce additional concept names to $\Imc$. Consider a subPTQ $p(x)$ of a PTQ from $Q^0$ or  $Q^1$, such that the root cluster $C$ of $p$ that contains $x$ is a $t$-cluster. Let $A_{\lnot p(x)}$ be a fresh concept name and define \[A_{\lnot p(x)}^\Imc = \big\{d \in \Delta^\Imc \mid \pair{\Imc}{d}\not\models p_{C}(x)\big\}\,.\] 
Let $P$ collect all Boolean PTQs $p_{C} \cup \{ A_{\lnot p(x)}(x)\}$ where $p$ and $C$ are as above, as well as all $q_C$ with $q\in Q^0$ and $C$ a root $t$-cluster of $q$. 
By construction, $P$ is an acyclic Boolean UCQ over role name $t$, \new{and $\Imc\not\models P$.} 

Applying Theorem~\ref{th:acyclic_transitive} to the role name $t$, the TBox $\Tmc_t$, the Boolean UPTQ $P$, and \new{the unraveling $\Imc'$ of $\Imc$ from $d_0$}, we obtain an interpretation $\Jmc$ of suitable size, such that the tile
$\langle\Jmc, d_0, t\rangle$ can be used to replace $\langle\Imc, d_0, t\rangle$ in $\Mos$.
\new{Indeed, Condition 4 is preserved because, on one hand, 
\[\tp(\Jmc, d_0) = \tp(\Imc', d_0) = \tp(\Imc, d_0)\,,\] 
and on the other hand,  for each $d \in \Delta^\Jmc$ we  have $d \in \Delta^{\Imc'}$, so by the properties of unraveling there is $d'\in\Delta^\Imc$ such that  \[\tp(\Jmc, d)=\tp(\Imc', d) =\tp(\Imc,d')\,.\]}
Conditions 1--3 are preserved because  $\Jmc\not\models P$. For Condition 1, this is immediate. For Condition 2, consider  $d\in \Delta^\Jmc$ such that $d\notin A_{p(x)}^\Jmc$. \new{There is $d'\in\Delta^{\Imc}$ such that \[\tp(\Jmc,d) = \tp(\Imc,d')\,.\] In particular, $d'\notin A_{p(x)}^\Imc$.} Consequently, $\pair{\Imc}{\new{a'}}\not\models p_{C}(x)$ where $C$ is the root $t$-cluster of $p$ that contains $x$. Hence, $\new{a'}\in A_{\lnot p(x)}^\Imc$. Using $\tp(\Jmc, d)=\tp(\interp, \new{d'})$ again, we get $d\in A_{\lnot p(x)}^\Jmc$. Because $\Jmc \not \models p_{C} \cup \{A_{\lnot p(x)}\}$, we conclude that $\pair{\Jmc}{d} \not \models p_{C}(x)$. For Condition~3 the argument is similar\new{, except that we directly have $\tp(\Jmc, d_0) = \tp(\Imc, d_0)$ and do not need the proxy element $d'$}. Hence, we can indeed replace  $\langle\Imc, d_0, t\rangle$ with $\langle\Jmc, d_0, t\rangle$.

Replacing each tile in $\Mos$ like this, we end up with a mosaic for $\Tmc$ and $\tau$ against $Q^0$ and $Q^1$ in which each tile satisfies the desired size bound. 
\end{proof}

Next, we prove Lemma~\ref{lem:bounded-tiles-single}, stating that the mosaics can be assumed to have exponentially many tiles:

\lemboundedtilesingle*

\begin{proof}
    Let us assume that there exists a mosaic $\Mos$ for $\tbox$ and $\tau$ and against $Q^0$ and $Q^1$.
    We define $\Mos'$ as the subset of $\Mos$ consisting of the following tiles:
    \begin{itemize}
        \item $\Mos'$ contains the tiles $(\langle \interp_r, d_r, r\rangle)_{r\in\NC(\tbox)}$ witnessing Condition~\ref{it:initial} of Definition~\ref{def:mosaic}
        \item for each type $\tau'$ and each role name $r\in\NR(\tbox)$ such that there exists a tile $\langle\Jmc, d_0, r\rangle$ with $\tp^{\Jmc}(d_0)=\tau'$, $\Mos'$ contains exactly one of such.
    \end{itemize}

    The elements of tiles satisfy concepts that either occur in $\tbox$, or are of the shape $A_p(x)$, where $p(x)$ is a subPTQ of a PTQ in $Q^0$ or in $Q^1$. Therefore, the obtained set $\Mos'$ has indeed a number of tiles that is bounded by $|\NR(\tbox)| + 2^{|\NC(\tbox)|+m}\times |\NR(\tbox)|$.

    Naturally, as a subset of $\Mos$, $\Mos'$ satisfies Conditions~\ref{it:forbidding_boolean} and~\ref{it:fresh_concepts} of Definition~\ref{def:mosaic}. The completeness of the types and the role names makes sure that Condition~\ref{it:types}. Therefore, $\Mos'$ is indeed a mosaic, and the proof is completed.
\end{proof}

\lemboundedtilesrooted*

\begin{proof}
Let $\Mos$ be a mosaic for $\Tmc$ and $\tau$ against $\emptyset$ and $Q^1$ such that tiles in $\Mos$ have (domain size) bounded by $M$, and let $m$ be the maximal size of a TQ in $Q^1$. We proceed in two stages to find a mosaic for $\Tmc$ and $\tau$ against $\emptyset$ and $Q^1$ satisfying the required bounds. 

In the first stage, we create copies $\langle \Imc_i,d_0,r\rangle$ for $0\leq i\leq m$ of each tile $\langle \Imc,d_0,r\rangle\in \Mos$, in which, with increasing $i$, the interpretation of auxiliary concept names becomes more and more `saturated.' 
To formalize this, we associate \emph{levels} with auxiliary concept names as follows. We assume that the queries in $Q^1$ use distinct variables, except the root variable $x$. We assign levels $\ell$ to variable $z$ if the distance of $x$ to $z$ is $\ell$. Hence, $x$ has level $0$ and no variable has level greater than $m-1$.

Now, the interpretation of the auxiliary concept names in copy $\langle \Imc_i,d_0,r\rangle$, $0\leq i\leq m$ is defined as follows: 
\begin{align*}
    A_{p(z)}^{\Imc_i}=\begin{cases} A_{p(z)}^\Imc & \text{if level of $z$ is at least $i$} \\
    \Delta^\Imc & \text{otherwise}
    \end{cases}
\end{align*}
Intuitively, for $i\geq 1$, the copies $\langle \Imc_i,d_0,r\rangle$ and $\langle \Imc_{i-1},d_0,r\rangle$ of $\langle \Imc,d_0,r\rangle$ is responsible for refuting only the subqueries of queries in $Q^1$ starting in level $i$.
Let $\Mos'$ be the set of all the defined tiles. We convince ourselves that $\Mos'$ is still a mosaic for $\Tmc$ and $\tau$ and against $\emptyset$ and $Q^1$. 

Note first that the copies $\langle \Imc_1,d_0,r\rangle$ are actually equal to the original mosaic $\langle\Imc,d_0,r\rangle$. Hence, Condition~3 is not affected. It is a direct consequence of the definition of the saturation (and the intuition provided above) that Condition~2 is also preserved. Finally, Condition~4 is still satisfied as well. To see it, let $\langle \Imc_i,d_0,r\rangle$ be a copy of $\langle\Imc,d_0,r\rangle$. Let $d\in \Delta^{\Imc_i}=\Delta^{\Imc}$ and $s\neq r$ or both $s$ and $r$ are non-transitive and $d\neq d_0$. Then Condition~4 in $\Mos$ was witnessed by some mosaic $\langle \Jmc,e_0,s\rangle$. It can be seen that its copy $\langle \Jmc_j,e_0,s\rangle\in\Mos'$ with $j=\min(m,i+1)$ witnesses Condition~4 in $\Mos'$. 

\smallskip In the second stage, we pick mosaics from $\Mos'$ that form a mosaic as well, as follows: 
\begin{itemize}

  \item For $i=0$, we pick tiles of shape $\langle\Imc_0,d_0,r\rangle$ from $\Mos'$ witnessing Condition~3.
  
  \item For $1\leq i<m$ we assume that tiles from level $i-1$ have already been picked and we assume that all picked tiles have shape $\langle\Imc_{i-1},d_0,r\rangle$. For each picked tile, we pick tiles of shape $\langle \Imc_i,e_0,r\rangle$ witnessing Condition~4 from as described above. 

  \item For $i=m$, we pick for each type $\tau'\subseteq \NC(\Tmc)$ and each $r\in\NR(\Tmc)$ one tile $\langle \Imc_m,d,r\rangle\in\Mos'$ (if it exists) with $\tp(\Imc_m,d)=\tau'$.
  
\end{itemize}
Note that, since we picked from a mosaic a subset of tiles satisfying Conditions~3 and~4 and Condition~2 is preserved under taking subsets, the resulting set of tiles is a mosaic as well. 

Overall, we pick $|\NR(\Tmc)|$ tiles in level $0$, and for each tile picked in level $i-1$, we pick at most $M\times |\NR(\Tmc)|$ tiles in level $i$. Finally, we pick $2^{|\NC(\Tmc)|}\times |\NR(\Tmc)|$ tiles in level $m$. Overall, this results in 
\[|\NR(\Tmc)|\cdot (M\cdot |\NR(\Tmc)|)^m+ 2^{|\NC(\Tmc)|}\cdot |\NR(\Tmc)|\]
many tiles. This is bounded by 
\[\|\Tmc\|\cdot ((M\cdot \|\Tmc\|)^{m+1}+2^{\|\Tmc\|})\]
as required.
\end{proof}

\end{document}